\let\epsilon\varepsilon
\let\phi\varphi
\theoremstyle{plain}
\newtheorem{theorem}{Theorem}[section]
\newtheorem{lemma}[theorem]{Lemma}
\newtheorem{corollary}[theorem]{Corollary}
\newtheorem{proposition}[theorem]{Proposition}
\newtheorem{conjecture}[theorem]{Conjecture}
\newtheorem{definition}[theorem]{Definition}
\newtheorem{remark}[theorem]{Remark}
\newtheorem{maintheorem}{Theorem}
\newcommand*{\Tr}[1]{\mathop{}\!\mathrm{Tr}{\left(#1\right)}}
\def\id{\mathds{1}}
\newcommand{\poly}{\mathrm{poly}}
\newcommand{\ket}[1]{\ensuremath{\left|#1\right\rangle}}
\newcommand{\sandwich}[3]{\langle #1|#2 |#3\rangle  }
\newcommand{\norm}[1]{\left\lVert#1\right\rVert}
\newcommand{\abs}[1]{\left\lvert#1\right\rvert}
\newcommand{\op}{\mathrm{op}}
\newcommand{\cH}{\mathcal{H}}
\newcommand{\cA}{\mathcal{A}}
\newcommand{\cB}{\mathcal{B}}
\newcommand{\negl}{\mathrm{negl}}
\newcommand{\comp}{\mathrm{comp}}
\newcommand{\seq}{\mathrm{seq}}
\let\B\relax 
\newcommand{\A}{A_{a|x}}
\newcommand{\B}{B_{b|y}}
\newcommand{\hA}{\hat{A}_{a|x}}
\newcommand{\proofstep}[2]{%
  \par
  \addvspace{\medskipamount}%
  \noindent\emph{Step #1: #2}\par\nobreak
  \addvspace{\smallskipamount}%
}
\newcommand{\cG}{\mathcal{G}} 
\newcommand{\gamevalueqcopt}[1]{\omega_{\mathrm{qc}}(#1)} 
\newcommand{\gamevalueNPA}[2]{\omega_{\mathrm{NPA}}^{#2}(#1)} 
\newcommand{\gamevalueSeqNPA}[2]{\omega_{\mathrm{seqNPA}}^{#2}(#1)} 
\newcommand{\gamevalueCompile}[2]{\omega_{#1}({#2})} 
\newcommand{\gamevalueModNPA}[2]{\omega_{\mathrm{modNPA}}^{#2}(#1)} 
\newcommand{\gamevalueSparse}[2]{\omega_{\mathrm{sparse}}^{#2}(#1)} 
\newcommand{\etaLem}{\eta^{L}}
\newcommand{\etaThm}{\eta} 
\title{\textbf{Quantitative Quantum Soundness for Bipartite Compiled Bell Games via the Sequential NPA Hierarchy}}
\author[1,2,3]{Igor Klep}
\author[4]{Connor Paddock}
\author[5,6,7]{Marc-Olivier Renou}
\author[8]{Simon Schmidt}
\author[5,6,7]{Lucas Tendick}
\author[5,6,7,*]{Xiangling Xu}
\author[9]{Yuming Zhao}
\affil[1]{Faculty of Mathematics and Physics, University of Ljubljana}
\affil[2]{Faculty of Mathematics, Natural Sciences
and Information Technologies, 
University of Primorska}
\affil[3]{Institute of Mathematics, Physics and Mechanics, Ljubljana, Slovenia}
\affil[4]{Department of Mathematics and Statistics, University of Ottawa, Canada}
\affil[5]{Inria Paris-Saclay, B\^atiment Alan Turing, 1 rue Honor\'e d'Estienne d'Orves, 91120 Palaiseau, France}
\affil[6]{CPHT, Ecole polytechnique, Institut Polytechnique de Paris, Route de Saclay, 91128 Palaiseau, France}
\affil[7]{LIX, Ecole polytechnique, Institut Polytechnique de Paris, Route de Saclay, 91128 Palaiseau, France}
\affil[8]{Faculty of Computer Science, Ruhr University Bochum, Germany}
\affil[9]{QMATH, Department of Mathematical Sciences, University of Copenhagen, Denmark}
\affil[*]{\scriptsize\texttt{xu.xiangling@inria.fr}}
\date{\vspace{-2.75em}}
\begin{document}

\maketitle

\begin{abstract}
Compiling Bell games under cryptographic assumptions replaces the need for physical separation, allowing nonlocality to be probed with a single untrusted device.
While Kalai et al. (STOC'23) showed that this compilation preserves quantum advantages, its quantitative quantum soundness has remained an open problem.
We address this gap with two primary contributions.
First, we establish the \emph{first} quantitative quantum soundness bounds for bipartite compiled Bell games via a newly formalized convergent \emph{sequential Navascu\'es-Pironio-Ac\'in (NPA) hierarchy}: any polynomial-time prover's score is controlled by a finite-level hierarchy value, and finite-level convergence gives a negligible gap to the commuting quantum value, or to the tensor-product quantum value under flat optimality.
Second, we provide a full characterization of this sequential NPA hierarchy, establishing it as a robust numerical tool that is of independent interest.
Finally, for games without such finite-level certificates, we explore the necessity of NPA approximation error for quantitatively bounding their compiled scores, linking these considerations to the complexity conjecture $\mathrm{MIP}^{\mathrm{co}}=\mathrm{coRE}$ and open challenges such as quantum homomorphic encryption correctness for ``weakly commuting'' quantum registers.
\end{abstract}

\section{Introduction}

Since Bell's groundbreaking work~\cite{bell1964einstein}, understanding and utilizing quantum nonlocality has been pivotal for both the conceptual foundations and practical applications of quantum theory.
A central tool for probing nonlocality is the study of correlations arising from (nonlocal) Bell games~\cite{brunner2014bell}, wherein multiple provers (also called players) coordinate their responses to questions chosen by a verifier (also called the referee). 
Quantum theory famously allows for correlations outside of classical theories, enabling quantum provers to sometimes achieve higher winning probabilities or ``higher scores'' than their classical counterparts.

The standard Bell game setup involves multiple, spatially separated provers who cannot communicate during the game, see \cref{fig:NonlocalCompiledBellGame}.(a).
This spatial separation is the typical way to enforce no-signaling constraints on the players or devices.
However, verifying spatial separations between multiple untrusted quantum devices can be practically challenging.
Moreover, from a theoretical standpoint, it is compelling to explore whether the power of quantum nonlocality can be verified and utilized using a single (untrusted) quantum device.
A naive attempt to adapt a bipartite (Alice and Bob) Bell game to a single prover is as follows: the single prover receives Alice's question $x$, computes her answer $a$, they subsequently receive Bob's question $y$ and compute his answer $b$. However, here the prover has full information about Alice's question (and answer) when deciding how to answer Bob's question.
This allows for coordination not permitted in the nonlocal case, and completely undermines the game's no-communication assumption.
To simulate the intended separation within a single device, the verifier must restrict information flow between the ``Alice'' and ``Bob'' rounds.

Homomorphic encryption (HE) offers a natural solution: the verifier can first encrypt Alice's question into $\mathrm{Enc}_{sk}(x)$ using a secret key $sk$, and ask the prover to provide an encrypted answer $\mathrm{Enc}_{sk}(a)$. In HE the prover does not know the secret key, and therefore never has a decryption of $\mathrm{Enc}_{sk}(x)$ in their possession. Nonetheless, the HE satisfies a \emph{correctness} functionality that enables the prover to compute an outcome $\alpha=\mathrm{Enc}_{sk}(a)$ as if they knew $x$, despite never being given $x$ in the plain (i.e., never given a decrypted $x$). The result is that, when Bob goes to make his computation based on $y$, it can no longer depend on $(x, a)$ in any meaningful way, as he only has access to their encryptions (\cref{fig:NonlocalCompiledBellGame}.(c)). 
However, to allow for quantum strategies, conventional HE will not suffice, because the player strategies involve quantum computations and entanglement: the HE of Alice's part of the strategy should not destroy her pre-shared entangled state with Bob. Therefore, we require a flavour of ``quantum'' HE which allows for the homomorphic evaluation of quantum circuits and satisfies a \emph{correctness with respect to auxiliary entangled systems} functionality. 
Fortunately, constructions of quantum homomorphic encryption (QHE) schemes for polynomial size quantum circuits, with these additional properties, were established in~\cite{brakerski2018quantum,mahadev2020classical}, based on the (post-quantum) security of the learning with errors (LWE) problem.
This approach was used by Kalai et al.~\cite{kalai2023quantum}, establishing the first \emph{compiled Bell games}, where a multipartite Bell game can be transformed into an interactive protocol with a single quantum prover using a QHE scheme, at the cost of involving a number of rounds proportional to the number of parties.
They demonstrated the \emph{classical soundness} of such compilation, meaning that a cheating classical prover cannot exceed the classical score at the standard Bell game.
Yet, an important issue was left open by their work: the \emph{quantum soundness}, that is whether the compilation preserves the maximal quantum score.

More explicitly, the possibility of a dishonest quantum prover achieving scores for the compiled Bell game that significantly exceeded what was possible in the spatially separated Bell game was not ruled out. To date, this issue has been resolved in the negative for a number of cases like XOR and other simple Bell inequalities~\cite{natarajan2023bounding,cui2024computational, baroni2024quantum, mehta2024self}, such as the CHSH game~\cite{clauser1969proposed}. For these games, it was shown that efficient quantum prover could only attain a winning probability negligibly (with respect to the encryption scheme's security parameter $\lambda$) greater than the quantum value of the original game.
Recently, some of us proved quantum soundness of all Bell games \emph{in the asymptotic limit of the security parameter going to infinity}~\cite{kulpe2024bound}. More precisely, we showed that for asymptotically large enough security parameter $\lambda$, the maximal quantum score at the compiled and standard Bell games is the same. 
Yet, this result is not quantitative, as it does not involve an explicit upper bound on the compiled score for security parameters $\lambda$.
In particular, it does not inform a verifier of the security level needed to ensure the quantum provers' behavior is suitably nonlocal, making this work unsuitable in practice.

In this work, we obtain quantitative quantum soundness bounds for bipartite compiled Bell games via finite-level certificates, generalizing the results from~\cite{natarajan2023bounding,cui2024computational, baroni2024quantum, mehta2024self, kulpe2024bound}.
More precisely, we show that the score a dishonest quantum prover can achieve at the compiled Bell game can explicitly be upper-bounded by a \emph{sequential variant of the Navascu\'es-Pironio-Ac\'in (NPA) hierarchy}~\cite{navascues2008convergent, pironio2010convergent}, which we also fully characterize in this work.
When this hierarchy converges at a finite level, the compiled score is negligibly close to the game's commuting quantum value; under the stronger flat-optimality certificate, the same conclusion holds with respect to the tensor-product quantum value.
(Note that the behavior of the sequential NPA hierarchy is a game-dependent property and independent of the compilation.)
With our result, the verifier can in practice bound the score of the dishonest prover by first computing a bound provided by this hierarchy, and then fixing the security parameter accordingly.

\begin{figure}
    \centering
    \includegraphics[width=0.80\linewidth]{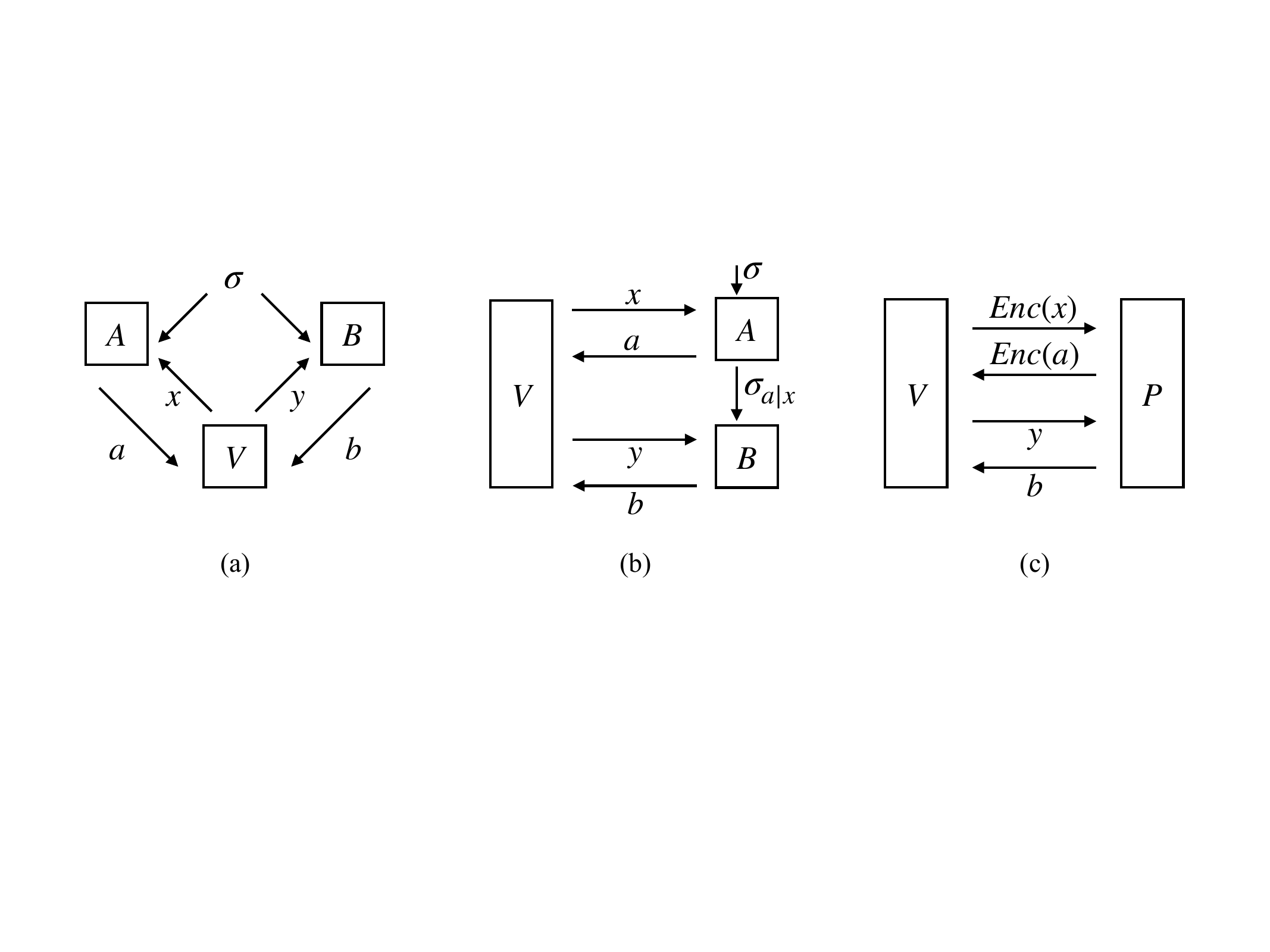}
    \caption{(a) \emph{Standard nonlocal Bell game}: A verifier $V$ sends questions $x$ and $y$ to two spatially separated provers, Alice $A$ and Bob $B$, who reply with answers $a$ and $b$. For example, using quantum theory, the provers may pre-share a quantum state $\sigma$ to generate better answers.
    (b) \emph{Sequential Bell game}: The verifier $V$ first send question $x$ to Alice and receive her answer $a$; subsequently, $V$ sends $y$ to Bob who replies with $b$. This protocol is said to satisfy the strongly no-signaling condition if $B$'s response is independent of $A$'s question $x$. In the quantum realization, $A$ first receives a state $\sigma$, measures and produce a post-measured state $\sigma_{a|x}$, and then forwards it to $B$, with condition $\sum_a \sigma_{a|x} = \sum_a \sigma_{a|x'}$ for all $x, x'$.
    (c) \emph{Compiled Bell game}: The verifier $V$ interacts with a single prover $P$. The verifier first sends an encrypted question Enc$(x)$ and receives an encrypted answer Enc$(a)$, while the second message pair $y$ and $b$ is transmitted unencrypted. The QHE scheme chosen by the verifier enforces a form of computational no-signaling.}
    \label{fig:NonlocalCompiledBellGame}
\end{figure}

\subsection{Nonlocal and compiled Bell games}
\noindent\emph{Nonlocal Bell games.}
In nonlocal Bell games, a verifier interacts with multiple spatially separated provers, who are unable to communicate during the game. The provers receive questions from and provide answers to the verifier according to a pre-agreed protocol.
The players win or lose based on a preset rule (see \cref{fig:NonlocalCompiledBellGame}.(a)) determined by a winning function or predicate.
The strategies that the provers adopt can be based on different resources available (e.g., classical or quantum), and the distinction between these theories is reflected in the corresponding \emph{Bell scores}.
The Bell score is the maximum winning probability using strategies permitted in a given resource or paradigm.
For a given Bell game $\cG$, we write $\gamevalueqcopt{\cG}$ its optimal commuting quantum score and $\omega_{\mathrm{q}}(\cG)$ its optimal tensor product quantum score\footnote{While the two are equivalent in finite dimensions, they are not the same in infinite-dimensions, and in fact there is a Bell game for which the scores are distinct~\cite{ji2021mip}.}.
More typically, the scores are compared in the classical and quantum cases. For example, in the CHSH game, the best classical Bell score is $0.75$, while the optimal quantum score is $\gamevalueqcopt{\cG_{\text{CHSH}}} = \omega_{\mathrm{q}}(\cG_{\text{CHSH}})=\cos^2(\pi/8)\approx 0.85$. This is often known as a game exhibiting quantum advantage.

\vspace{0.5em}
\noindent\emph{Compiled Bell games.}
To transform from multi-prover to a setup with a single-prover, the authors of~\cite{kalai2023quantum} introduce compiled Bell games $\cG_\comp$, in which the no-communication constraint between the provers is replaced by a cryptographic one, using a QHE scheme.
The QHE scheme used by the verifier is parameterized by a security parameter $\lambda$.
For a chosen $\lambda$, the scheme is secure against $\poly(\lambda)$-runtime attacks from the prover.
The verifier in the compiled game $\cG_\comp$ sends an encrypted question $x$ and receives the encrypted answer $a$ back from the prover. 
The verifier then sends $y$ and receives $b$ (see \cref{fig:NonlocalCompiledBellGame}.(c)). 
Encryption is not required in the second round because the information is of no use later in the game.
In this setting, the prover's strategies for the compiled game are characterized by quantum polynomial time (QPT) circuits, denoted $S$, which upon obtaining $x$ produce the outcome $a$.
The winning probability of employing strategy $S$ in the game $\cG_{\comp}$ (with security parameter $\lambda$) is the compiled Bell score $\gamevalueCompile{\lambda}{\cG_{\comp}, S}$.
This compilation procedure guarantees classical \emph{soundness}. That is, no dishonest classical prover can exceed the maximal classical winning probability in the no communication setting. Furthermore, by the features of the QHE scheme, quantum \emph{completeness} is also guaranteed. That is, an honest quantum prover can achieve the optimal quantum score in the nonlocal case~\cite{kalai2023quantum}.

Establishing quantum soundness of $\gamevalueCompile{\lambda}{\cG_{\comp}, S}$ (i.e., that no dishonest quantum prover can exceed the maximal quantum score more than some quantitatively negligible function in $\lambda$) remains open. 
Recently, operator-algebraic techniques~\cite{kulpe2024bound} provided qualitative insights into this quantum compiled value in the asymptotic limit of the security parameter ($\lambda \to \infty$).
Their approach uses the fact that in the limit, compiled strategies correspond to strategies for \emph{sequential Bell games} satisfying the \emph{strongly non-signaling property} (see \cref{fig:NonlocalCompiledBellGame}.(b)).
These quantum strategies for sequential Bell games turn out to be equivalent to the commuting quantum strategies~\cite{hughston1993complete,kulpe2024bound}, and so it follows that as $\lambda \to \infty$, the scores $\gamevalueCompile{\lambda}{\cG_{\comp}, S}$ achievable by any QPT strategy $S$ converge to the quantum commuting score $\gamevalueqcopt{\cG}$.

Yet, this last result is only qualitative: in practice, the verifier can only take finite $\lambda$, in which case~\cite{kulpe2024bound} provides no concrete bound on the score $\gamevalueCompile{\lambda}{\cG_{\comp}, S}$ the cheating prover can obtain, as it does not provide a quantitative bound on how quickly these compiled scores converge for finite $\lambda$.
Therefore, the quantitative quantum soundness of all compiled Bell games as proposed in~\cite{kulpe2024bound} remains an open problem. This is the main challenge that our work addresses.

\subsection{Main results}
Our work has two primary contributions.
(i) We give the first \emph{quantitative quantum soundness bounds} for bipartite compiled Bell games admitting finite-level sequential NPA convergence, showing that the compiled score is provably close to the game's ideal quantum score.
More generally, for all bipartite compiled Bell games, we obtain upper bounds for the compiled scores in terms of the sequential NPA hierarchy.
(ii) We formalize and fully characterize a sequential variant of the NPA hierarchy, a tool that underpins our analysis and is of independent interest.
In the following, we give more details.

\vspace{0.6em}
\noindent
\textit{Quantitative bound for bipartite compiled Bell scores.}
Let $\cG$ be any bipartite Bell game and $\cG_{\comp}$ its compiled version.
Our first main result upper-bounds the score $\omega_{\comp}^{\lambda}(\cG_{\comp},S)$ achievable by any QPT strategy~$S$ as the ideal commuting-operator score $\omega_{\mathrm{qc}}(\cG)$ plus two error terms: an approximation term $\epsilon(n)$ arising from level~$n$ of the sequential NPA hierarchy and a negligible cryptographic term (from the QHE scheme and the implementation of~$S$).
When the sequential NPA hierarchy attains the commuting score at some finite level $n_0$, or even admits a flat optimal solution at level $n_0$, then $\epsilon(n_0)=0$ and we obtain a negligible gap to the quantum value.
The precise statement is as follows.
\begin{maintheorem}[\cref{thm:GeneralQuantBoundFiniteSecurityQCValue,cor:MainBoundFiniteSecurityQCValue}]\label{thm:BoundScoreCompiledInformal}
    Consider any bipartite Bell game $\cG$ with commuting quantum score $\gamevalueqcopt{\cG}$.
    Then, for any QPT strategy $S$ and for every $n > 0$, its achievable score $\gamevalueCompile{\lambda}{\cG_{\comp}, S}$ is bounded as
    \begin{align*}
        \gamevalueCompile{\lambda}{\cG_{\comp}, S} \leq \gamevalueqcopt{\cG} + \epsilon(n) + \negl_{S,n}(\lambda),
    \end{align*}
    where $\epsilon(n):=\gamevalueSeqNPA{\cG}{n}-\gamevalueqcopt{\cG}$ is the approximation error from the $n$-th level of the sequential NPA hierarchy, which monotonically vanishes as $n \to \infty$.
    The term $\negl_{S,n}(\lambda)$ is a negligible function (dependent on the QHE scheme, strategy $S$, and level $n$) that goes to zero faster than the reciprocal of any polynomial in $\lambda$.

    Furthermore, if the sequential hierarchy for $\cG$ converges at some finite level $n_0$, i.e., $\epsilon(n_0) = 0$, then
    \begin{align*}
        \gamevalueCompile{\lambda}{\cG_{\comp}, S} \leq \omega_{\mathrm{qc}}(\cG) + \negl_S(\lambda),
    \end{align*}
    where $\negl_S(\lambda)$ is some negligible function depending only on the QHE scheme and the strategy $S$.
    In particular, if the hierarchy admits a flat optimal solution at some level $n_0$, then
    \begin{align*}
        \gamevalueCompile{\lambda}{\cG_{\comp}, S} \leq \omega_{\mathrm{q}}(\cG) + \negl_S(\lambda),
    \end{align*}
    where $\omega_{\mathrm{q}}(\cG)$ is the optimal tensor product quantum score and $\negl_S(\lambda)$.
\end{maintheorem}

Hence, knowing the approximation error of the sequential NPA hierarchy $\epsilon(n)$ for a game $\cG$ provides a quantitative upper bound on the maximal score that a dishonest prover can obtain at the compiled game $\cG_{\comp}$ with some QPT strategy $S$.
By letting $n, \lambda \to \infty$, we recover the asymptotic quantum soundness result of~\cite{kulpe2024bound}.
In addition, whenever the hierarchy is known to converge at a finite level, the second inequality establishes quantitative soundness with respect to the commuting quantum score.
Under the stronger flat-optimality assumption, this becomes quantitative quantum soundness with respect to the tensor product quantum score, generalizing~\cite{natarajan2023bounding,cui2024computational, baroni2024quantum, mehta2024self}.

The premises of finite-level convergence and flat optimality are intrinsic properties of the Bell game $\cG$ and its sequential NPA hierarchy, rather than assumptions on the compilation procedure.
In many standard examples, the relevant optimal values are certified by finite-level NPA/SOS or flat certificates.
Moreover, in the flat-optimal case, \cref{thm:StoppingCrteriaInformal} extracts a finite-dimensional optimal quantum strategy.
When this strategy satisfies the usual implementation requirement, combining it with the quantum completeness of the KLVY compiler~\cite{kalai2023quantum} gives a matching lower bound up to negligible error.
In other words, flat optimality gives quantitative control of the compiled score and not just a soundness upper bound.

Note also that an infinite-dimensional quantum strategy poses several issues. First, it is unclear how to implement such a strategy efficiently with polynomial-size circuits. Second, even if one could engineer such an implementation, compiling it while preserving its score would require a justification of the correctness of the QHE scheme in the infinite-dimensional setting.

\vspace{0.6em}
\noindent
\textit{The sequential Navascu\'es-Pironio-Ac\'in hierarchy.}
As a second main result, we formally introduce and characterize the sequential NPA hierarchy (\cref{sec:SequentialNPA}), which underpins our quantitative soundness proof.
Our analysis establishes its two most important theoretical properties: a stopping criterion based on the flatness condition (aka. rank-loop, a condition on the solution's matrix rank indicating a finite-dimensional solution) and its asymptotic convergence to the commuting score.
We summarize these findings as follows:
\begin{maintheorem}[\cref{thm:StoppingCriteriaSeqNPA,thm:SequentialNPAConvergence}]\label{thm:StoppingCrteriaInformal}
    The sequential NPA hierarchy is guaranteed to converge to the commuting score as $n \to \infty$.
    Moreover, if the hierarchy for a bipartite Bell game $\cG$ admits a flat optimal solution at some finite level $n$, then the flat solution gives rise to a finite-dimensional optimal tensor-product quantum strategy and $\gamevalueSeqNPA{\cG}{n}=\gamevalueqcopt{\cG}=\omega_{\mathrm q}(\cG)$.
\end{maintheorem}
\noindent
In addition, we:
\begin{enumerate}
    \item Establish the precise relationship between the sequential and standard NPA hierarchies, which forms the basis of our convergence proof. In \cref{prop:CompareStandardSequentialNPA}, we prove that the sequential NPA hierarchy is equivalent to a relaxed version of the standard one (\cref{eq:ModifiedNPASDP}), where Alice's operators only appear to satisfy POVM completeness from Bob's perspective.
    As we formalize in \cref{thm:SequentialNPAConvergence}, this equivalence directly implies that both our sequential hierarchy and this modified hierarchy converge to the quantum commuting score.
    \item Identify (via \cref{prop:SequentialDualSparse}) its conic dual with the sparse sum of squares (SOS) hierarchy (\cref{eq:sparseSOSSDP})~\cite{klep2022sparse}.
    This duality not only provides a complete theoretical picture but also connects our hierarchy to existing numerical examples~\cite[Chapter~6.7]{magron2023sparse}.
\end{enumerate}

\subsection{Methods, techniques and further results}
Our results rely on a combination of existing tools adapted to the compiled game setting and novel techniques developed in this work, which may be of independent interest.
Key elements include:
\begin{enumerate}
    \item \emph{Navascu\'es-Pironio-Ac\'in hierarchy and its generalizations.} 
    The standard NPA hierarchy~\cite{navascues2008convergent, pironio2010convergent} provides a systematic method, based on semidefinite programming (SDP), to compute upper bounds on the  commuting quantum score $\gamevalueqcopt{\cG}$.
    It involves a sequence of SDP relaxations indexed by an integer level $n$, yielding monotonically decreasing upper bounds $\gamevalueNPA{\cG}{n}$ that converge to $\gamevalueqcopt{\cG}$.
    It generalizes the Lasserre-Parrilo hierarchy~\cite{lasserre2001global, parrilo2003semidefinite} to non-commutative settings.

    As discussed in the previous subsection, to establish our quantitative bound on $\gamevalueCompile{\lambda}{\cG_{\comp}, S}$ (\cref{thm:BoundScoreCompiledInformal}), we consider a sequential generalization of the standard NPA hierarchy, which we introduce and fully characterize (\cref{sec:SequentialNPA}).

    \item \emph{Imperfect finite-dimensional quantum representations via flat extension.}
    To connect finite levels of the (sequential) NPA hierarchy to concrete quantum representations, we consider the \emph{flat extension} method~\cite{helton2012convex}, central to the discussion in ~\cref{sec:FlatExtension,sec:FiniteSecurityGNSStrategy}, and pivotal in the proof of \cref{thm:StoppingCrteriaInformal}, \cref{prop:CompareStandardSequentialNPA,prop:ExistenceAlmostCommuteAndSequentialStrat}.
    Given the moment matrix from a finite level $n$ solution of the NPA hierarchy, the flat extension technique gives positive linear functionals and, via the GNS construction, a representation of the associated finite-dimensional quantum strategy that exactly satisfies all algebraic constraints imposed by that $n$-th NPA level.

    Notably, while these extracted strategies faithfully realize the $n$-th level NPA model, the constraints of this finite level are generally weaker than those of an ideal commuting quantum strategy.
    For instance, the $n$-th level NPA hierarchy enforces that certain polynomial expressions involving commutators evaluate to zero, as they would for truly commuting operators.
    I.e., $\Tr{\rho [\A, \B] P} = 0$ for all polynomials $P$ of degrees $\leq 2n-2$.
    However, it does not, in general, enforce the operator identity $[\A, \B]=0$.

    Consequently, the strategies obtained via flat extension from a finite NPA level are ``imperfect'' in the sense that Alice's and Bob's operators might not strictly commute with each other, even though all $n$-th level NPA conditions (including those partial commutativity constraints and linear constraints like POVMs summing to identity) are met.
    This technique thus provides a concrete way to construct operational (albeit imperfect) quantum representations from a finite level of the NPA hierarchy.

    It is worth noting that the authors of~\cite{coudron2015interactive} presented an alternative construction of almost commuting strategies from the NPA hierarchy.
    While our flat extension-based method produces strategies satisfying exact commutation when tested against low-degree polynomials their approach yields strategies whose commutators are controlled in operator norm, with a bound scaling as $O(1/\sqrt{n})$ for the $n$-th NPA level.
    This is achieved by analyzing the projections onto low-degree subspaces of the original NPA solution, rather than by constructing a new representation from a modified moment matrix.
    
    \item \emph{Isolating signaling effect using symmetric group representation theory.}
    A key observation from~\cite{kulpe2024bound} is that every QPT strategy of compiled Bell games at security parameter $\lambda$ implicitly contains a negligible amount of signaling (permitted by the QHE scheme) from the protocol's encrypted part to the unencrypted part with $\poly(\lambda)$-size circuits.

    Therefore, analyzing this weak signaling effect and its impact on the compiled Bell score is interesting.
    To this end, inspired by~\cite{renou2017inequivalence}, we utilize representation theory of the symmetric group to develop a technique for decomposing the operators that do not satisfy the ideal no-signaling conditions (\cref{prop:DecompositionNSSI}).
    This method allows us to systematically decompose these operators into components corresponding to a no-signaling part, a signaling part, and a residual (positive) term.
    This decomposition is central to establish our main theorem (\cref{thm:BoundScoreCompiledInformal}), since it allows us to identify the no-signaling part to the sequential NPA hierarchy at a fixed level, while the signaling part and the residual term can both be bounded by the negligible functions from the cryptographic assumption.
    Observe that, since this decomposition technique is formulated rather generally, it may also be useful for isolating and analyzing signaling effects in other quantum protocols.
    
    \item \emph{Almost-commuting strategies from computationally hard Bell games.}
    Tsirelson's theorem~\cite{scholz2008tsirelson} shows that the correlations attainable from any finite-dimensional \emph{genuinely commuting} quantum strategies can be also obtained from tensor product quantum strategies (i.e., those in $C_{qa}$).
    More recently, the approximate Tsirelson's theorems~\cite{xu2025quantitative} investigate the situation when the finite-dimensional quantum strategy is only \emph{approximately commuting} and provide operator norm bounds for quantifying its ``distance'' to tensor product quantum strategies.
    We argue that computational complexity arguments reveal this distance must be non-negligible for certain hard Bell games.

    Specifically, the $\mathrm{MIP}^{\mathrm{co}}=\mathrm{coRE}$ conjecture (see e.g.,~\cite{ji2021mip}), via \cref{prop:ExistenceAlmostCommuteAndSequentialStrat,prop:ExistenceOfHardGamesForNPA}, implies the existence of $\mathrm{coRE}$-hard games where almost-commuting strategies achieve scores significantly exceeding $\gamevalueqcopt{\cG}$.
    For these almost-commuting strategies, the ``distance'' to any $C_{qa}$ strategy, as per~\cite{xu2025quantitative}, must be non-negligible to avoid contradicting this score advantage.
    This implies these strategies generate correlations fundamentally distinct from $C_{qa}$.
    
    This insight is complemented by the established $\mathrm{MIP}^* = \mathrm{RE}$ result~\cite{ji2021mip}.
    For $\mathrm{RE}$-hard games, if near-optimal almost-commuting strategies (e.g., from NPA truncation) could be approximated by $C_{qa}$ strategies with arbitrarily small error (i.e., negligible ``distance''), it would contradict the known separation between sets $C_{qa}$ and quantum commuting observable set $C_{qc}$.
    Thus, for these games too, such almost-commuting strategies must be non-negligibly distant from any in $C_{qa}$.
    
    In both cases, these non-negligible distances highlight that the high-scoring almost-commuting strategies are fundamentally distinct from any commuting tensor-product strategy.
\end{enumerate}

\subsection{Open problems and outlook}\label{sec:OpenProblemOutlook}
Building on our results, several important open questions for future research emerge:
\begin{enumerate}
    \item \textit{Necessity of NPA approximation errors and QHE correctness for almost-commuting strategies:}
    A key question arising from our work is whether the game-specific NPA approximation error $\epsilon(n)$ is fundamentally necessary for quantitative quantum soundness to games $\cG$ without a finite convergence of the sequential NPA hierarchy.
    In \cref{sec:NecessityOfNPAConvRate}, we explore a potential argument supporting this necessity.

    Our investigation, based on the standard complexity conjecture $\mathrm{MIP}^{\mathrm{co}} = \mathrm{coRE}$ (\cref{conj:MIPco=coRE}), suggests the existence of Bell games $\cG^{(n)}$ for which the $n$-th level NPA score (and hence also the sequential NPA score) significantly exceeds the true commuting quantum value (\cref{prop:ExistenceOfHardGamesForNPA}), implying that no universal NPA approximation error can exist for the NPA hierarchy.
    Notably, if the conjecture $\mathrm{MIP}^{\mathrm{co}} = \mathrm{coRE}$ is false, then there could be a universal NPA approximation error and our quantitative quantum soundness results apply to all bipartite Bell games.
    On the other hand, if the conjecture does hold, we provide constructions for almost-commuting quantum strategies and weakly-signaling sequential quantum strategies that achieve these high NPA scores (\cref{prop:ExistenceAlmostCommuteAndSequentialStrat}).

    Consequently, it is likely that one can construct a compiled Bell game out of the family $(\cG^{(n)})$ and compile the associated high-scoring strategies into cheating QPT strategies. 
    This would imply the necessity of the game specific NPA approximation error for quantitative quantum soundness.
    However, as we discuss in \cref{sec:ChallengeCompilingHighScoringStrat}, several significant obstacles prevent the straightforward compilation of these high-scoring strategies.
    These challenges include: (1) finding potentially more efficient constructions of the high-scoring strategies; (2) determining the scaling of the game size for the family $(\cG^{(n)})$, which depends on the potential proof of $\mathrm{MIP}^{\mathrm{co}} = \mathrm{coRE}$; and critically, (3) formulating and justifying a more general QHE assumption suitable for almost-commuting scenarios, i.e., ``correctness with auxiliary input for weakly commuting registers.''
    Resolving these challenges is crucial to definitively establish the role of game-specific NPA approximation errors in quantitative quantum soundness.

    We remark that the first draft of this manuscript, the conjecture $\mathrm{MIP}^{\mathrm{co}} = \mathrm{coRE}$ is proven by~\cite{lin2025mipcocore}.
    Moreover,~\cite{fanizza2025npahierarchydoesattain} shows that there exists games where the standard NPA hierarchy does not converge at any finite step.
    Since the sequential NPA hierarchy is more relaxed than the standard NPA hierarchy at each level, this implies that it does not converge at any finite level as well.
    Thus the approximation error $\epsilon(n)$ does not vanish.

    \item \textit{Separation between sequential and standard NPA hierarchies:}
    We introduced the sequential NPA hierarchy and showed it is equivalent to the standard NPA hierarchy at finite levels with relaxed POVM completeness constraints.
    We also characterized its stopping criteria and identified conic dual with the sparse SOS hierarchy~\cite{klep2022sparse, magron2023sparse}.
    An interesting question is whether there exist Bell games $\cG$ for which the sequential NPA hierarchy $\gamevalueSeqNPA{\cG}{n}$ converges much slower to $\gamevalueqcopt{\cG}$ than the standard NPA hierarchy $\gamevalueNPA{\cG}{n}$.
    Finding such explicit separations (which we conjecture exist considering the numerical analysis on $I_{3322}$ of the sparse SOS hierarchy~\cite[Chapter~6.7]{magron2023sparse}) would provide deeper insights into the convergence properties of these hierarchies and the precise implications of using Arveson's Radon-Nikodym derivatives~\cite[Lemma~1.4.1]{arveson1969subalgebras}, see \cref{prop:ArvesonRadonNikodym}, in the sequential formulation.

    \item \textit{Generalization to robust self-testing for compiled games:}
    Robust self-testing allows characterizing a quantum device based solely on observed correlations, even with experimental imperfections.
    While the exact self-testing result of compiled Bell games in the asymptotic limit of the security parameter is established~\cite[Theorem~6.5]{kulpe2024bound}, the question of whether one can generalize this to the robust case in the non-asymptotic setup remains open.
    We explore this direction in \cref{sec:RobustSelfTesting}, and note on the need to extend the notion of robust self-testing beyond quantum strategies to cover ``quasi-quantum'' or imperfectly realized strategies, possibly using results similar to~\cite{xu2025quantitative}.

    \item \textit{Quantum soundness of multipartite compiled Bell games beyond two parties:}
    Current investigations into quantum soundness, including our own, have primarily focused on the compilation of bipartite Bell games.
    Extending quantitative quantum soundness results to games with three or more provers is the natural next step, but it presents a significant challenge: it requires a sophisticated generalization of operator-algebraic tools, namely for Arveson's Radon-Nikodym derivatives (\cref{prop:ArvesonRadonNikodym})~\cite[Lemma~1.4.1]{arveson1969subalgebras}.
    
    In concurrent work, the authors of~\cite{baroni2025boundingasymptoticquantumvalue} address this very issue, establishing asymptotic quantum soundness for all multipartite games by proving a new chain rule for these derivatives.
    Their multipartite framework is complementary to our methods, and we believe merging their techniques with ours provides a clear path toward a quantitative quantum soundness analysis for multipartite compiled games.
    
    \item \textit{Exploring almost commuting correlations:}
    The almost commuting strategies arising from $\mathrm{coRE}$-hard games (\cref{prop:ExistenceAlmostCommuteAndSequentialStrat,prop:ExistenceOfHardGamesForNPA}) are necessarily ``far'' from any finite-dimensional tensor-product strategies.
    The behavior of such strategies was characterized from an asymptotic perspective by Ozawa~\cite{ozawa2013tsirelson}, who showed that as commutators vanish, the resulting correlations converge to the commuting set $C_{qc}$.
    More recently, quantitative bounds have been developed to measure the distance from an almost-commuting correlation to the sets $C_{qa}$ and $C_{qc}$~\cite{xu2025quantitative}.
    These works provide tools for exploring the structure of the set of almost commuting correlations.
    This investigation, in addition to foundational interests, is also practically motivated since enforcing strict commutation can be challenging due to experimental limitations.

    \item \textit{Bigger picture---from space-like separated provers to single compiled provers:}
    A compelling direction in quantum information involves replacing the requirement of space-like separation in Bell game-based protocols with computational or cryptographic assumptions on a single quantum device.
    Beyond the research on compiled Bell games already discussed, recent works have also advanced our understanding of nonlocality under computational assumptions~\cite{gluch2024nonlocality}, as well as applications in self-testing~\cite{metger2021self} and device-independent quantum key distribution~\cite{metger2021device} in the single-prover paradigm.

    Our work contributes to this broader effort by providing quantitative soundness bounds for all bipartite compiled Bell games.
    More fundamentally, the operator algebraic techniques we employ offer a direct bridge between the ``space-like separation world'' and the ``compiled single-prover world,'' suggesting the potential for a unified mathematical framework.
    Such a framework could systematically translate protocols originally designed for spatially separated parties into equivalent single-prover protocols with cryptographic assumptions, all while quantitatively preserving their essential properties (such as achievable scores).
\end{enumerate}

\subsection{Structure of the paper}

The remainder of this paper is organized as follows. In \cref{sec:QuantitaiveBoundConvergentRateMain}, we establish quantitative upper bounds for the quantum scores of compiled Bell games.
More specifically, \cref{sec:CompileGameSetup} introduces compiled Bell games in the context of the sequential NPA hierarchy at level $n$ and the associated relaxed no-signaling conditions.
\cref{sec:FlatExtension} details the flat extension technique, crucial for extending positive linear maps defined on subspaces of operators to positive linear functionals on the full algebra.
Building on this, \cref{sec:FiniteSecurityGNSStrategy} constructs a quantum representation for these compiled Bell games from the extended functionals.
A key technical contribution is presented in \cref{sec:TechnicalDecomposition}, where we develop a method to decompose Alice's operators into signaling and no-signaling components, allowing us to bound the signaling advantage.
\cref{sec:MainTheorems} then combines these elements to present the main quantitative soundness theorems, relating the compiled game scores to the sequential NPA hierarchy and the quantum scores.
Finally, \cref{sec:RobustSelfTesting} briefly discusses potential notions and challenges for robust self-testing in the context of compiled Bell games.

In \cref{sec:SequentialNPA}, we formally introduce and analyze the sequential NPA hierarchy (\cref{eq:SequentialNPASDP}).
In particular, \cref{sec:SequentialVsStandardNPA} compares this hierarchy to the standard NPA hierarchy, particularly at finite levels where the sequential version is equivalent to the standard NPA hierarchy with a relaxed POVM completeness condition.
The convergence of the sequential NPA hierarchy is then an easy consequence.
In \cref{sec:StoppingCriteriaSeqNPA} we fully describe and prove the stopping criteria of the sequential NPA hierarchy.
Finally, \cref{sec:SparseDualSequential} further characterizes the sequential NPA hierarchy by identifying its conic dual as a special case of the sparse sum of squares (SOS) hierarchy.

\cref{sec:NecessityOfNPAConvRate} explores arguments suggesting that game-specific NPA approximation errors are essential for establishing quantitative quantum soundness in compiled Bell games.
\cref{sec:AlmostCommutingAndSequentialStrat} first details the construction of explicit almost-commuting quantum strategies and their weakly signaling sequential counterparts, which achieve the $n$-th level NPA score for any given Bell game $\cG$.
Then, \cref{sec:ChallengeCompilingHighScoringStrat} uses the standard hardness conjecture $\mathrm{MIP}^{\mathrm{co}}=\mathrm{coRE}$ (\cref{conj:MIPco=coRE}) to argue for the existence of a family of Bell games $\cG^{(n)}$ where the $n$-th level NPA score significantly exceeds the true quantum commuting score.
This section proceeds to define a compiled Bell game based on this family, $\cG_{\comp} = (\cG^{(n(\lambda))}_{\comp})_{\lambda}$, and discusses the substantial challenges in compiling the aforementioned high-scoring strategies into a single QPT strategy $(S^{(\lambda)}_{\comp})$ for this compiled game.
Successfully overcoming these challenges would demonstrate the necessity of incorporating NPA approximation errors for robust quantitative soundness.

\section{Quantitative bounds for the compiled scores with the sequential NPA hierarchy}\label{sec:QuantitaiveBoundConvergentRateMain}
This section investigates the relationship between the optimal score of a Bell game $\cG$ in the standard commuting quantum model, denoted $\gamevalueqcopt{\cG}$, and the score achievable by a prover in its compiled version $\cG_{\comp}$ when employing a specific quantum polynomial time (QPT) strategy $S$.
Such a QPT strategy, $S = (S_\lambda)_{\lambda \in \mathds{N}}$, is understood as a sequence of quantum strategies indexed by the security parameter $\lambda$; each $S_\lambda$ consists of quantum operations whose complexity (e.g., in the quantum circuit model) is polynomial in $\lambda$.
(For a detailed definition of such strategies, we refer to~\cite[Definition~4.3]{kulpe2024bound}.)
We denote by $\gamevalueCompile{\lambda}{\cG_{\comp}, S}$ the score achieved by the prover when using the QPT strategy $S = (S_\lambda)_{\lambda \in \mathds{N}}$ in the compiled game $\cG_{\comp}$.

Our analysis is rooted in the \emph{sequential NPA hierarchy} (defined in \cref{eq:SequentialNPASDP}) for the specific game $\cG$.
Let us quantify the gap between the $n$-th level of the sequential NPA hierarchy and the optimal commuting quantum score by defining
\begin{align}\label{eq:epsilon_convergence_assumption}
    \epsilon(n) := \gamevalueSeqNPA{\cG}{n} - \gamevalueqcopt{\cG} \geq 0, 
\end{align}
such that $\epsilon(n) \to 0$ as $n \to \infty$ due to the asymptotic convergence of the sequential NPA hierarchy.

Our main findings in this section establish two key quantitative bounds.
First, we show in \cref{thm:GeneralQuantBoundFiniteSecurityQCValue} that the score of the compiled game is inherently close to the score predicted by the sequential NPA hierarchy at the corresponding feasible level:
\begin{align}\label{eq:main_bound_compile_vs_seqNPA}
    \gamevalueCompile{\lambda}{\cG_{\comp}, S} \leq \gamevalueSeqNPA{\cG}{n} + \etaThm_{S, n}(\lambda) = \gamevalueqcopt{\cG} + \epsilon(n) + \etaThm_{S, n}(\lambda).
\end{align}
Here, $\etaThm_{S, n}: \mathds{N} \to \mathds{R}_{\geq 0}$ is a negligible function dependent on the QHE scheme used in the compilation of $\cG_{\comp}$, the QPT strategy $S$ and the sequential NPA hierarchy level $n$.
(Recall that \emph{negligible} means $\etaThm_{S, n}$ goes to zero faster than the reciprocal of any polynomial in $\lambda$.)
This first bound highlights that the cryptographic compilation introduces a NPA level dependent negligible error from the corresponding sequential NPA hierarchy's prediction.
By letting $n, \lambda \to \infty$, we recover the qualitative quantum soundness established in~\cite{kulpe2024bound}.

Combining \cref{thm:GeneralQuantBoundFiniteSecurityQCValue} with finite-level convergence of the sequential NPA hierarchy, the approximation error $\epsilon(n)$ dependence can be removed.
A useful sufficient condition is due to \cref{thm:StoppingCriteriaSeqNPA}, which allows us to conclude in \cref{cor:MainBoundFiniteSecurityQCValue} that for any bipartite Bell games $\cG$ satisfying the flat-optimality assumption:
\begin{align}\label{eq:main_bound_intro_sec2}
    \gamevalueCompile{\lambda}{\cG_{\comp}, S} \leq \omega_{\mathrm{q}}(\cG) + \etaThm_{S}(\lambda),
\end{align}
where $\omega_{\mathrm{q}}(\cG)$ is the optimal (finite-dimensional) quantum value and $\etaThm_{S}(\lambda)$ a negligible function depending on the QHE encryption and the QPT strategy $S$.
This is a generalization of~\cite{natarajan2023bounding,cui2024computational, baroni2024quantum, mehta2024self}.

To facilitate the analysis, we introduce in \cref{sec:CompileGameSetup} the parameter $n$ corresponding to the $n$-th level of the sequential NPA hierarchy for game $\cG$, which is vital to the signaling decomposition technique (\cref{lem:sumAaxdominated} and \cref{prop:DecompositionNSSI}).

The section is organized as follows.
\cref{sec:CompileGameSetup} reviews the relevant definitions for compiled Bell games in the context of $n$-th level of the sequential NPA hierarchy. \cref{sec:FlatExtension} explains one of our technical results, which is a key prerequisite in constructing the quantum strategy described in \cref{sec:FiniteSecurityGNSStrategy}.
In \cref{sec:TechnicalDecomposition}, we present and prove technical results for decomposing Alice's measurements into signaling and no-signaling components. This result enables us to bound the potential signaling effect from the encrypted part of the prover to the unencrypted part, while associating the no-signaling part with the strongly no-signaling sequential NPA hierarchy at level $n$.
The technique of bounding weak signaling effects might be interesting beyond the scope of compiled Bell games.
Then, \cref{sec:MainTheorems} states the main result (\cref{thm:GeneralQuantBoundFiniteSecurityQCValue}) and proves the quantitative quantum soundness as a corollary (\cref{cor:MainBoundFiniteSecurityQCValue}).
We finish in \cref{sec:RobustSelfTesting} with a discussion on potential notions of robust self-testings for compiled Bell games.

\subsection{Compiled Bell games and QPT strategies associated with NPA level \texorpdfstring{$n$}{n}}\label{sec:CompileGameSetup}

We begin with a compiled Bell game $\cG_{\comp}$ where the verifier selects the security parameter $\lambda$, and considers an arbitrary QPT strategy $S = (S_{\lambda})_{\lambda}$ with correlations $(p^{\lambda}(ab|xy))_{\lambda}$ for input-output $(a, b, x, y) \in I_{A} \times I_B \times I_X \times I_Y$.
We may, without loss of generality, assume that $p^{\lambda}(a|x) \neq 0$; otherwise, we can always remove the trivial pair $(a,x)$.

By the results in~\cite{kulpe2024bound}, we can interpret the game and QPT strategy as a sequential Bell game $\cG_\seq$ with a relaxed no-signaling condition (\cref{eq:securityassumptionNPA}).
In their notation, they consider the $C^*$-algebra $\cB$ generated by Bob's POVM elements $\{\B\}$ (for output-input pairs $(b,y) \in I_B \times I_Y)$.
Then for the output-input pairs $(a,x) \in I_A \times I_X$, the measurements of the strategy $S_{\lambda}$ are captured by the positive linear functionals
\begin{align*}
    \sigma^{\lambda}_{a|x}: \cB \to \mathds{C}, \forall a, x, \text{ s.t.} \ p^{\lambda}(ab|xy) = \sigma^{\lambda}_{a|x}(\B).
\end{align*}
Moreover, the marginalization over $a$ gives the states (i.e., normalized positive linear functionals) $\sigma^{\lambda}_x: \cB \to \mathds{C}$ for all $x$ via
\begin{align*}
    \sigma^{\lambda}_{x} := \sum_a \sigma^{\lambda}_{a|x}.
\end{align*}
Then, by~\cite[Proposition~4.6]{kulpe2024bound}, for every fixed polynomial $P$, there exists a negligible function $\eta_P(\lambda)$ such that 
\begin{align}\label{eq:securityassumptionOriginal}
    \lvert (\sigma^{\lambda}_{x} - \sigma^{\lambda}_{x'})(P) \rvert \leq \eta_P(\lambda),
\end{align}
where $\eta_P$ depends on the specific polynomial $P$, the QHE scheme, and the QPT strategy $S$.
Note that this inequality does not imply there is a universal $\eta$ providing a uniform bound for all $P$.
In the asymptotic limit of security parameter $\lambda \to \infty$ (hence $\eta(\lambda) \to 0$), one recovers the \emph{strongly no-signaling sequential algebraic strategy}~\cite[Definition~5.14]{kulpe2024bound}.

The physical intuition remains relevant: a prover implementing $S_\lambda$ is, by definition, restricted to computations (and thus, state preparations and measurements) whose complexity is bounded by $\poly(\lambda)$.
It is therefore natural to analyze $S_\lambda$ not against arbitrarily complex quantum measurements, but rather by considering its interaction with observables whose complexity is also bounded.
This motivates our choice to focus our analysis on a specific set of polynomials, namely those relevant to a particular level of the NPA hierarchy.

More concretely, we fix a parameter $n \leq \poly(\lambda)$.
Instead of the full $C^*$-algebra $\cB$, we restrict our attention to the $2n$-degree subspace $\cB_{2n} = \{ P(\{\B\}) \mid \deg(P) \leq 2n \}$.
This perspective aligns naturally with the sequential NPA hierarchy (formally defined in \cref{eq:SequentialNPASDP}), where our $n$ corresponds to the $n$-th level of this hierarchy.
The identification with the sequential NPA hierarchy at finite level is precisely what ensures the validity of our signaling decomposition technique (\cref{lem:sumAaxdominated} and \cref{prop:DecompositionNSSI}).

In this level $n$ sequential NPA context, we naturally consider the restriction of $\sigma^{\lambda}_{a|x}$ to $\cB_{2n}$.
That is, for the output-input pairs $(a, x)$ for Alice, the measurements of the strategy $S_{\lambda}$ are captured by the positive linear maps (rather than functionals on the full $\cB$)
\begin{align*}
    \sigma^{\lambda, n}_{a|x}: \cB_{2n} \to \mathds{C}, \forall a, x, \text{ s.t.}\ p^{\lambda}(ab|xy) = \sigma^{\lambda, n}_{a|x}(\B).
\end{align*}
Similarly, marginalization over $a$ gives normalized linear maps (rather than states) $\sigma^{\lambda, n}_{x}: \cB_{2n} \to \mathds{C}$ for all $x$, in the sense that
\begin{align*}
    \sigma^{\lambda, n}_{x} := \sum_a \sigma^{\lambda, n}_{a|x}.
\end{align*}
It directly follows from \cref{eq:securityassumptionOriginal}, for all $P \in \cB_{2n}$, we have weakly no-signaling constraints as
\begin{align}\label{eq:securityassumptionNPA}
    \lvert (\sigma^{\lambda, n}_{x} - \sigma^{\lambda, n}_{x'})(P) \rvert \leq \eta_P(\lambda).
\end{align}

\subsection{Flat extension to functionals on full algebra}\label{sec:FlatExtension}
Analogously to~\cite{kulpe2024bound}, we wish to apply Arveson's Radon-Nikodym Theorem~\cite[Lemma~1.4.1]{arveson1969subalgebras}, to obtain a commuting quantum strategy corresponding to $p^{\lambda}(ab|xy)$.
Let us first recall this key mathematical result.
\begin{proposition}[Arveson's Radon-Nikodym derivative]\label{prop:ArvesonRadonNikodym}
    Let $\omega, \nu$ be positive linear functionals on a unital $C^*$-algebra $\cB$ such that $\nu \leq \omega$, and let $(\cH_{\omega}, \pi_{\omega}, \ket{\Omega_{\omega}})$ be the GNS triple of $\omega$.
    Then there exists a unique operator $T \in \pi_{\omega}(\cB)'$ such that $0 \leq T \leq \id_{\cH_{\omega}}$ and
    \begin{align*}
        \nu(P) = \sandwich{\Omega_{\omega}}{T \pi_{\omega}(P)}{\Omega_{\omega}}
    \end{align*}
    for all $P \in \cB$.
\end{proposition}
\begin{proof}
    This is a special case of the general completely positive map version of~\cite[Theorem~1.4.2]{arveson1969subalgebras}.
\end{proof}

However, one difficulty of directly applying \cref{prop:ArvesonRadonNikodym} is that the maps $\sigma^{\lambda, n}_{a|x}$ from \cref{sec:CompileGameSetup} are, for each $(a,x)$, positive linear maps on the subspace $\cB_{2n}$ of polynomials in $\{\B\}$ of degree up to $2n$, rather than functionals on the full $C^*$-algebra $\cB$.
We address this by extending each $\sigma^{\lambda, n}_{a|x}$ to a positive linear functional on $\cB$ using a \emph{flat extension technique}, similar to that in~\cite[Proposition~2.5 \& Remark~2.6]{helton2012convex}.
The method is rooted in the following characterization of positive semidefinite (PSD) block matrices.
\begin{proposition}\label{prop:FlatExtensionLinearAlgebraFact}
    Let
    \begin{align*}
        \Tilde{A} = \begin{pmatrix}
            A & B \\
            B^* & C
        \end{pmatrix}
    \end{align*}
    be a self-adjoint matrix.
    Then $\Tilde{A} \succeq 0$ if and only if $A \succeq 0$, and there exists some matrix $Z$ with $B = A Z$ and $C \succeq Z^* A Z$.
    A crucial consequence is that the specific choice $C = Z^* A Z$ makes the matrix
    \begin{align*}
        M_{f} = \begin{pmatrix}
            A & B \\
            B^* & Z^* A Z
        \end{pmatrix}
    \end{align*}
    PSD, and importantly, $\mathrm{rank}(M_{f}) = \mathrm{rank}(A)$, i.e., $M_{f}$ is flat over $A$.
    The matrix $Z$ can be generally computed using the (e.g., Moore-Penrose) inverse of $A$ due to $\mathrm{range}(B) \subset \mathrm{range}(A)$.
\end{proposition}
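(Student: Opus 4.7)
The plan is to prove this standard block-PSD characterization via the Schur complement / triangular congruence, which simultaneously delivers all three claims in the statement.

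For the forward direction, I would first note that $A \succeq 0$ follows immediately by compressing $\Tilde{A}$ to its top-left block: for any vector $v$, evaluating on $\left( \begin{smallmatrix} v \\ 0 \end{smallmatrix} \right)$ gives $\langle v, Av\rangle \geq 0$. Next, I would establish the range inclusion $\mathrm{range}(B) \subseteq \mathrm{range}(A)$. If $v \in \ker(A)$, then $\left( \begin{smallmatrix} v \\ 0 \end{smallmatrix} \right)^* \Tilde{A} \left( \begin{smallmatrix} v \\ 0 \end{smallmatrix} \right) = 0$, and positivity of $\Tilde{A}$ forces $\Tilde{A} \left( \begin{smallmatrix} v \\ 0 \end{smallmatrix} \right) = 0$, so $B^* v = 0$. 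Hence $\ker(A) \perp \mathrm{range}(B)$, which gives $\mathrm{range}(B) \subseteq \mathrm{range}(A^*) = \mathrm{range}(A)$. Consequently, defining $Z := A^{\dagger} B$ with $A^{\dagger}$ the Moore--Penrose pseudoinverse, one has $AZ = AA^{\dagger}B = B$, since $B$ lies in $\mathrm{range}(A)$.

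The remaining inequality $C \succeq Z^* A Z$ will come from the triangular congruence
\begin{align*}
\begin{pmatrix} I & -Z \\ 0 & I \end{pmatrix}^* \Tilde{A} \begin{pmatrix} I & -Z \\ 0 & I \end{pmatrix} \;=\; \begin{pmatrix} A & 0 \\ 0 & C - Z^* A Z \end{pmatrix},
\end{align*}
which I would verify by direct block multiplication using $B = AZ$. Since the outer matrix is invertible and $\Tilde{A} \succeq 0$, the right-hand side is PSD, so $C - Z^* A Z \succeq 0$. The reverse direction is obtained by reading the same identity backwards: given $A \succeq 0$, $B = AZ$, and $C \succeq Z^* A Z$, the block-diagonal matrix on the right is PSD, and applying the inverse congruence produces $\Tilde{A} \succeq 0$.

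Finally, for the rank claim in the flat case $C = Z^* A Z$, I would observe that
\begin{align*}
M_{f} \;=\; \begin{pmatrix} I \\ Z^* \end{pmatrix} A \begin{pmatrix} I & Z \end{pmatrix},
\end{align*}
which gives $\mathrm{rank}(M_{f}) \leq \mathrm{rank}(A)$; the opposite inequality is automatic since $A$ appears as a principal submatrix of $M_{f}$, so $\mathrm{rank}(M_{f}) = \mathrm{rank}(A)$. I do not anticipate a real obstacle: the result is a classical consequence of the Schur complement, and the only point that needs modest care is the use of the pseudoinverse, which is what allows $Z$ to be defined cleanly even when $A$ is singular and is exactly what the last sentence of the proposition alludes to.
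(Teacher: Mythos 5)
Your proof is correct and complete. The paper does not actually supply a proof of this proposition---it delegates to \cite[Proposition~1.11]{burgdorf2016optimization} with the note ``(adapted to complex matrices)''---so you have filled in precisely what the paper leaves implicit. The route you take (compression to the top-left block for $A \succeq 0$, the kernel argument $v^*Av = 0 \Rightarrow \Tilde{A}\left(\begin{smallmatrix}v\\0\end{smallmatrix}\right) = 0 \Rightarrow B^*v = 0$ for the range inclusion, the definition $Z = A^\dagger B$, the unit upper-triangular congruence to diagonalize into $A \oplus (C - Z^*AZ)$, and finally the factorization $M_f = \left(\begin{smallmatrix}I\\ Z^*\end{smallmatrix}\right) A \left(\begin{smallmatrix}I & Z\end{smallmatrix}\right)$ for the rank claim) is the standard Schur-complement argument, and it is exactly what the cited reference contains. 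The one small point worth keeping as explicit as you have it is that the identity $B^* = Z^*A$ (needed for the congruence to zero out the off-diagonal block and for the rank factorization to reproduce $M_f$) relies on $A$ being self-adjoint, which you invoke correctly. No gaps.
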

\begin{proof}
    See~\cite[Proposition~1.11]{burgdorf2016optimization} (adapted to complex matrices).
\end{proof}

For the construction that follows, we may assume that our initial positive linear maps $\sigma^{\lambda, n}_{a|x}$ are defined on the slightly larger subspace $\cB_{2n+2}$, ensuring cleaner notation.
For each $(a,x)$, we associate the map $\sigma^{\lambda, n}_{a|x}: \cB_{2n+2} \to \mathds{C}$ with its corresponding moment (or Hankel) matrix, indexed by the monomials in the generators $\{\B\}$.
In particular, for $k \leq n+1$, denote by $M_k(\sigma^{\lambda, n}_{a|x})$ the $k$-th order moment matrix defined by
\begin{align}\label{eq:MomentMatrixStateIdentification}
    (M_k(\sigma^{\lambda, n}_{a|x}))_{w, v} = \sigma^{\lambda, n}_{a|x}(w^* v)
\end{align}
for monomials $w, v \in \cB_{k}$.
It is straightforward to check $\sigma^{\lambda, n}_{a|x}$ is positive if and only if $M_k(\sigma^{\lambda, n}_{a|x}) \succeq 0$ for every $k \leq n+1$.

The $(n+1)$-th order moment matrix, $M_{n+1}(\sigma^{\lambda, n}_{a|x})$, can then be written in block form:
\begin{align*}
    M_{n+1}(\sigma^{\lambda, n}_{a|x}) = \begin{pmatrix}
        M_n(\sigma^{\lambda, n}_{a|x}) & B \\
        B^* & C
    \end{pmatrix},
\end{align*}
where the block $B$ has entries $\sigma^{\lambda, n}_{a|x}(w^*v)$ for monomials $w \in \cB_n$ and $v \in \cB_{n+1}\setminus\cB_n$, while $C$ has entries defined by monomials of degree exactly $n+1$.
\cref{prop:FlatExtensionLinearAlgebraFact} then implies that we can construct a matrix $Z$ such that $B = M_n(\sigma^{\lambda, n}_{a|x}) Z$ and a new PSD $(n+1)$-th order moment matrix
\begin{align*}
    M_{n+1}(\tilde{\sigma}^{\lambda, n}_{a|x}) = \begin{pmatrix}
        M_n(\sigma^{\lambda, n}_{a|x}) & B \\
        B^* & Z^* M_n(\sigma^{\lambda, n}_{a|x}) Z
    \end{pmatrix} \succeq 0.
\end{align*}

This moment matrix $M_{n+1}(\tilde{\sigma}^{\lambda, n}_{a|x})$, as suggested by its notation, can be identified with a new positive linear map $\tilde{\sigma}^{\lambda, n}_{a|x}: \cB_{2n+2} \to \mathds{C}$ via \cref{eq:MomentMatrixStateIdentification}.
This new map $\tilde{\sigma}^{\lambda, n}_{a|x}$ agrees with the original ${\sigma}^{\lambda, n}_{a|x}$ on $\cB_{2n+1}$ (since the blocks $M_n(\sigma^{\lambda, n}_{a|x})$ and $B$ are preserved) but generally differs on $\cB_{2n+2}\setminus\cB_{2n+1}$ due to the modified bottom-right block.
Moreover, $M_{n+1}(\tilde{\sigma}^{\lambda, n}_{a|x})$ by construction satisfies the \emph{flatness condition} (also called rank-loop condition, cf.~\cite{navascues2008convergent}),
\begin{align}\label{eq:FlatConditionSec2}
    \mathrm{rank}(M_{n+1}(\tilde{\sigma}^{\lambda, n}_{a|x})) = \mathrm{rank}(M_n({\sigma}^{\lambda, n}_{a|x})),
\end{align}
which is the key to constructing a finite-dimensional representation as the following.

\begin{proposition}\label{prop:GNSfromFlatExtension}
    Given the positive linear map $\tilde{\sigma}^{\lambda, n}_{a|x}: \cB_{2n+2} \to \mathds{C}$ with its $(n+1)$-th order flat moment matrix $M_{n+1}(\tilde{\sigma}^{\lambda, n}_{a|x})$ constructed as above, and letting $p^{\lambda}(a|x) = \sigma^{\lambda, n}_{a|x}(\id) \neq 0$.
    Then, there exists a finite-dimensional GNS representation $(\cH^{\lambda, n}_{a|x}, \pi^{\lambda, n}_{a|x}, \ket{\Omega^{\lambda, n}_{a|x}})$ of the C*-algebra $\cB$ such that:
    \begin{enumerate}[label=(\roman*)]
        \item The Hilbert space $\cH^{\lambda, n}_{a|x}$ has dimension $\mathrm{rank}(M_n(\sigma^{\lambda, n}_{a|x}))$. It is spanned by vectors corresponding to polynomials up to degree $n$:
        \begin{align*}
            \cH^{\lambda, n}_{a|x} = \mathrm{span}\{\pi^{\lambda, n}_{a|x}(P) \ket{\Omega^{\lambda, n}_{a|x}} \mid P \in \cB_{n} \}.
        \end{align*}
        Consequently, for any polynomial $P \in \cB$, there exists $P' \in \cB_n$ such that $\pi^{\lambda, n}_{a|x}(P) \ket{\Omega^{\lambda, n}_{a|x}} = \pi^{\lambda, n}_{a|x}(P') \ket{\Omega^{\lambda, n}_{a|x}}$.
        \item The map $\tilde{\sigma}^{\lambda, n}_{a|x}$ (and thus $\sigma^{\lambda, n}_{a|x}$ on $\cB_{2n+1}$) is recovered by the cyclic vector:
        for all $P \in \cB_{2n+1}$,
        \begin{align*}
            \sigma^{\lambda, n}_{a|x}(P) = \sigma^{\lambda, n}_{a|x}(\id) \cdot \sandwich{\Omega^{\lambda, n}_{a|x}}{\pi^{\lambda, n}_{a|x}(P)}{\Omega^{\lambda, n}_{a|x}}.
        \end{align*}
        \item The representation preserves the POVM structure of the generators: for each $y$, the set $\{\pi^{\lambda, n}_{a|x}(\B)\}_{b}$ forms a POVM on $\cH^{\lambda, n}_{a|x}$
        (higher order constraints, such as commutativity, are not necessarily preserved, but this is not required for our current purpose).
    \end{enumerate}
\end{proposition}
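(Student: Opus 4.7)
The plan is to leverage the flatness condition in two stages: first, iteratively extend the positive linear map $\tilde{\sigma}^{\lambda, n}_{a|x}$, currently defined only on $\cB_{2n+2}$, to a full positive linear functional $\hat{\sigma}^{\lambda, n}_{a|x}$ on the $C^*$-algebra $\cB$ whose moment matrices at all orders $k \geq n+1$ remain flat over $M_n(\sigma^{\lambda, n}_{a|x})$; second, apply the standard GNS construction to $\hat{\sigma}^{\lambda, n}_{a|x}$. Finite dimensionality of the resulting Hilbert space and the cyclic span by degree-$\leq n$ polynomials will then emerge automatically from the flatness propagating to all orders.

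For the first stage, I would apply \cref{prop:FlatExtensionLinearAlgebraFact} recursively. The starting matrix $M_{n+1}(\tilde{\sigma}^{\lambda, n}_{a|x})$ already satisfies \cref{eq:FlatConditionSec2}, and its companion matrix $Z$ encodes, for every degree-$(n+1)$ monomial $v$, a polynomial $v' \in \cB_n$ such that the corresponding columns of $M_{n+1}$ coincide. To build $M_{n+2}$, I would extend by declaring $\hat{\sigma}^{\lambda, n}_{a|x}(w^* b v) := \hat{\sigma}^{\lambda, n}_{a|x}(w^* b v')$ whenever $\deg(bv) = n+2$ and $b$ is a generator; then apply \cref{prop:FlatExtensionLinearAlgebraFact} again to confirm PSD-ness and preservation of rank. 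Iterating yields a coherent sequence $\{M_k\}_{k \geq n+1}$ all of rank $\mathrm{rank}(M_n(\sigma^{\lambda, n}_{a|x}))$, and hence a positive linear functional $\hat{\sigma}^{\lambda, n}_{a|x}$ on $\cB$ that agrees with $\sigma^{\lambda, n}_{a|x}$ on $\cB_{2n+1}$ (since only the degree-$(2n+2)$ block was modified by the initial flat extension).

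For the second stage, I would carry out the standard GNS construction: set $N := \{P \in \cB : \hat{\sigma}^{\lambda, n}_{a|x}(P^* P) = 0\}$, define $\cH^{\lambda, n}_{a|x} := \cB / N$ with inner product $\langle [P], [Q] \rangle := \hat{\sigma}^{\lambda, n}_{a|x}(P^* Q) / p^{\lambda}(a|x)$, let $\pi^{\lambda, n}_{a|x}$ be left multiplication on this quotient, and take $\ket{\Omega^{\lambda, n}_{a|x}} := [\id]$. Property (ii) then follows directly from the cyclicity identity $\hat{\sigma}^{\lambda, n}_{a|x}(P) = \hat{\sigma}^{\lambda, n}_{a|x}(\id) \cdot \sandwich{\Omega^{\lambda, n}_{a|x}}{\pi^{\lambda, n}_{a|x}(P)}{\Omega^{\lambda, n}_{a|x}}$ together with the agreement of $\hat{\sigma}^{\lambda, n}_{a|x}$ and $\sigma^{\lambda, n}_{a|x}$ on $\cB_{2n+1}$. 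Property (i) follows because the flatness relations propagated in the first stage imply every element of $\cB$ is equivalent modulo $N$ to a polynomial in $\cB_n$, and the dimension of the quotient equals $\mathrm{rank}(M_n(\sigma^{\lambda, n}_{a|x}))$. Property (iii) is immediate, since the defining relations of $\cB$ include $\B \geq 0$ and $\sum_b \B = \id$ for each $y$, and these are transported by any $*$-representation.

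The main technical obstacle is the well-definedness of the upward propagation in the first stage: one must check that the value $\hat{\sigma}^{\lambda, n}_{a|x}(w)$ assigned to a high-degree monomial $w$ does not depend on the particular factorization used to reduce $w$ via the flatness relations, and that PSD-ness is preserved at every step. This is the non-commutative analogue of the Curto--Fialkow flat extension theorem, and although standard in the literature on non-commutative moment problems, it requires careful bookkeeping to ensure consistency with the (potentially non-commuting) generator structure of $\cB$. This is precisely where the flatness condition \cref{eq:FlatConditionSec2} is used substantively, translating a local rank condition on one moment matrix into a global finite-dimensional representation of the entire algebra.
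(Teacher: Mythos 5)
Your proof takes essentially the same route as the paper. Both rely on the flatness of $M_{n+1}(\tilde{\sigma}^{\lambda,n}_{a|x})$ over $M_n(\sigma^{\lambda,n}_{a|x})$ to produce a finite-dimensional GNS representation of $\cB$, and both defer the technical core (the non-commutative flat-extension / Curto--Fialkow argument) to the cited literature; the paper runs GNS directly on the truncated data and uses flatness to show that left multiplication preserves the degree-$\leq n$ span, whereas you first propagate the flatness upward to a full positive functional on $\cB$ and then do standard GNS, which is a re-phrasing of the same argument since the well-definedness of your iterative extension is established by exactly the same representation-theoretic reasoning.
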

\begin{proof}
The representation is obtained by applying the standard GNS construction to the normalized map $\tilde{\sigma}^{\lambda, n}_{a|x} / p^{\lambda}(a|x)$. The main consideration, differing from the GNS construction for a state on the full algebra $\cB$, is that $\tilde{\sigma}^{\lambda, n}_{a|x}$ is initially defined only on $\cB_{2n+2}$.
This limitation requires extra care to ensure that the representation operators $\pi^{\lambda, n}_{a|x}(X)$ (defined by left multiplication) are well-defined, i.e., that they map the GNS Hilbert space $\mathcal{H}^{\lambda, n}_{a|x}$ to itself.
Thankfully, the flatness condition on the moment matrix $M_{n+1}(\tilde{\sigma}^{\lambda, n}_{a|x})$ guarantees this well-definedness, effectively through rank and dimension constraints, allowing $\pi^{\lambda, n}_{a|x}$ to be a *-representation of the whole $\cB$.
The properties (i)-(iii) then follow.
For detailed arguments, see e.g.,~\cite[Proposition~2.5 \& Remark~2.6]{helton2012convex} or~\cite[Theorem~10]{navascues2008convergent}.
\end{proof}

Thus \cref{prop:GNSfromFlatExtension} allows us to consistently extend $\tilde{\sigma}^{\lambda, n}_{a|x}$ (and thereby the original $\sigma^{\lambda, n}_{a|x}$) to a positive linear functional on the entire algebra $\cB$ via the formula:
\begin{equation}
    \begin{aligned}
        \sigma^{\lambda, n}_{a|x}: \cB &\to \mathds{C} \\
        P &\mapsto \sigma^{\lambda, n}_{a|x}(\id) \cdot \sandwich{\Omega^{\lambda, n}_{a|x}}{\pi^{\lambda, n}_{a|x}(P)}{\Omega^{\lambda, n}_{a|x}}.
    \end{aligned}
\end{equation}
Here, and for the rest of this section, we abuse notation by using $\sigma^{\lambda, n}_{a|x}$ to refer to this extended linear functional on $\cB$ as well.

Finally, we define for each of Alice's inputs $x$:
\begin{align*}
    \sigma^{\lambda, n}_x = \sum_a \sigma^{\lambda, n}_{a|x}: \cB \to \mathds{C}.
\end{align*}
These are indeed states on $\cB$.
Positivity follows from being a sum of positive linear functionals.
Normalization, $\sigma^{\lambda, n}_x(\id)=1$, holds because they are extensions of the original $\sigma^{\lambda, n}_x$ which were normalized on $\cB_{2n}$.
Furthermore, since the extension agrees with the original map on $\cB_{2n+1}$ (and thus on $\cB_{2n}$), the property from \cref{eq:securityassumptionNPA} is preserved: for each $P \in \cB_{2n}$, there exists a negligible function $\eta_P(\lambda)$, dependent on the QHE scheme and $S$, such that
\begin{align*}
    \lvert (\sigma^{\lambda, n}_{x} - \sigma^{\lambda, n}_{x'})(P) \rvert \leq \eta_P(\lambda).
\end{align*}

The flat extension procedure can be interpreted physically: for each of Alice's outcome-input pairs $(a,x)$, Bob analyzes the correlations $\sigma^{\lambda, n}_{a|x}$ restricted to his measurements corresponding to polynomials up to degree $2n+1$.
He then constructs a minimal (finite-dimensional) quantum model $(\cH^{\lambda, n}_{a|x}, \pi^{\lambda, n}_{a|x}, \ket{\Omega^{\lambda, n}_{a|x}})$ consistent with these observations.
This model then allows extrapolation to define $\sigma^{\lambda, n}_{a|x}$ for any polynomial in Bob's measurements.

We finish the subsection with a remark on the choice of flat extension technique.
\begin{remark}
    To extend $\sigma^{\lambda, n}_{a|x}$ from $\cB_{2n}$ (or $\cB_{2n+2}$) to $\cB$, one might observe that $\cB_k$ forms an operator system for any $k$ and be tempted to apply Arveson's Extension Theorem~\cite[Theorem~7.5]{paulsen2002completely} (or Krein's Theorem for functionals~\cite[Exercise~2.10]{paulsen2002completely}) for this purpose.
    However, these theorems require $\sigma^{\lambda, n}_{a|x}$ to be positive on the $C^*$-algebraic positive cone intersected with the subspace, i.e., on $\cB_+ \cap \cB_{2n+2}$.
    In our setup $\sigma^{\lambda, n}_{a|x}$ is a positive linear map on $\cB_{2n+2}$, meaning that $\sigma^{\lambda, n}_{a|x}$ is positive with respect to sums-of-squares (SOS) $\sigma^{\lambda, n}_{a|x}(\sum_i P_i^* P_i) \geq 0$ for all $P_i \in \cB_{n+1}$.
    The condition, $\sigma^{\lambda, n}_{a|x}(Q) \geq 0$ for all $Q \in \cB_+ \cap \cB_{2n+1}$, is generally stronger, since an element $Q \in \cB_+ \cap \cB_{2n+2}$ might not be a SOS of polynomials in $\cB_{n+1}$ but of much larger degrees.
    Therefore, the positivity condition we start with might be too weak for a direct application of Krein's or Arveson's Extension type theorems, leading us to use the flat extension technique, which guarantees a positive (and state-like after normalization) extension to the whole algebra $\cB$.
\end{remark}

\subsection{Quantum representation for strategies of compiled Bell games}\label{sec:FiniteSecurityGNSStrategy}
Having constructed the states $\sigma^{\lambda, n}_x: \cB \to \mathds{C}$, which represent an effective description of the prover's QPT strategy $S_\lambda$ when analyzed at the $n$-th level of the NPA hierarchy, our next goal is to derive the associated quantum representation.
From this representation, we will recover its compiled Bell score in the game $\cG_{\comp}$, which we denote $\gamevalueCompile{\lambda}{\cG_{\comp}, S}$.

The following proposition details the construction of an appropriate representation, generalizing from~\cite[Theorem~5.15]{kulpe2024bound}.
\begin{proposition}\label{prop:GNSConstructionCommonSpace}
Let $\{\sigma^{\lambda, n}_{x} = \sum_a \sigma^{\lambda, n}_{a|x}: \cB \to \mathds{C}\}_{x \in I_X}$ be the states derived from the QPT strategy $S_\lambda$ at NPA level $n$, as constructed in \cref{sec:FlatExtension}.
Then there exists a cyclic representation $(\cH^{\lambda}_{n}, \pi^{\lambda}_{n}, \ket{\Omega^{\lambda}_{n}})$ of $\cB$ such that:
\begin{enumerate}[label=(\roman*)]
    \item There exist positive operators $\{ \hA^{\lambda, n} \}_{a, x} \subset \pi^{\lambda}_{n}(\cB)' \subset B(\cH^{\lambda}_{n})$, where $\pi^{\lambda}_{n}(\cB)'$ is the commutant of $\pi^{\lambda}_{n}(\cB)$.
    \item Bob's measurements in this representation, $\{\pi^{\lambda}_{n}(\B)\}_{b,y}$ are POVMs.
    On the other hand, $\hA^{\lambda, n}$ is almost-POVM in the sense that, for any $P_1, P_2 \in \cB_{n}$, there exists an negligible function $\eta(\lambda)$ such that 
    \begin{align}\label{eq:AAlmostPOVM}
        \lvert \sandwich{\Omega^{\lambda}_{n}}{\pi^{\lambda}_{n}(P_1) \left(\sum_a\hA^{\lambda, n} - \id_{H^{\lambda}_{n}}\right) \pi^{\lambda}_{n}(P_2)} {\Omega^{\lambda}_{n}} \rvert \leq \eta(\lambda).
    \end{align}
    \item The observed correlations are reproduced: for all $a, b, x, y$,
    \begin{align}
        p^{\lambda}(ab|xy) = \sandwich{\Omega^{\lambda}_{n}}{\hA^{\lambda, n} \pi^{\lambda}_{n}(\B)}{\Omega^{\lambda}_{n}}.
    \end{align}
    \end{enumerate}
\end{proposition}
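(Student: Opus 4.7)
The plan is to apply Arveson's Radon-Nikodym theorem simultaneously to all the positive linear functionals $\sigma^{\lambda, n}_{a|x}$ by using a single dominating state, and then extract the almost-POVM condition directly from the relaxed no-signaling inequality \eqref{eq:securityassumptionNPA}.

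\textbf{Step 1 (common dominating state).} First, I would define the averaged state $\sigma^{\lambda, n} := \frac{1}{|I_X|} \sum_{x} \sigma^{\lambda, n}_x$ on the full $C^*$-algebra $\cB$. Since each $\sigma^{\lambda, n}_x$ is a state (by the flat-extension construction of \cref{sec:FlatExtension}), $\sigma^{\lambda, n}$ is itself a state, and the bounds $0 \leq \sigma^{\lambda, n}_{a|x} \leq \sigma^{\lambda, n}_{x} \leq |I_X|\, \sigma^{\lambda, n}$ hold as positive linear functionals on $\cB$.

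\textbf{Step 2 (GNS + Radon-Nikodym).} Let $(\cH^{\lambda}_{n}, \pi^{\lambda}_{n}, \ket{\Omega^{\lambda}_{n}})$ be the GNS representation of $\cB$ associated to $\sigma^{\lambda, n}$. Arveson's Radon-Nikodym theorem~\cite{arveson1969subalgebras} then yields, for every pair $(a,x)$, a unique operator $\hA^{\lambda, n} \in \pi^{\lambda}_{n}(\cB)'$ with $0 \leq \hA^{\lambda, n} \leq |I_X|\, \id_{\cH^{\lambda}_n}$ such that
\begin{equation*}
    \sigma^{\lambda, n}_{a|x}(P) \;=\; \sandwich{\Omega^{\lambda}_{n}}{\hA^{\lambda, n}\, \pi^{\lambda}_{n}(P)}{\Omega^{\lambda}_{n}}, \qquad \forall P \in \cB.
\end{equation*}
This proves (i), and setting $P = \B$ together with $\sigma^{\lambda, n}_{a|x}(\B) = p^{\lambda}(ab|xy)$ gives (iii). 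The POVM part of (ii) is automatic, since $\pi^{\lambda}_{n}$ is a unital $*$-representation of $\cB$ and $\{\B\}_b$ sums to the identity of $\cB$.

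\textbf{Step 3 (almost-POVM from approximate no-signaling).} The main work is the bound \eqref{eq:AAlmostPOVM}. For any $P_1, P_2 \in \cB_n$, I would use the commutation $\hA^{\lambda, n} \in \pi^{\lambda}_{n}(\cB)'$ to write
\begin{equation*}
    \sandwich{\Omega^{\lambda}_{n}}{\pi^{\lambda}_{n}(P_1^*) \Bigl(\sum_a \hA^{\lambda, n}\Bigr) \pi^{\lambda}_{n}(P_2)}{\Omega^{\lambda}_{n}} \;=\; \sum_a \sigma^{\lambda, n}_{a|x}(P_1^* P_2) \;=\; \sigma^{\lambda, n}_{x}(P_1^* P_2),
\end{equation*}
while $\sandwich{\Omega^{\lambda}_{n}}{\pi^{\lambda}_{n}(P_1^* P_2)}{\Omega^{\lambda}_{n}} = \sigma^{\lambda, n}(P_1^* P_2)$. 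Subtracting,
\begin{equation*}
    \sandwich{\Omega^{\lambda}_{n}}{\pi^{\lambda}_{n}(P_1^*) \Bigl(\textstyle\sum_a \hA^{\lambda, n} - \id\Bigr) \pi^{\lambda}_{n}(P_2)}{\Omega^{\lambda}_{n}} \;=\; \frac{1}{|I_X|}\sum_{x'} \bigl(\sigma^{\lambda, n}_{x} - \sigma^{\lambda, n}_{x'}\bigr)(P_1^* P_2).
\end{equation*}
Because $P_1^* P_2 \in \cB_{2n}$, the relaxed no-signaling bound \eqref{eq:securityassumptionNPA} is exactly in the right degree range, giving $|(\sigma^{\lambda, n}_{x} - \sigma^{\lambda, n}_{x'})(P_1^* P_2)| \leq \eta_{P_1^* P_2}(\lambda)$, and hence \eqref{eq:AAlmostPOVM} with $\eta(\lambda) = \eta_{P_1^* P_2}(\lambda)$.

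\textbf{Main obstacle.} The only genuinely delicate point is ensuring that the degree bookkeeping aligns: the almost-POVM condition is required for sandwiches by operators coming from $\cB_n$, and this is precisely what lets $P_1^* P_2$ sit inside $\cB_{2n}$, where \eqref{eq:securityassumptionNPA} is guaranteed. Were the statement to ask for sandwiches by $\cB_{n+1}$-elements, the argument would break; the sharpness of the construction in \cref{sec:FlatExtension}, which yields extensions that agree with the original map on $\cB_{2n+1}$, is what makes the degree count work out and is the conceptual core of why the averaged-state + Arveson approach succeeds here.
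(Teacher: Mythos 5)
Your proposal is correct, and it takes a genuinely different route from the paper. The paper does not apply Arveson's Radon--Nikodym theorem directly on the averaged state $\sigma^{\lambda,n}$; instead, it constructs a separate GNS triple $(H^{\lambda}_{x,n}, \pi^{\lambda}_{x,n}, \ket{\Omega^{\lambda}_{x,n}})$ for each $\sigma^{\lambda,n}_x$, applies Arveson inside each one to obtain genuine POVMs $\hA^{\lambda,x,n}\subset\pi^{\lambda}_{x,n}(\cB)'$, and then pulls these back to the common space via intertwiners $W^{\lambda}_{x,n}\colon H^{\lambda}_n\to H^{\lambda}_{x,n}$ defined on the dense subspace by $\pi^{\lambda}_n(P)\ket{\Omega^{\lambda}_n}\mapsto\pi^{\lambda}_{x,n}(P)\ket{\Omega^{\lambda}_{x,n}}$, setting $\hA^{\lambda,n}=(W^{\lambda}_{x,n})^*\hA^{\lambda,x,n}W^{\lambda}_{x,n}$. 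That construction requires an extra argument for the well-definedness of $W^{\lambda}_{x,n}$ (based on the inclusion of null ideals implied by $\sigma^{\lambda,n}_x\leq|I_X|\sigma^{\lambda,n}$), and the almost-POVM bound then comes from the identity $\sum_a\hA^{\lambda,n}=(W^{\lambda}_{x,n})^*W^{\lambda}_{x,n}$. You bypass all of that by observing the domination chain $0\leq\sigma^{\lambda,n}_{a|x}\leq\sigma^{\lambda,n}_x\leq|I_X|\sigma^{\lambda,n}$ directly and invoking the Radon--Nikodym theorem once in the GNS of the average. Your computation in Step 3 is the same as the paper's final computation, but you reach it without the intertwiner machinery. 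In effect, your proof makes the intertwiners implicit (the domination argument is precisely what guarantees they are well-defined contractions), and is correspondingly shorter; what the paper's route buys is the explicit identity $\sum_a\hA^{\lambda,n}=(W^{\lambda}_{x,n})^*W^{\lambda}_{x,n}$, which makes the structural meaning of the almost-POVM defect slightly more transparent, but is not needed anywhere downstream. Both routes rely on the same degree bookkeeping you highlight: $P_1^*P_2\in\cB_{2n}$ is exactly where \eqref{eq:securityassumptionNPA} applies, which is why the statement is phrased for $P_1,P_2\in\cB_n$.
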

\begin{proof}
    In contrast to the strongly no-signaling scenario in~\cite{kulpe2024bound}, where a single state $\sigma$ sufficed to unambiguously form a commuting quantum strategy for $\cG$ via GNS construction, our scenario has many different states $\sigma^{\lambda, n}_x$.
    As a result, to construct one representation, we must choose a representative state $\sigma^{\lambda, n}$ that best captures the behavior of all $\sigma^{\lambda, n}_x$.
    To achieve this, we consider the average state over all $\sigma^{\lambda, n}_x$,
    \begin{align} \label{eq:AverageOfStateX}
         \sigma^{\lambda, n} = \frac{1}{|I_X|} \sum_x  \sigma^{\lambda, n}_{x}.
    \end{align}
    The average state is close to every $\sigma^{\lambda, n}_x$, i.e., for each polynomial $P \in \cB_{2n}$, there exists $\eta(\lambda)$ such that
    \begin{align*}
        \lvert (\sigma^{\lambda, n} - \sigma^{\lambda, n}_{x'})(P) \rvert \leq \frac{1}{|I_X|} \sum_x \lvert (\sigma^{\lambda, n}_x - \sigma^{\lambda, n}_{x'})(P) \rvert \leq \eta(\lambda).
    \end{align*}
    We now construct the GNS-triple $(\cH^{\lambda}_{n}, \pi^{\lambda}_{n}, \ket{\Omega^{\lambda}_{n}})$ for this average state $\sigma^{\lambda, n}$.
    This will be the desired representation.
    Clearly Bob's operators in this representation $\pi^{\lambda}_{n}(\B)$ form POVMs due to the property of $\pi^{\lambda}_{n}$.

    Let us construct Alice's operators $\hA^{\lambda, n}$ acting on $H^{\lambda}_{n}$.
    To this end, also consider GNS-triples $(H^{\lambda}_{x, n}, \pi^{\lambda}_{x, n}, \ket{\Omega^{\lambda}_{x, n}})$ for each $\sigma^{\lambda, n}_x$.
    Since for each $x$ we have $\sigma^{\lambda, n}_{a|x} \leq \sigma^{\lambda, n}_{x}$, \cref{prop:ArvesonRadonNikodym} ensures the existence of POVMs $\{ \hA^{\lambda, x, n} \} \subset \pi^{\lambda}_{x, n}(\cB)' \subset B(\cH^{\lambda}_{x, n})$ such that
    \begin{align*}
        p^{\lambda}(ab|xy) = \sigma^{\lambda, n}_{a|x}(\B) = \sandwich{\Omega^{\lambda}_{x, n}}{\hA^{\lambda, x, n} \pi^{\lambda}_{x, n}(\B)}{\Omega^{\lambda}_{x, n}}.
    \end{align*}
    The obstacle is that these POVMs $\{ \hA^{\lambda, x, n} \}$ all act on different Hilbert spaces rather than on $H^{\lambda}_{n}$.

    The remedy is to consider, for each $x$, an intertwiner map:
    \begin{align*}
        W^{\lambda}_{x, n}: H^{\lambda}_{n} \to H^{\lambda}_{x, n}, &\, \pi^{\lambda}_{n}(P) \ket{\Omega^{\lambda}_{n}} \mapsto \pi^{\lambda}_{x, n}(P) \ket{\Omega^{\lambda}_{x, n}},
    \end{align*}
    for arbitrary $P \in \mathcal{B}$.
    The well-definedness of each $W^{\lambda}_{x, n}$ is ensured because the null ideal of $\sigma^{\lambda, n}$ (i.e., $\{P \in \cB \mid \sigma^{\lambda, n}(P^*P) = 0\}$) coincides with the intersection of the null ideals of all $\sigma^{\lambda, n}_{x}$ (i.e., $\bigcap_x \{P \in \cB \mid \sigma^{\lambda, n}_{x}(P^*P) = 0\}$) due to \cref{eq:AverageOfStateX} and positivity.
    This guarantees that zero vectors in the GNS representation of $\sigma^{\lambda, n}$ are mapped to zero vectors in the GNS representations of $\sigma^{\lambda, n}_{x}$.
    Using these intertwiners, we then define Alice's measurement operators as
    \begin{align}
        \hA^{\lambda, n} &= (W^{\lambda}_{x, n})^*\hA^{\lambda, x, n} W^{\lambda}_{x, n},
    \end{align}
    By construction, one can directly check statement \textit{(iii)}
    \begin{align*}
        p^{\lambda}(ab|xy) = \sandwich{\Omega^{\lambda}_{n}}{\hA^{\lambda, n} \pi^{\lambda}_{n}(\B)}{\Omega^{\lambda}_{n}}.
    \end{align*}

Next, we show the above operators satisfy statement \textit{(i)}, i.e., $\{ \hA^{\lambda, n} \} \subset \pi^{\lambda}_{n}(\cB)'$.
    This relies on the intertwining property of $W^{\lambda}_{x, n}$, namely
    \begin{align*}
        W^{\lambda}_{x, n} \pi^{\lambda}_{n}(P) = \pi^{\lambda}_{x, n}(P) W^{\lambda}_{x, n}, \, \pi^{\lambda}_{n}(P) (W^{\lambda}_{x, n})^* = (W^{\lambda}_{x, n})^* \pi^{\lambda}_{x, n}(P).
    \end{align*}
    The first equality, for example, can be seen from
    \begin{align*}
        W^{\lambda}_{x, n} \pi^{\lambda}_{n}(P_1) \pi^{\lambda}_{n}(P_2) \ket{\Omega^{\lambda}_{n}} &= W^{\lambda}_{x, n} \pi^{\lambda}_{n}(P_1 P_2) \ket{\Omega^{\lambda}_{n}} \\
        &= \pi^{\lambda}_{x, n}(P_1 P_2) \ket{\Omega^{\lambda}_{x, n}} \\
        &= \pi^{\lambda}_{x, n}(P_1) \pi^{\lambda}_{x, n}(P_2) \ket{\Omega^{\lambda}_{x, n}}
        = \pi^{\lambda}_{x, n}(P_1) W^{\lambda}_{x, n} \pi^{\lambda}_{n}(P_2) \ket{\Omega^{\lambda}_{n}},
    \end{align*}
    for any $P_1, P_2 \in \cB$ of arbitrary degrees, and the cyclicity of $\ket{\Omega^{\lambda}_{n}}$.
    The second equality can be checked similarly.
    Using these intertwining relations and the fact that $\{ \hA^{\lambda, x, n} \} \subset \pi^{\lambda}_{x, n}(\cB)'$, a direct computation shows
    \begin{align*}
        \hA^{\lambda, n}\pi^{\lambda}_{n}(P) &= (W^{\lambda}_{x, n})^* \hA^{\lambda, x, n} W^{\lambda}_{x, n} \pi^{\lambda}_{n}(P) = (W^{\lambda}_{x, n})^* \hA^{\lambda, x, n} \pi^{\lambda}_{x, n}(P) W^{\lambda}_{x, n} \\
        &= (W^{\lambda}_{x, n})^* \pi^{\lambda}_{x, n}(P) \hA^{\lambda, x, n} W^{\lambda}_{x, n} =  \pi^{\lambda}_{n}(P) (W^{\lambda}_{x, n})^* \hA^{\lambda, x, n} W^{\lambda}_{x, n} = \pi^{\lambda}_{n}(P) \hA^{\lambda, n}.
    \end{align*}
    For the positivity claim in statement \textit{(i)}, with any $P \in \cB$ we can check that
    \begin{align*}
        &\sandwich{\Omega^{\lambda}_{n}}{\pi^{\lambda}_{n}(P)^* \cdot \hA^{\lambda, n} \cdot \pi^{\lambda}_{n}(P)}{\Omega^{\lambda}_{n}} \\
        &\quad = \sandwich{\Omega^{\lambda}_{n}}{\pi^{\lambda}_{n}(P)^* (W^{\lambda}_{x, n})^* \cdot \hA^{\lambda, x, n} \cdot W^{\lambda}_{x, n} \pi^{\lambda}_{n}(P)}{\Omega^{\lambda}_{n}} \\
        &\quad = \sandwich{\Omega^{\lambda}_{x, n}}{\pi^{\lambda}_{x, n}(P)^* \cdot \hA^{\lambda, x, n} \cdot \pi^{\lambda}_{x, n}(P)}{\Omega^{\lambda}_{x, n}} \geq 0
    \end{align*}
    by positivity of $\hA^{\lambda, x, n} \in B(\cH^{\lambda}_{x, n})$.

    Finally, statement \textit{(ii)} is verified by noting that $\hA^{\lambda, x, n}$ are POVMs, so
    \begin{align*}
        \sum_a\hA^{\lambda, n} = (W^{\lambda}_{x, n})^* (\sum_a \hA^{\lambda, x, n}) W^{\lambda}_{x, n} = (W^{\lambda}_{x, n})^* W^{\lambda}_{x, n}.
    \end{align*}
    Therefore,
    \begin{align*}
        &\lvert \sandwich{\Omega^{\lambda}_{n}}{\pi^{\lambda}_{n}(P_1) ( (W^{\lambda}_{x, n})^* W^{\lambda}_{x, n} - \id_{H^{\lambda}_{n}}) \pi^{\lambda}_{n}(P_2)}{\Omega^{\lambda}_{n}} \rvert \\
        &\quad = \lvert \sandwich{\Omega^{\lambda}_{x, n}}{\pi^{\lambda}_{x, n}(P_1) \pi^{\lambda}_{x, n}(P_2)}{\Omega^{\lambda}_{x, n}} - \sandwich{\Omega^{\lambda}_{n}}{\pi^{\lambda}_{n}(P_1) \pi^{\lambda}_{n}(P_2)}{\Omega^{\lambda}_{n}} \rvert \\
        &\quad = \lvert (\sigma^{\lambda, n}_{x} - \sigma^{\lambda, n})(P_1 P_2) \rvert \leq \eta(\lambda),
    \end{align*}
    which is bounded by $\eta(\lambda)$ for $P_1, P_2 \in \cB_n$ where $\mathrm{deg}(P_1), \mathrm{deg}(P_2) \leq n = \mathrm{poly}(\lambda)$.
\end{proof}

With the quantum representation constructed by \cref{prop:GNSConstructionCommonSpace}, the compiled Bell score for $\cG_{\comp}$ with QPT strategy $S$ can be expressed as
\begin{align}\label{eq:BellValueFiniteLambdaFiniteNPA}
    \gamevalueCompile{\lambda}{\cG_{\comp}, S} := \langle p^{\lambda}, \vec{\beta} \rangle = \sandwich{\Omega^{\lambda}_{n}}{ \beta(\hA^{\lambda, n}, \pi^{\lambda}_{n}(\B))}{\Omega^{\lambda}_{n}}.
\end{align}
Observe that, in general, $\gamevalueCompile{\lambda}{\cG_{\comp}}$ can be larger than the optimal commuting score $\gamevalueqcopt{\cG}$, since the prover can potentially use the weak signaling allowed by \cref{eq:securityassumptionOriginal} to cheat for a higher score.

The goal now is to relate the constructed representation in \cref{prop:GNSConstructionCommonSpace} to the $n$-th level sequential NPA hierarchy \cref{eq:SequentialNPASDP}.
The gap to \cref{eq:SequentialNPASDP}, however, is the signaling effect in \cref{eq:AAlmostPOVM}.

\subsection{Signaling/non-signaling decompositions}\label{sec:TechnicalDecomposition}
Following the observation above, it is important to quantify the signaling effect on the compiled Bell score in order to identify with the sequential NPA hierarchy.
Therefore, this section contains the main technical result (\cref{prop:DecompositionNSSI}) inspired by the approach in~\cite{renou2017inequivalence}: using group representation theory, we are able to identify the parts of $\hA^{\lambda, n}$ that are no-signaling and signaling, and consequently bound the advantage of signaling with negligible functions.

We begin with the observation that $\sum_a \hA^{\lambda, n}$ can be dominated by $\id_{H^{\lambda}_{n}}$ on the low-degree subspace upon rescaling.
Remark that the identification with a finite NPA level $n$ is crucial to the following technical lemma.
\begin{lemma}\label{lem:sumAaxdominated}
    Consider the quantum representation constructed in \cref{prop:GNSConstructionCommonSpace}.
    Denote by $V_n = \mathrm{span}\{\pi^{\lambda}_{n}(w) \ket{\Omega^{\lambda}_{n}} \mid w \in \cB_{n} \}$ the $n$-degree subspace.
    Then, there exists an $n$-dependent negligible function $\etaLem_n(\lambda)$ such that
    \begin{align}
        \sandwich{\Omega^{\lambda}_{n}}{\pi^{\lambda}_{n}(P)^* \left(\id_{H^{\lambda}_{n}} - \frac{1}{1+\dim(V_n)\etaLem_n(\lambda)} \sum_a \hA^{\lambda, n} \right) \pi^{\lambda}_{n}(P)}{\Omega^{\lambda}_{n}} \geq 0,
    \end{align}
    for any $P \in \cB_n$.
    
    In other words, by rescaling with the dimension of $V_n$, the operator $\id_{H^{\lambda}_{n}} - \frac{1}{1+\dim(V_n)\etaLem_n(\lambda)} \sum_a \hA^{\lambda, n}$ remains positive semidefinite on the low-degree subspace.
    Note that $\dim(V_n) \leq \exp(n)$ for some exponential function in $n$.
\end{lemma}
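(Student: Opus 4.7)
The plan is to upgrade the entrywise near-vanishing of $T := \sum_a \hA^{\lambda,n} - \id_{H^{\lambda}_{n}}$ on $V_n$, supplied by the almost-POVM property \cref{prop:GNSConstructionCommonSpace}(ii), into an operator inequality $T|_{V_n} \preceq \dim(V_n)\,\etaLem_n(\lambda)\,\id_{V_n}$. Once this is in hand, the quadratic-form bound stated in the lemma is immediate by rearrangement: it is equivalent to $\langle v, \sum_a \hA^{\lambda,n} v\rangle \leq (1 + \dim(V_n)\etaLem_n(\lambda))\,\|v\|^2$ for every $v = \pi^{\lambda}_n(P)\ket{\Omega^{\lambda}_n}$ with $P \in \cB_n$. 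The conversion from ``all matrix entries small'' to ``operator norm small'' is standard finite-dimensional linear algebra---in any orthonormal basis the operator norm of a self-adjoint matrix is at most its Frobenius norm, which is at most the largest entry times the dimension---and this is precisely the origin of the multiplicative $\dim(V_n)$ factor.

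Concretely, I would fix an orthonormal basis $\{e_i\}_{i=1}^{\dim V_n}$ of $V_n$ whose elements have the form $e_i = \pi^{\lambda}_n(P_i)\ket{\Omega^{\lambda}_n}$ for polynomials $P_i \in \cB_n$, obtained by Gram--Schmidt applied to the monomial spanning set $\{\pi^{\lambda}_n(w)\ket{\Omega^{\lambda}_n} : w \text{ a word of length} \leq n \text{ in the } \B\}$. Applying \cref{prop:GNSConstructionCommonSpace}(ii) pairwise yields negligible functions $\eta_{ij}(\lambda)$ with $|\langle e_i, T e_j\rangle| \leq \eta_{ij}(\lambda)$, and setting $\etaLem_n(\lambda) := \max_{i,j}\eta_{ij}(\lambda)$ bounds every matrix entry of $T|_{V_n}$ by a single negligible function (for each fixed $n$). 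For any $v = \sum_i c_i e_i \in V_n$, the triangle inequality together with the Cauchy--Schwarz estimate $\|c\|_1 \leq \sqrt{\dim(V_n)}\,\|c\|_2$ gives
\begin{align*}
    \bigl|\langle v, T v\rangle\bigr| \leq \sum_{i,j}|c_i|\,|c_j|\,|\langle e_i, T e_j\rangle| \leq \etaLem_n(\lambda)\,\|c\|_1^2 \leq \etaLem_n(\lambda)\,\dim(V_n)\,\|v\|^2,
\end{align*}
which is exactly the required operator inequality. The dimension estimate $\dim(V_n) \leq \exp(n)$ is immediate by counting words of length $\leq n$ in the $|I_B||I_Y|$ generators $\B$.

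The main subtlety I expect to confront is that the Gram--Schmidt polynomials $P_i$ are themselves $\lambda$-dependent (through the $\lambda$-dependent Gram matrix of the monomial spanning set in the representation $\pi^{\lambda}_n$), whereas \cref{eq:securityassumptionNPA} only guarantees negligibility per fixed polynomial. The resolution is to carry out the estimate in the $\lambda$-independent monomial basis $\{w_k\}$ of $\cB_n$: writing $P_i = \sum_k \alpha_{ik}^{\lambda} w_k$, one bounds $|\langle e_i, T e_j\rangle|$ by $\max_{k,l}\eta_{w_k,w_l}(\lambda)$ (a maximum of finitely many, $\lambda$-independently indexed negligible functions) times a sesquilinear expression in the coefficient vectors whose prefactor depends only on the dimensions and on the smallest nonzero eigenvalue of the monomial Gram matrix. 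Since the lemma permits $\etaLem_n$ to depend arbitrarily on $n$, this purely $n$- and conditioning-dependent prefactor can be absorbed into the final constant, provided the QHE scheme's negligibility dominates the resulting blow-up. This bookkeeping is precisely what justifies the explicit $\dim(V_n) \leq \exp(n)$ factor in the statement and restricts its usefulness to regimes where $n \leq \poly(\lambda)$.
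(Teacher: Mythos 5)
Your argument follows the same route as the paper's: bound the matrix entries of $\sum_a\hA^{\lambda,n}-\id$ on $V_n$ by a single negligible function $\etaLem_n(\lambda)$ via \cref{prop:GNSConstructionCommonSpace}(ii), pass from the max-norm to the operator norm with the $\dim(V_n)$ blow-up (the paper routes this through the Frobenius norm, you through $\|c\|_1\le\sqrt{\dim(V_n)}\|c\|_2$; these are the same estimate), and then rescale so that the perturbed operator is dominated by the identity on $V_n$. The paper does not explicitly introduce an orthonormal basis; it chooses ``a basis $\{\pi^\lambda_n(P_i)\ket{\Omega^\lambda_n}\}$'' of fixed polynomials $P_i$ and treats $\sandwich{\Omega^\lambda_n}{\pi^\lambda_n(P_i)(\cdot)\pi^\lambda_n(P_j)}{\Omega^\lambda_n}$ as the matrix entries, whereas you spell out the Gram--Schmidt step. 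So the core is identical; the extra content in your write-up is your last paragraph.

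On that last paragraph: you correctly notice a real subtlety, namely that any orthonormal basis of $V_n$ of the form $\pi^\lambda_n(P_i)\ket{\Omega^\lambda_n}$ has $\lambda$-dependent polynomials $P_i$ (the Gram matrix of the monomial spanning set depends on $\lambda$ through $\sigma^{\lambda,n}$), while \cref{eq:AAlmostPOVM} only furnishes a negligible function \emph{per fixed polynomial pair}. However, your proposed repair does not close this gap. You bound $|\langle e_i, T e_j\rangle|$ by $\max_{k,l}\eta_{w_k,w_l}(\lambda)$ times ``a sesquilinear expression\ldots whose prefactor depends only on the dimensions and on the smallest nonzero eigenvalue of the monomial Gram matrix,'' and then say this prefactor is ``purely $n$- and conditioning-dependent'' and can be absorbed into $\etaLem_n$. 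But that smallest nonzero eigenvalue is a function of $\lambda$ (it is an eigenvalue of the $\lambda$-dependent moment matrix of $\sigma^{\lambda,n}$), so the prefactor is $\lambda$-dependent, not merely $n$-dependent. Absorbing a $\lambda$-dependent factor into the constant of a negligible function is not legitimate unless one separately argues that the spectral gap of the Gram matrix is bounded away from zero uniformly in $\lambda$, which your proof does not do and which need not hold in general (the null space of the Gram matrix, and the size of its smallest nonzero eigenvalue, can vary with the state). In short: the approach is the right one and matches the paper's; the additional subtlety you identified is genuine and worth flagging, but the fix you sketch is not yet complete --- the ``provided the QHE scheme's negligibility dominates the resulting blow-up'' caveat is exactly the unproven step, and as it stands it is an assumption rather than a conclusion.
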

\begin{proof}
    Since there are only finitely many monomials $w \in \cB_{n}$, $V_n$ is finite-dimensional, and therefore there exists a basis $\{\pi^{\lambda}_{n}(P_i) \ket{\Omega^{\lambda}_{n}} \}$ associated with a finite set of polynomials $\{P_i \in \cB_n \}$. Let $\Pi \in B(H^{\lambda}_n)$ be the projection to $V_n$.

    By \cref{eq:AAlmostPOVM}, for each $P_i, P_j$, it holds that there exists an $\eta_{ij}(\lambda)$ such that
    \begin{align*}
        \lvert \sandwich{\Omega^{\lambda}_{n}}{\pi^{\lambda}_{n}(P_i) (\sum_a\hA^{\lambda, n} - \id_{H^{\lambda}_{n}}) \pi^{\lambda}_{n}(P_j)} {\Omega^{\lambda}_{n}} \rvert \leq \eta_{ij}(\lambda).
    \end{align*}
    Define $\etaLem_n(\lambda) := \max_{ij} \eta_{ij}(\lambda)$, it follows that
    \begin{align*}
        \lvert \sandwich{\Omega^{\lambda}_{n}}{\pi^{\lambda}_{n}(P_i) \left( \Pi(\sum_a\hA^{\lambda, n} - \id_{H^{\lambda}_{n}}) \Pi \right) \pi^{\lambda}_{n}(P_j)} {\Omega^{\lambda}_{n}} \rvert \leq \etaLem_n(\lambda)
    \end{align*}
    for all $i$ and $j$. That is, for the matrix $\Pi(\sum_a\hA^{\lambda, n} - \id_{H^{\lambda}_{n}}) \Pi$ acting on the finite-dimensional space $V_n$, we have $\etaLem_n(\lambda)$ upper-bounding all the matrix elements, i.e., the max norm
    \begin{align*}
        \lVert \Pi(\sum_a\hA^{\lambda, n})\Pi - \id_{V_{n}}) \rVert_{\max} \leq \etaLem_n(\lambda).
    \end{align*}
    Due to the fact that the operator norm is upper-bounded by the Frobenius norm, for any matrix $M$ on $V_n$ we have
    \begin{align*}
        \lVert M \rVert_{\op}^2 \leq \lVert M \rVert_{F}^2 = \sum_{ij} \rvert M_{ij} \rvert^2 \leq \dim(V_n)^2 \lVert M \rVert_{\max}^2.
    \end{align*}
    Since $\dim(V_n) \leq \sum_{k=0}^n (|I_B| \cdot |I_Y|)^k$, we have an operator norm bound
    \begin{align*}
        \lVert \Pi(\sum_a\hA^{\lambda, n})\Pi - \id_{V_{n}}) \rVert_{\op} \leq \dim(V_n)\etaLem_n(\lambda) \leq \exp(n)\etaLem_n(\lambda).
    \end{align*}
    Note this norm conversion bound is the tightest general bound; therefore it is not likely to have a better dependence than $\dim(V_n)$ unless better initial bounds are available (e.g., a uniform bound for all $P$).

    It follows that all eigenvalues of $\Pi (\sum_a\hA^{\lambda, n}) \Pi$ are within the interval $[1-\dim(V_n)\etaLem_n(\lambda), 1+\dim(V_n)\etaLem_n(\lambda)]$.
    Hence $(\id_{V_{n}} - \frac{1}{1+\dim(V_n)\etaLem_n(\lambda)}\Pi (\sum_a\hA^{\lambda, n}) \Pi)$ admits only nonnegative eigenvalues and consequently is positive semidefinite.
    We conclude by noting that for every $P \in \cB_n$
    \begin{align*}
        0 &\leq \sandwich{\Omega^{\lambda}_{n}}{\pi^{\lambda}_{n}(P)^* \left( (\id_{V_{n}} - \frac{1}{1+\dim(V_n)\etaLem_n(\lambda)}\Pi(\sum_a\hA^{\lambda, n}) \Pi)\right) \pi^{\lambda}_{n}(P)} {\Omega^{\lambda}_{n}} \\
        &= \sandwich{\Omega^{\lambda}_{n}}{\pi^{\lambda}_{n}(P)^* \left( \id_{H^{\lambda}_{n}} - \frac{1}{1+\dim(V_n)\etaLem_n(\lambda)}\sum_a\hA^{\lambda, n}\right) \pi^{\lambda}_{n}(P)} {\Omega^{\lambda}_{n}}.
    \end{align*}
\end{proof}

The following proposition provides a systematic method for decomposing the measurement operators $\hA^{\lambda, n}$ into three parts: a no-signaling component $\hA^{\lambda, n}(\mathrm{NS})$, a signaling component $\hA^{\lambda, n}(\mathrm{SI})$, and a residue component $\hA^{\lambda, n}(\mathrm{res})$ that ensures overall physicality (i.e., positivity).
This decomposition is not only central to the discussion in \cref{sec:MainTheorems}, but may also offer interesting insights into related questions, such as the role of signaling effects in quantum steering.
\begin{proposition}\label{prop:DecompositionNSSI}
    Consider the quantum strategy as constructed in \cref{prop:GNSConstructionCommonSpace} for a QPT strategy $S$ of a compiled Bell game $\cG_{\comp}$ with respect to NPA level $n$.
    Then, there exists a decomposition
    \begin{align}\label{eq:DecompositionNSSIRes}
        \hA^{\lambda, n} = \hA^{\lambda, n}(\mathrm{NS}) + \hA^{\lambda, n}(\mathrm{SI}) + \frac{\dim(V_n)\etaLem_n(\lambda)}{1+\dim(V_n)\etaLem_n(\lambda)} \hA^{\lambda, n}(\mathrm{res}),
    \end{align}
    where $V_n = \mathrm{span}\{\pi^{\lambda}_{n}(w) \ket{\Omega^{\lambda}_{n}} \mid w \in \cB_{n} \}$ and $\etaLem_n(\lambda)$ is the same negligible function constructed in \cref{lem:sumAaxdominated}.
    Furthermore,
    \begin{enumerate}[label=(\roman*)]
        \item $\hA^{\lambda, n}(\mathrm{NS}), \hA^{\lambda, n}(\mathrm{SI}), \hA^{\lambda, n}(\mathrm{res}) \in \pi^{\lambda}_n(\cB)' \subset B(H^{\lambda}_{n})$, i.e., commutativity is preserved with the decomposition.
        \item For each $P \in \cB_{2n}$, there exists $\eta(\lambda)$ such that $\lvert \sandwich{\Omega^{\lambda}_{n}}{\hA^{\lambda, n}(\mathrm{SI}) \pi^{\lambda}_n(P)}{\Omega^{\lambda}_{n}} \rvert \leq \eta(\lambda)$, i.e., the contribution from the signaling effect from Alice to Bob is negligible for low-degree polynomials.
        \item $\sandwich{\Omega^{\lambda}_{n}}{\pi^{\lambda}_n(P_1) (\sum_a \hA^{\lambda, n}(\mathrm{NS})) \pi^{\lambda}_n(P_2)}{\Omega^{\lambda}_{n}} = \sandwich{\Omega^{\lambda}_{n}}{ \pi^{\lambda}_n(P_1) \cdot \id_{H^{\lambda}_n} \cdot \pi^{\lambda}_n(P_2)}{\Omega^{\lambda}_{n}}$ for any $P_1, P_2 \in \cB_n$, i.e., no-signaling on low-degree polynomial subspace.
        \item $\sandwich{\Omega^{\lambda}_{n}}{\pi^{\lambda}_n(P)^* \hA^{\lambda, n}(\mathrm{NS}) \pi^{\lambda}_n(P)}{\Omega^{\lambda}_{n}} \geq 0$ for any $P \in \cB_n$, i.e., $\hA^{\lambda, n}(\mathrm{NS})$ is positive on the low-degree polynomial subspace.
    \end{enumerate}
    Observe from (iii) and (iv) that $\hA^{\lambda, n}(\mathrm{NS})$ satisfies POVM conditions but only on the low-degree polynomial subspace $\cB_{n}$.
\end{proposition}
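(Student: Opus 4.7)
The plan is to construct the three components explicitly using the rescaling bound from \cref{lem:sumAaxdominated}, and then verify each of the four stated properties by direct computation.

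Set $c := \dim(V_n)\etaLem_n(\lambda)$, which is negligible in $\lambda$ for any fixed $n$. \cref{lem:sumAaxdominated} tells us that $\frac{1}{1+c}\sum_{a'}\hat{A}^{\lambda, n}_{a'|x}$ is dominated by $\id_{H^{\lambda}_n}$ in matrix elements against $\pi^{\lambda}_n(P)\ket{\Omega^{\lambda}_n}$ for $P \in \cB_n$. This motivates completing the rescaled operators $\frac{1}{1+c}\hA^{\lambda, n}$ into a genuine POVM by absorbing the ``missing mass'' uniformly over the $|I_A|$ outputs. Concretely, I would set
\begin{align*}
\hA^{\lambda, n}(\mathrm{NS}) &:= \frac{1}{1+c}\hA^{\lambda, n} + \frac{1}{|I_A|}\Bigl(\id_{H^{\lambda}_n} - \frac{1}{1+c}\sum_{a'}\hat{A}^{\lambda, n}_{a'|x}\Bigr), \\
\hA^{\lambda, n}(\mathrm{res}) &:= \hA^{\lambda, n},
\end{align*}
and let $\hA^{\lambda, n}(\mathrm{SI}) := -\frac{1}{|I_A|}\bigl(\id_{H^{\lambda}_n} - \frac{1}{1+c}\sum_{a'}\hat{A}^{\lambda, n}_{a'|x}\bigr)$ be the unique operator making \cref{eq:DecompositionNSSIRes} hold, as can be verified by direct substitution.

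With this choice, $\sum_a \hA^{\lambda, n}(\mathrm{NS}) = \id_{H^{\lambda}_n}$ holds as an exact operator identity, immediately delivering (iii). Property (i) follows because each $\hat{A}^{\lambda, n}_{a'|x}$ lies in $\pi^{\lambda}_n(\cB)'$ by \cref{prop:GNSConstructionCommonSpace}(i) and $\id \in \pi^{\lambda}_n(\cB)'$, so every summand belongs to the commutant. Property (iv) splits into two non-negative contributions on $V_n$: $\frac{1}{1+c}\hA^{\lambda, n}$ is operator-positive by \cref{prop:GNSConstructionCommonSpace}(i), and the ``completion'' term is matrix-element positive on $V_n$ precisely by \cref{lem:sumAaxdominated}.

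The main subtlety is property (ii), which requires a negligible bound on $\sandwich{\Omega^{\lambda}_n}{\hA^{\lambda, n}(\mathrm{SI})\pi^{\lambda}_n(P)}{\Omega^{\lambda}_n}$ for $P \in \cB_{2n}$, whereas \cref{eq:AAlmostPOVM} only controls polynomials in $\cB_n$. I would bridge this by splitting $\id - \frac{1}{1+c}\sum_{a'}\hat{A}^{\lambda, n}_{a'|x} = \bigl(\id - \sum_{a'}\hat{A}^{\lambda, n}_{a'|x}\bigr) + \frac{c}{1+c}\sum_{a'}\hat{A}^{\lambda, n}_{a'|x}$ and handling each term separately. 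For the first, factoring a monomial $P = P_1 P_2$ with $P_1, P_2 \in \cB_n$ and using commutativity of $\hat{A}^{\lambda, n}_{a'|x}$ with $\pi^{\lambda}_n(P_1)$ recasts the relevant matrix element in the form directly controlled by \cref{eq:AAlmostPOVM}. For the second, the prefactor $c/(1+c)$ is already negligible, while the sandwich of $\sum_{a'}\hat{A}^{\lambda, n}_{a'|x}$ with $\pi^{\lambda}_n(P)$ is bounded by $\sigma^{\lambda, n}(P)$ plus a negligible error via one more application of \cref{eq:AAlmostPOVM}. I expect the main difficulty to be exactly this degree-doubling step, which has to correctly chain the commutativity of Alice's operators with the two distinct negligible sources (the QHE-derived weakly no-signaling bound in \cref{eq:AAlmostPOVM} and the $c$-rescaling from \cref{lem:sumAaxdominated}); the remaining verifications reduce to straightforward algebra.
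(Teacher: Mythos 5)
Your proof is correct, and it takes a genuinely different route from the paper's. The paper constructs the decomposition via Young symmetrizers of the symmetric groups $S_{|I_A|}$ and $S_{|I_X|}$ acting on the indices $a$ and $x$, setting $\hA^{\lambda,n}(\mathrm{SI}) = \Pi^a_0\Pi^x_1(\hA^{\lambda,n})$, $\hA^{\lambda,n}(\mathrm{NS}) = \Pi^a_0\Pi^x_0(\hA^{\lambda,n}) + \tfrac{1}{1+c}\Pi^a_1(\hA^{\lambda,n})$, and $\hA^{\lambda,n}(\mathrm{res}) = \Pi^a_1(\hA^{\lambda,n})$; property (ii) is then established by expressing each non-trivial Young symmetrizer applied to the marginal as a finite difference of states $\sigma^{\lambda,n}_x - \sigma^{\lambda,n}_{x'}$ and invoking the weak no-signaling bound. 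Your decomposition instead rescales $\hA^{\lambda,n}$ by $\tfrac{1}{1+c}$ and completes it into an exact POVM by distributing the deficit $\tfrac{1}{|I_A|}(\id - \tfrac{1}{1+c}\sum_{a'}\hat{A}^{\lambda,n}_{a'|x})$ uniformly over outputs, putting the negative of that deficit into the SI slot and $\hA^{\lambda,n}$ itself into the res slot. The two constructions coincide only on the low-degree subspace (the paper notes that $\Pi^a_0\Pi^x_0(\hA^{\lambda,n})$ agrees with $\tfrac{1}{|I_A|}\id$ there). Your route is more elementary — it avoids representation theory entirely and gives $\sum_a \hA^{\lambda,n}(\mathrm{NS}) = \id$ as an exact operator identity rather than merely on $V_n$, which is slightly stronger than what (iii) requires. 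What it sacrifices is the physical reading: the paper's SI term really is the ``non-invariant-in-$x$'' part of Bob's marginal, which the authors advertise as a reusable technique for isolating signaling effects, whereas your SI is a bookkeeping correction with no intrinsic interpretation. Two small remarks on the argument for (ii): the factorization $P = P_1 P_2$ applies to monomials, so for a general $P\in\cB_{2n}$ you must expand into finitely many monomials and take the max of the per-term negligible functions (as the paper does implicitly), and for the second piece it suffices to note $\lVert\sum_{a'}\hat{A}^{\lambda,n}_{a'|x}\rVert_{op} = \lVert W^{\lambda}_{x,n}\rVert^2 \leq |I_X|$ rather than invoking \cref{eq:AAlmostPOVM} a second time.
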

\begin{proof}
    Let us use physical intuition to identify the signaling part of $\hA^{\lambda, n}$ from Alice to Bob.
    To Bob, all he can see from Alice is the effect of the marginal $\sum_a \hA^{\lambda, n}$, or equivalently the average over the symbol $a$.
    Suppose that there is no signaling at all, then to Bob the marginal $\sum_a \hA^{\lambda, n}$ should be $x$-label invariant. Consequently, the complement of the $x$-invariant part of $\sum_a \hA^{\lambda, n}$---the part that is sensitive to any change in $x$---represents the signaling effect from Alice to Bob.
    It turns out the symmetric group and its representation theory are the best for describing our physical intuition, which we adapt in our proof.
    
    \proofstep{1}{Notation of symmetry group representation and Young symmetrizers:}
    Let the symmetric group $S_{|I_A|}$ act on $\hA^{\lambda, n}$ by permuting the $a$ index, $s: \hA^{\lambda, n} \mapsto \hat{A}^{\lambda, n}_{s(a)|x}$.
    (Note that they are merely symbolic actions on $\hA^{\lambda, n}$ rather than a full action on $B(\cH^{\lambda}_n)$.)
    Denote by $\Pi^a_{\mu}$ the normalized Young symmetrizer of the tableaux $\mu$, and $\mu = 0$ for the trivial tableaux, and define
    \begin{equation*}
        \begin{aligned}
        \Pi^a_0 &= \Pi^a_{\mu = 0}, \\
        \Pi^a_1 &= \sum_{\mu \neq 0} \Pi^a_{\mu}.
        \end{aligned}
    \end{equation*}
    Then $\Pi^a_0(\hA^{\lambda, n})$ is precisely the average over symbols $a$ (i.e., the marginal), while $S_{|I_A|}$ acts non-trivially on $\Pi^a_1(\hA^{\lambda, n})$, such that $\Pi^a_0(\hA^{\lambda, n}) + \Pi^a_1(\hA^{\lambda, n}) = \hA^{\lambda, n}$. Also, they are mutually orthogonal in the sense that $\Pi^a_0 \Pi^a_1(\hA^{\lambda, n}) = \Pi^a_1 \Pi^a_0(\hA^{\lambda, n}) = 0$.
    
    Analogously, consider the symmetric group $S_{|I_X|}$ acting on $\hA^{\lambda, n}$ by permuting the $x$ index, $s: \hA^{\lambda, n} \mapsto \hat{A}^{\lambda, n}_{a|s(x)}$. We similarly denote by $\Pi^x_{\mu}$ the Young symmetrizers and define
    \begin{equation*}
        \begin{aligned}
        \Pi^x_0 &= \Pi^x_{\mu = 0}, \\
        \Pi^x_1 &= \sum_{\mu \neq 0} \Pi^x_{\mu}.
        \end{aligned}
    \end{equation*}
    We also have that $\Pi^x_0(\hA^{\lambda, n}) + \Pi^x_1(\hA^{\lambda, n}) = \hA^{\lambda, n}$ and $\Pi^x_0 \Pi^x_1(\hA^{\lambda, n}) = \Pi^x_1 \Pi^x_0(\hA^{\lambda, n}) = 0$.
    It is clear from the definition that the action of $\Pi^a_i$ commutes with $\Pi^x_j$ on $\hA^{\lambda, n}$, so we can unambiguously apply them jointly.

    \proofstep{2}{Identifying the signaling contribution}
    Following from the above remark, the signaling part then corresponds to the marginal of Bob, i.e., $\Pi^a_0$, that is purely non-invariant under permutation of $x$, i.e., $\Pi^x_1$. Thus we define the signaling contribution by
    \begin{align}
        \hA^{\lambda, n}(\mathrm{SI}) &= \Pi^a_0 \Pi^x_1 (\hA^{\lambda, n}),
    \end{align}
    which lies in $\pi^{\lambda}_n(\cB)'$ as it is a linear combination of $\hA^{\lambda, n}$.
    
    \proofstep{3}{Checking (ii) bound on signaling part for low-degrees:}
    For any nontrivial Young diagram $\mu$, the associated Young symmetrizer $\Pi^x_{\mu}$ can be written as the difference of two equally-sized sums of permutations, each having at most ${|I_X|!/2}$ many terms~\cite{procesi2007lie}.
    Consequently, when applied to $\sum_a \hA^{\lambda, n}$, one sees that $\sandwich{\Omega^{\lambda}_{n}}{ \Pi^x_{\mu} (\sum_a \hA^{\lambda, n}) \pi^{\lambda}_n(P) }{\Omega^{\lambda}_{n}}$ is the sum of at most ${|I_X|!/2}$ many terms as
    \begin{align*}
        \sum_a \sandwich{\Omega^{\lambda}_{n}}{ (\hA^{\lambda, n} - \hat{A}^{\lambda, n}_{a|x'} ) \pi^{\lambda}_n(P)}{\Omega^{\lambda}_{n}} = \sigma^{\lambda, n}_x(P) - \sigma^{\lambda, n}_{x'}(P).
    \end{align*}
    Thus, for any $P \in \cB_{2n}$,
    \begin{align*}
        \lvert \sandwich{\Omega^{\lambda}_{n}}{ \hA^{\lambda, n}(\mathrm{SI}) \pi^{\lambda}_n(P)}{\Omega^{\lambda}_{n}} \rvert 
        &= \lvert \sandwich{\Omega^{\lambda}_{n}}{ \Pi^a_0 \Pi^x_1 (\hA^{\lambda, n}) \pi^{\lambda}_n(P)}{\Omega^{\lambda}_{n}} \rvert \\
        &= \lvert \frac{1}{|I_A|}\sum_{\mu \neq 0} \sandwich{\Omega^{\lambda}_{n}}{ \left( \Pi^x_{\mu} (\sum_a \hA^{\lambda, n}) \right) \pi^{\lambda}_n(P) }{\Omega^{\lambda}_{n}} \rvert \\
        &\leq C_G \lvert \sigma^{\lambda, n}_x(P) - \sigma^{\lambda, n}_{x'}(P) \rvert \leq \eta(\lambda),
    \end{align*}
    for some constant $C_G$ depending on the game setting $I_A, I_X$, which can be absorbed into the negligible function of $P$.

    \proofstep{4}{Constructing the no-signaling and the residual part:}
    It remains to identify $\hA^{\lambda, n}(\mathrm{NS})$, the component that appears to be POVM on the low-degree subspace $\cB_n$.
    One natural choice is the complement of the signaling contribution, i.e.,
    \begin{align*}
        \hA^{\lambda, n} - \Pi^a_0 \Pi^x_1 (\hA^{\lambda, n}) = \Pi^a_0 \Pi^x_0 (\hA^{\lambda, n}) + \Pi^a_1 (\hA^{\lambda, n}).
    \end{align*}
    However, while it satisfies \emph{(i), (iii)}, it fails condition \emph{(iv)} due to the fact that $\Pi^a_1 (\hA^{\lambda, n})$ can be negative.
    Therefore, the correct definition is by rescaling $\Pi^a_1 (\hA^{\lambda, n})$ to make it less harmful to the overall positivity.
    Thanks to \cref{lem:sumAaxdominated}, we already have a candidate for the scaling factor and may define
    \begin{align}
        \hA^{\lambda, n}(\mathrm{NS}) = \Pi^a_0 \Pi^x_0 (\hA^{\lambda, n}) + \frac{1}{1+\dim(V_n)\etaLem_n(\lambda)}\Pi^a_1 (\hA^{\lambda, n}).
    \end{align}
    Consequently, the residual part is simply
    \begin{align}
        \hA^{\lambda, n}(\mathrm{res}) = \Pi^a_1 (\hA^{\lambda, n})
    \end{align}
    so that \cref{eq:DecompositionNSSIRes} holds.

    \proofstep{5}{Verifying (iii) the low-degree no-signaling:}
    To this end, observe that $\sum_a \Pi^a_1 (\hA^{\lambda, n}) = |I_A| \Pi^a_0\Pi^a_1 (\hA^{\lambda, n}) = 0$ and $\sum_a \Pi^a_0(\hA^{\lambda, n}) = |I_A|\Pi^a_0 (\hA^{\lambda, n})$.
    So for any $P_1, P_2 \in \cB_n$ we have
    \begin{align*}
        &\sandwich{\Omega^{\lambda}_{n}}{\pi^{\lambda}_n(P_1) \left( \sum_a \hA^{\lambda, n}(\mathrm{NS}) \right) \pi^{\lambda}_n(P_2)}{\Omega^{\lambda}_{n}} \\
        &\quad\quad = \sandwich{\Omega^{\lambda}_{n}}{\pi^{\lambda}_n(P_1) \left( \sum_a \Pi^a_0 \Pi^x_0(\hA^{\lambda, n}) \right) \pi^{\lambda}_n(P_2)}{\Omega^{\lambda}_{n}} \\
        &\quad\quad = |I_A| \frac{1}{|I_A| |I_X|} \sum_{a,x}\sandwich{\Omega^{\lambda}_{n}}{\pi^{\lambda}_n(P_1) \hA^{\lambda, n} \pi^{\lambda}_n(P_2)}{\Omega^{\lambda}_{n}} \\
        &\quad\quad = \frac{1}{|I_X|} \sum_{a, x} \sigma^{\lambda, n}_{a|x}(P_1 P_2) = \sigma^{\lambda, n}(P_1 P_2) 
        = \sandwich{\Omega^{\lambda}_{n}}{\pi^{\lambda}_n(P_1) \left(\id_{H^{\lambda}_n} \right) \pi^{\lambda}_n(P_2)}{\Omega^{\lambda}_{n}},
    \end{align*}
    as desired.
    Observe that the above calculation also shows that $\Pi^a_0 \Pi^x_0(\hA^{\lambda, n})$ is the same as $\frac{1}{|I_A|}\id_{H^{\lambda}_n}$ in the low-degree subspace, which will be useful for the next step.

    \proofstep{6}{Checking (iv) positivity on low-degrees:}
    Note that
    \begin{align*}
        \hA^{\lambda, n}(\mathrm{NS}) &=  \Pi^a_0 \Pi^x_0 (\hA^{\lambda, n}) + \frac{1}{1+\dim(V_n)\etaLem_n(\lambda)}\Pi^a_1 (\hA^{\lambda, n}) \\
        &= \frac{1}{1+\dim(V_n)\etaLem_n(\lambda)}\hA^{\lambda, n} + \left(\Pi^a_0 \Pi^x_0 (\hA^{\lambda, n}) - \frac{1}{1+\dim(V_n)\etaLem_n(\lambda)}\Pi^a_0(\hA^{\lambda, n}) \right).
    \end{align*}
    Hence, it follows from \cref{lem:sumAaxdominated}, the positivity of $\hA^{\lambda, n}$, and the final observation of \emph{Step 5} that 
    \begin{align*}
        &\sandwich{\Omega^{\lambda}_{n}}{\pi^{\lambda}_n(P)^* \hA^{\lambda, n}(\mathrm{NS}) \pi^{\lambda}_n(P)}{\Omega^{\lambda}_{n}} \\
        &\quad = \frac{1}{1+\etaLem_n(\lambda)}\sandwich{\Omega^{\lambda}_{n}}{\pi^{\lambda}_n(P)^* \hA^{\lambda, n} \pi^{\lambda}_n(P)}{\Omega^{\lambda}_{n}} \\
        &\quad\quad + \sandwich{\Omega^{\lambda}_{n}}{\pi^{\lambda}_n(P)^* \left(\Pi^a_0 \Pi^x_0 (\hA^{\lambda, n}) - \frac{1}{1+\dim(V_n)\etaLem_n(\lambda)}\Pi^a_0(\hA^{\lambda, n}) \right) \pi^{\lambda}_n(P)}{\Omega^{\lambda}_{n}} \\
        &\quad \geq \frac{1}{|I_A|}\sandwich{\Omega^{\lambda}_{n}}{\pi^{\lambda}_n(P)^* \left(\id_{H^{\lambda}_n} - \frac{1}{1+\dim(V_n)\etaLem_n(\lambda)}\sum_a(\hA^{\lambda, n}) \right) \pi^{\lambda}_n(P)}{\Omega^{\lambda}_{n}} \geq 0
    \end{align*}
    for every $P \in \cB_n$.
\end{proof}

\subsection{Quantitative characterization of compiled Bell games}\label{sec:MainTheorems}
The decomposition \cref{prop:DecompositionNSSI} gives rise to $\hA^{\lambda, n}(\mathrm{NS})$, $\hA^{\lambda, n}(\mathrm{SI})$, and $\hA^{\lambda, n}(\mathrm{res})$.
Let us analyze each of them individually.
\begin{enumerate}
    \item First, \emph{(iii), (iv)} of \cref{prop:DecompositionNSSI} implies that $\hA^{\lambda, n}(\mathrm{NS})$ are ``almost-POVM'' for polynomials with degree $\leq n$, which means that the linear functionals
    \begin{align*}
        \sigma^{\lambda, n, \mathrm{NS}}_{a|x}(P) = \sandwich{\Omega^{\lambda}_{n}}{\hA^{\lambda, n}(\mathrm{NS}) \pi^{\lambda}_n(P)}{\Omega^{\lambda}_{n}}
    \end{align*}
    defined on $\cB_{2n}$ are positive and satisfy the strongly no-signaling condition as defined in~\cite{kulpe2024bound}.
    Consequently, the correlation
    \begin{align*}
        p^{\lambda, n}_{\mathrm{NS}}(ab|xy) = \sigma^{\lambda, n, \mathrm{NS}}_{a|x}(\B)
    \end{align*}
    is compatible with the $n$-th level of strongly no-signaling sequential NPA hierarchy.
    Note the correlation $p^{\lambda, n}_{\mathrm{NS}}$ is generally dependent on $n$ since the functionals $\sigma^{\lambda, n, \mathrm{NS}}$ are.
    
    Thus, the corresponding optimal Bell score (associated with the Bell polynomial $\vec{\beta}$) for $p^{\lambda, n}_{\mathrm{NS}}(ab|xy)$ is upper-bounded by the optimal sequential NPA score at level $n$:
    \begin{align*}
        \omega^{\lambda,n}_{\mathrm{NS}} := \langle p^{\lambda, n}_{\mathrm{NS}}, \vec{\beta} \rangle \leq \gamevalueSeqNPA{\cG}{n}. 
    \end{align*}
    
    \item Next, consider the $n$-dependent pseudo-correlations (due to potential negativity)
    \begin{align*}
        p^{\lambda, n}_{\mathrm{SI}}(ab|xy) = \sandwich{\Omega^{\lambda}_{n}}{\hA^{\lambda, n}(\mathrm{SI}) \pi^{\lambda}_n(\B)}{\Omega^{\lambda}_{n}}.
    \end{align*}
    Since there are only finitely many $a, b, x, y$, \cref{prop:DecompositionNSSI}.\emph{(ii)} then implies that we can find one negligible function $\eta(\lambda)$ such that $\lvert p^{\lambda, n}_{\mathrm{SI}}(ab|xy) \rvert \leq \eta(\lambda)$ for all $a, b, x, y$.
    In particular, it follows that there exists an upper-bounding negligible function $\eta_2(\lambda)$, such that for the corresponding score contribution $\omega^{\lambda, n}_{\mathrm{SI}} := \sup_{p^{\lambda, n}_{\mathrm{SI}}} \langle p^{\lambda, n}_{\mathrm{SI}}, \vec{\beta} \rangle$, we have
    \begin{align*}
        \lvert \omega^{\lambda, n}_{\mathrm{SI}} \rvert \leq \sup_{p^{\lambda, n}_{\mathrm{SI}}} \lvert \langle p^{\lambda, n}_{\mathrm{SI}}, \vec{\beta} \rangle \rvert \leq \sup_{p^{\lambda, n}_{\mathrm{SI}}} \lVert \vec{\beta} \rVert \lVert p^{\lambda, n}_{\mathrm{SI}} \rVert \leq \lVert \vec{\beta} \rVert \eta(\lambda) := \eta_2(\lambda).
    \end{align*}

    \item Lastly, the norm of the $n$-dependent pseudo-correlation
    \begin{align*}
        p^{\lambda, n}_{\mathrm{res}}(ab|xy) = \sandwich{\Omega^{\lambda}_{n}}{\hA^{\lambda, n}(\mathrm{res}) \pi^{\lambda}_n(\B)}{\Omega^{\lambda}_{n}}
    \end{align*}
    is clearly upper-bounded by some constant $C$. Then
    \begin{align*}
        \lvert \beta^{\lambda, n}_{\mathrm{res}} \rvert \leq \sup_{p^{\lambda, n}_{\mathrm{res}}} \lVert \vec{\beta} \rVert \lVert p^{\lambda, n}_{\mathrm{res}} \rVert \leq C'.
    \end{align*}
    Then for its score contribution $\omega^{\lambda, n}_{\mathrm{res}} := \sup_{p^{\lambda, n}_{\mathrm{res}}} \langle p^{\lambda, n}_{\mathrm{res}}, \vec{\beta} \rangle$,
    \begin{align*}
        \lvert \omega^{\lambda, n}_{\mathrm{res}} \rvert \leq \sup_{p^{\lambda, n}_{\mathrm{res}}} \lVert \vec{\beta} \rVert \lVert p^{\lambda, n}_{\mathrm{res}} \rVert \leq C'.
    \end{align*}
\end{enumerate}

With the above decomposition, we have already done most of the proof for the following main result, which upper-bounds the compiled Bell score with the sequential NPA hierarchy value $\gamevalueSeqNPA{\cG}{n}$ and a NPA level dependent negligible function $\etaThm_{S, n}(\lambda)$.
\begin{theorem}\label{thm:GeneralQuantBoundFiniteSecurityQCValue}
    Let $\cG$ be a bipartite Bell game.
    Consider its compiled version $\cG_{\comp}$ and let $S = (S_\lambda)_{\lambda}$ be an arbitrary quantum polynomial time (QPT) strategy employed by the prover.
    Let the approximation error of the sequential NPA hierarchy for $\cG$ be $\epsilon(n) := \gamevalueSeqNPA{\cG}{n} - \gamevalueqcopt{\cG}$, where $\epsilon(n) \to 0$ monotonically as $n \to \infty$.
    
    Then, for every $n > 0$, there exists a negligible function $\etaThm_{S, n}(\lambda)$ (dependent on the QHE scheme and the strategy $S$) such that
    \begin{align}\label{eqthm}
        \gamevalueCompile{\lambda}{\cG_{\comp}, S} \leq \gamevalueSeqNPA{\cG}{n} + \etaThm_{S, n}(\lambda) = \gamevalueqcopt{\cG}+ \epsilon(n) + \etaThm_{S, n}(\lambda)
    \end{align}
    for $\gamevalueCompile{\lambda}{\cG_{\comp}, S}$ being the prover's Bell score using the QPT strategy $S$.
    In other words, the Bell score derived from the QPT strategy $S$ (via NPA level $n$ analysis) is upper-bounded by the optimal score of the sequential NPA hierarchy at level $n$ plus $\etaThm_{S, n}(\lambda)$.
\end{theorem}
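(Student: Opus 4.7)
The heavy lifting has already been done in \cref{sec:FlatExtension,sec:FiniteSecurityGNSStrategy,sec:TechnicalDecomposition}, so my plan is essentially to assemble the bounds on the three components identified in the enumerated analysis preceding the theorem. Starting from \cref{eq:BellValueFiniteLambdaFiniteNPA} and using that the Bell functional $\vec{\beta}$ is linear in Alice's operators, I would insert the decomposition \cref{eq:DecompositionNSSIRes} and split the compiled score as
\begin{align*}
\gamevalueCompile{\lambda}{\cG_{\comp}, S} = \omega^{\lambda, n}_{\mathrm{NS}} + \omega^{\lambda, n}_{\mathrm{SI}} + \frac{\dim(V_n)\etaLem_n(\lambda)}{1+\dim(V_n)\etaLem_n(\lambda)}\, \omega^{\lambda, n}_{\mathrm{res}},
\end{align*}
reducing the proof to bounding each contribution separately.

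The first step, which I expect to be the main obstacle, is to prove $\omega^{\lambda, n}_{\mathrm{NS}} \leq \gamevalueSeqNPA{\cG}{n}$ by verifying that the functionals $\sigma^{\lambda, n, \mathrm{NS}}_{a|x}(P) := \sandwich{\Omega^{\lambda}_{n}}{\hA^{\lambda, n}(\mathrm{NS})\, \pi^{\lambda}_n(P)}{\Omega^{\lambda}_{n}}$ form a feasible solution to the $n$-th level sequential NPA SDP \cref{eq:SequentialNPASDP}. I would check each SDP constraint in turn: positivity on sums of squares follows from property \emph{(iv)} of \cref{prop:DecompositionNSSI}, the POVM sum rule from \emph{(iii)}, commutation with $\pi^{\lambda}_n(\cB)$ from \emph{(i)}, and the strongly no-signaling condition (which is what distinguishes the \emph{sequential} from the standard NPA hierarchy) from the identity $\sum_a \hA^{\lambda, n}(\mathrm{NS}) = \id_{H^{\lambda}_n}$ on the low-degree subspace, as established in the final observation of Step~5 in the proof of \cref{prop:DecompositionNSSI}. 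Care is required because these properties only hold when tested against polynomials in $\cB_{2n}$, which is precisely the moment-matrix data the level-$n$ SDP sees.

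The two remaining terms are then straightforward. For $\omega^{\lambda, n}_{\mathrm{SI}}$, property \emph{(ii)} of \cref{prop:DecompositionNSSI} bounds each matrix element $\sandwich{\Omega^{\lambda}_n}{\hA^{\lambda, n}(\mathrm{SI})\,\pi^{\lambda}_n(\B)}{\Omega^{\lambda}_n}$ by a negligible function, so contracting against $\vec{\beta}$ over the finitely many $(a,b,x,y)$ tuples yields $|\omega^{\lambda, n}_{\mathrm{SI}}| \leq \eta_2(\lambda) := \|\vec{\beta}\|\,\eta(\lambda)$. For the residual piece, $|\omega^{\lambda, n}_{\mathrm{res}}|$ is bounded by a game-dependent constant $C'$, but the prefactor $\dim(V_n)\etaLem_n(\lambda)/(1+\dim(V_n)\etaLem_n(\lambda))$ is itself negligible in $\lambda$ for \emph{fixed} $n$, since $\dim(V_n)$ is then just a constant and $\etaLem_n(\lambda)$ is negligible by \cref{lem:sumAaxdominated}.

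Combining the three bounds produces a single $n$-dependent negligible function $\etaThm_{S,n}(\lambda) := \eta_2(\lambda) + C'\dim(V_n)\etaLem_n(\lambda)$, and substituting $\gamevalueSeqNPA{\cG}{n} = \gamevalueqcopt{\cG} + \epsilon(n)$ yields \cref{eqthm}. The delicate bookkeeping point is that $\etaThm_{S,n}$ may have bad dependence on $n$ (inherited from the norm-conversion bound $\dim(V_n) \leq \exp(n)$ in \cref{lem:sumAaxdominated}), so for a given security parameter $\lambda$ one cannot take $n$ arbitrarily large; this is consistent with the quantitative-soundness corollary \cref{cor:MainBoundFiniteSecurityQCValue}, which applies once the hierarchy terminates via flatness at some fixed finite level $n_0$.
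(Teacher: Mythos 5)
Your proposal is correct and takes essentially the same route as the paper: you decompose $p^{\lambda}$ via \cref{prop:DecompositionNSSI} into NS, SI, and residual pieces, bound the NS contribution by showing $\sigma^{\lambda,n,\mathrm{NS}}_{a|x}$ is a feasible point of the level-$n$ sequential NPA SDP (using properties \emph{(i)}--\emph{(iv)}), bound the SI and residual pieces by negligible functions via \cref{prop:DecompositionNSSI}\emph{(ii)} and \cref{lem:sumAaxdominated}, and assemble them into a single $n$-dependent negligible $\etaThm_{S,n}$. The paper's $\etaThm_{S,n}$ differs from yours only by absorbing constants ($2\max(C',1)\dim(V_n)\max(\etaLem_n,\eta_2)$ versus your $\eta_2 + C'\dim(V_n)\etaLem_n$), which is immaterial; your remark on the exponential $n$-dependence and its consistency with \cref{cor:MainBoundFiniteSecurityQCValue} matches \cref{rem:WhyIsLogHere}.
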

\begin{proof}
    Thanks to the discussion preceding the theorem, we directly compute:
    \begin{align*}
        \gamevalueCompile{\lambda}{\cG_{\comp}, S} &\leq \sup_{p^{\lambda}} \langle p^{\lambda}, \vec{\beta} \rangle \leq \sup_{p^{\lambda}} \langle p^{\lambda, n}_\mathrm{NS} + p^{\lambda, n}_\mathrm{SI}+ \frac{\dim(V_n)\etaLem_n(\lambda)}{1+\dim(V_n)\etaLem_n(\lambda)}p^{\lambda, n}_\mathrm{res}, \vec{\beta} \rangle \\
        &\leq \sup_{p^{\lambda, n}_{\mathrm{NS}}} \langle p^{\lambda, n}_{\mathrm{NS}}, \vec{\beta} \rangle + \sup_{p^{\lambda, n}_{\mathrm{SI}}} \langle p^{\lambda, n}_{\mathrm{SI}}, \vec{\beta} \rangle + \frac{\dim(V_n)\etaLem_n(\lambda)}{1+\dim(V_n)\etaLem_n(\lambda)}\sup_{p^{\lambda, n}_{\mathrm{res}}} \langle p^{\lambda, n}_{\mathrm{res}}, \vec{\beta} \rangle \\
        &\leq \gamevalueSeqNPA{\cG}{n} + \omega^{\lambda, n}_{\mathrm{SI}} + \dim(V_n)\etaLem_n(\lambda) \omega^{\lambda, n}_{\mathrm{res}} \\
        &\leq \gamevalueSeqNPA{\cG}{n} + \eta_2(\lambda) + C'\dim(V_n)\etaLem_n(\lambda) \\
        &\leq \gamevalueSeqNPA{\cG}{n} + \etaThm_{S, n}(\lambda) 
        = \gamevalueqcopt{\cG} + \epsilon(n) + \etaThm_{S, n}(\lambda),
    \end{align*}
    where $\etaThm_{S, n} := 2 \max(C', 1) \dim(V_n) \max(\etaLem_n, \eta_2)$.
    Note $\etaThm_{S, n}$ is again negligible and depends on the QHE scheme and the QPT strategy $S$ as $\etaLem_n, \eta_2$ both are.
\end{proof}

While \cref{thm:GeneralQuantBoundFiniteSecurityQCValue} provides upper bounds to the compiled score, it is fundamentally related to the NPA level $n$, which influences both the approximation error $\epsilon(n)$ and the negligible function $\etaThm_{S, n}(\lambda)$.
In general, a practically meaningful upper bound requires high NPA level $n$ so that the approximation error $\epsilon(n)$ is small.
However, according to \cref{rem:WhyIsLogHere}, a verifier limited to a $\poly(\lambda)$-sized computer can only compute up to level $n = \log(\lambda)$ in full generality.
Moreover, if \cref{conj:MIPco=coRE} holds, then \cref{prop:ExistenceOfHardGamesForNPA} implies the existence of a family of Bell games for which the sequential NPA hierarchy converges arbitrarily slowly, whence the upper bounds by \cref{thm:GeneralQuantBoundFiniteSecurityQCValue} becomes trivial.

Nonetheless, when the sequential NPA hierarchy converges at a finite level, the approximation error $\epsilon(n)$ vanishes at that level, and \cref{thm:GeneralQuantBoundFiniteSecurityQCValue} gives a quantitative soundness bound independent of the NPA approximation error.
A useful sufficient certificate for such finite convergence is the existence of a flat optimal solution.
Indeed, by \cref{thm:StoppingCriteriaSeqNPA}, a flat optimal solution at level $n_0$ implies $\gamevalueSeqNPA{\cG}{n_0}=\gamevalueqcopt{\cG}=\omega_{\mathrm q}(\cG)$ and gives a finite-dimensional optimal quantum strategy.
This leads to the following corollary.

\begin{corollary}\label{cor:MainBoundFiniteSecurityQCValue}
    Let $\cG$ be a bipartite Bell game.
    Consider its compiled version $\cG_{\comp}$ and let $S = (S_\lambda)_{\lambda}$ be an arbitrary quantum polynomial time (QPT) strategy employed by the prover.

    If the sequential NPA hierarchy for $\cG$ converges at some finite level $n_0$, i.e.,
    \begin{align*}
        \gamevalueSeqNPA{\cG}{n_0}=\gamevalueqcopt{\cG},
    \end{align*}
    Then there exists a negligible function $\etaThm_{S}(\lambda)$ (dependent on the QHE scheme and the strategy $S$) such that
    \begin{align}\label{eq:MainCorFirstEq}
        \gamevalueCompile{\lambda}{\cG_{\comp}, S} \leq \omega_{\mathrm{qc}}(\cG) + \etaThm_{S}(\lambda),
    \end{align}
    where $\gamevalueCompile{\lambda}{\cG_{\comp}, S}$ is the prover's Bell score using $S$ and $\omega_{\mathrm{qc}}(\cG)$ is the optimal commuting quantum score.

    In particular, if the sequential NPA hierarchy for $\cG$ admits a flat optimal solution (see \cref{def:FlatSequentialNPA}) at level $n_0$, then the hierarchy converges at level $n_0$ and $\cG$ admits an optimal finite-dimensional quantum strategy.
    Hence,
    \begin{align*}
        \gamevalueqcopt{\cG}=\omega_{\mathrm q}(\cG),
    \end{align*}
    where $\omega_{\mathrm{q}}(\cG)$ is the optimal tensor product quantum score, and
    \begin{align}\label{eq:MainCorSecondEq}
        \gamevalueCompile{\lambda}{\cG_{\comp}, S} \leq \omega_{\mathrm{q}}(\cG) + \etaThm_{S}(\lambda).
    \end{align}
\end{corollary}
\begin{proof}
    Applying \cref{thm:GeneralQuantBoundFiniteSecurityQCValue} at level $n_0$ implies that $\epsilon(n_0) = 0$, and we are done by letting $\etaThm_{S}(\lambda) := \etaThm_{S, n_0}(\lambda)$ for all $\lambda$.
    The flat optimality statement follows from \cref{thm:StoppingCriteriaSeqNPA}.

    In the flat optimal case, the negligible function $\etaThm_{S}(\lambda)$ can be seen more constructively by recalling the proof of \cref{lem:sumAaxdominated}.
    Specifically, if the flat optimal solution has a rank of $d$, then the $n$-degree polynomial subspace satisfies $\dim(V_n) = d$ and thus the optimal quantum strategy is $d$-dimensional.
    Based on the proof of \cref{lem:sumAaxdominated}, we identify an orthonormal basis $\{P_i \ket{\Omega_n^{\lambda}})$ for $P_i$ polynomials of degree $\leq n$, $i=1, \dots, d$.
    Then $\etaThm_{S}(\lambda) \propto d \tilde{\eta}(\lambda)$ where $\tilde{\eta}(\lambda)$ is the negligible function upper-bounding $\lvert \sum_a p(ab|xy) - \sum_a p(ab|x'y) \rvert$ and $\lvert \sum_a \sigma_{a|x}(P_i^* P_j) - \sum_a \sigma_{a|x'}(P_i^* P_i) \rvert$.
\end{proof}

The first part of \cref{cor:MainBoundFiniteSecurityQCValue} is a quantitative soundness statement with respect to the commuting quantum value.
The stronger flat-optimality assumption further gives quantitative quantum soundness with respect to the tensor product quantum value.

Moreover, in the flat-optimal case, \cref{thm:StoppingCriteriaSeqNPA} also extracts a finite-dimensional optimal quantum strategy.
Therefore, combining the upper bound above with the quantum completeness of the compiler from~\cite{kalai2023quantum} gives a matching lower bound up to negligible error, whenever the extracted finite-dimensional strategy satisfies the usual implementation assumptions of the compiler.
In this sense, flat optimality gives quantitative control of the compiled value, not only a soundness upper bound.

Although the finite-convergence and flat-optimality premises of \cref{cor:MainBoundFiniteSecurityQCValue} are game-specific (and determining if a game admits a finite-dimensional quantum realization is undecidable~\cite{fu2025membership}), they are intrinsic properties to the Bell game $\cG$ and its sequential NPA hierarchy, rather than assumptions on the compilation procedure.
In many standard examples, the relevant optimal values are certified by finite-level NPA/SOS or flatness certificates, so the corollary applies directly.

Finally, we remark that infinite-dimensional strategy is less well-posed in the computational setup: it is unclear how to implement such a strategy efficiently with $\poly(\lambda)$-size computers, and even if possible, a justification of the correctness of the QHE scheme in the infinite-dimensional setting is needed.

We end this subsection with two remarks, one on the more general Bell polynomials and one on the practical limit on the tightness of the bound in \cref{thm:GeneralQuantBoundFiniteSecurityQCValue}.
\begin{remark}
    The derivations above focus on Bell polynomials $\vec{\beta}$ that are linear in the correlation $p^{\lambda}$ for simplicity.
    However, the same ideas extend readily to cases where the score computation involves higher-order terms in $p^{\lambda}$.
    In fact, writing
    \begin{align*}
        p^{\lambda} = p^{\lambda, n}_\mathrm{NS} + p^{\lambda, n}_\mathrm{SI}+ \frac{\dim(V_n)\etaLem_n(\lambda)}{1+\dim(V_n)\etaLem_n(\lambda)}p^{\lambda, n}_\mathrm{res},
    \end{align*}
    one easily verifies that for any $k \geq 1$,
    \begin{align*}
        \lvert p^{\lambda} \rvert^k \leq \lvert p^{\lambda, n}_\mathrm{NS} \rvert^k + \exp(n) \etaThm_{S, n}(\lambda),
    \end{align*}
    for some QHE-scheme-QPT-strategy-$n$-dependent negligible function $\etaThm_{S, n}(\lambda)$.
    This follows because all cross-terms involve either $\lvert p^{\lambda, n}_\mathrm{SI} \rvert$ or $\etaThm_{S, n}(\lambda)$, which are negligible.
    Similarly, the same argument extends to any polynomial $\beta$ that is linear in Alice's measurements while allowing Bob's measurements to appear in monomials of degree up to $2n$, i.e., the terms of the form
    \begin{align*}
        \sandwich{\Omega^{\lambda}_{n}}{\hA^{\lambda, n} \pi^{\lambda}_n(P(\B))}{\Omega^{\lambda}_{n}},
    \end{align*}
    where $P(\B)$ is a polynomial in Bob's operators of degree at most $2n$.
\end{remark}

\begin{remark}\label{rem:WhyIsLogHere}
    By~\cite{nesterov1994interior}, given numerical precision, solving an SDP with an $N \times N$ moment matrix requires time polynomial in $N$.
    In the $n$-th level of the NPA hierarchy, the moment matrix is of size $N=\dim(V_n)$, which in the worst scenario is $\exp(n)$.
    Consequently, a verifier limited to polynomial-time in the security parameter $\lambda$ can only feasibly solve the hierarchy up to level $n=\log(\lambda)$.
    This imposes a practical limit on the tightness of the bound of \cref{thm:GeneralQuantBoundFiniteSecurityQCValue} a verifier can certify.

    However, if the Bell game possesses significant symmetry (or sparsity) so that the effective size of the moment matrix is reduced to $N=\poly(n) = \poly(\poly(\lambda)) = \poly(\lambda)$, then sequential NPA hierarchy approximation error can then be computed at a higher precision.
\end{remark}

\subsection{Discussion on robust self-testing of compiled Bell games}\label{sec:RobustSelfTesting}
The authors of~\cite{kulpe2024bound} also present an exact self-testing of compiled Bell games.
We begin by introducing the notion of commuting operator self-testing following~\cite[Definition~7.1]{paddock2024operator}, and then recall the self-testing result.
\begin{definition}\label{def:CommutingOperatorSelfTesting}
    A nonlocal game $\cG$ with associated Bell polynomial $\beta$ is called a \emph{commuting operator self-test} if any commuting operator strategy that attains the optimal quantum commuting score, $\gamevalueqcopt{\cG}$, necessarily corresponds to the same ideal state $\rho^*$ on $\cA \otimes_{\max} \cB$.
\end{definition}
Note that this definition is a proper generalization of the standard self-testing when restricted to the states on the max tensor product of finite-dimensional $C^*$-algebras~\cite[Theorem~3.5]{paddock2024operator} up to the extremality condition.
But the infinite-dimensional case remains an open question.

Now we are ready to state the exact self-testing result for compiled Bell games~\cite[Theorem~6.5]{kulpe2024bound}.
\begin{theorem}\label{thm:KulpeSelfTesting}
    Let $\cG$ be a commuting operator self-test with the ideal state $\rho^*$.
    If $S$ is a QPT strategy for the compiled game $\cG_\comp$ such that $\lim_{\lambda \to \infty} \gamevalueCompile{\lambda}{\cG_\comp, S} = \gamevalueqcopt{\cG}$, then for the associated positive linear functional $\sigma^{\lambda}_{a|x}$ it holds that
    \begin{align*}
        \lim_{\lambda \to \infty} \sigma^{\lambda}_{a|x}(P(\B)) = \rho^*(\A \otimes_{\max} P(\B) )
    \end{align*}
    for every $x, a$ and every polynomial $P$.
    In particular,
    \begin{align*}
        \lim_{\lambda \to \infty} \sigma^{\lambda}_{x}(P(\B)) = \rho^*(P(\B) ).
    \end{align*}
\end{theorem}

Attempting to generalize all asymptotic results from~\cite{kulpe2024bound}, a natural question is whether we can generalize \cref{thm:KulpeSelfTesting} to the robust case with our quantitative framework.
However, as we discuss below, the current notions of robust self-testing have limitations that prevent us from establishing a robust generalization.
First, the following remark shows that a robust version of \cref{def:CommutingOperatorSelfTesting} is likely redundant.
\begin{remark}\label{rem:RobustSameExactCOSelfTest}
    In standard robust self-testing~\cite{zhao2024robust}, a necessary condition is that any finite-dimensional strategy $S$ achieving a Bell score within $\delta$ of the optimal quantum score $\omega_q^*(\cG)$ must have its associated state $\rho_S$ pointwise close to the ideal state $\rho^*$ (with deviation quantified by a function that vanishes as $\delta \to 0$.
    One might thus define a Bell game $\cG$ as \emph{$\kappa$-robust commuting operator self-test} if, for every commuting operator strategy $S$ represented by the state $\rho_S$, its game score $\omega_{S}$ satisfying $\lvert \omega_{S} - \gamevalueqcopt{\cG} \rvert \leq \delta$, then there exists a function $\kappa(\delta)$ (with $\kappa(\delta) \to 0$ as $\delta \to 0$) such that
    \begin{align*}
        \lvert \rho_S(P) - \rho^*(P) \rvert \leq \deg(P) \kappa(\delta),
    \end{align*}
    for every $P \in \cA \otimes_{\max} \cB$.

    We now argue that this robust notion is redundant.
    On one hand, if the robust condition holds, the exact commuting operator self-testing property trivially follows.
    Conversely, suppose the game $\cG$ is an exact self-test but not robust.
    Let use consider a sequence $\omega_n$ converging to the optimal commuting score $\gamevalueqcopt{\cG}$ from below.
    By the fact that the commuting quantum correlation set $C_{qc}$ is closed, for every $n$ there exists an associated state $\rho_n$ on $\cA \otimes_{\max} \cB$ achieving the score $\omega_n$.
    Then, non-robustness implies that there is some $P \in \cA \otimes_{\max} \cB$ and a constant $c$, such that $\lvert \rho_n(P) - \rho^*(P) \rvert \geq c$ for all $n$.
    But the Banach-Alaoglu Theorem~\cite{blackadar2006operator} implies that there exists a weak-$^*$ convergent subsequence $\rho_{n_k}$ converging to some state $\rho$, which by the exact self-testing property coincides with the ideal state $\rho^*$.
    This contradicts the inequality $\lvert \rho_{n_k}(P) - \rho^*(P) \rvert \geq c$ for all $k$.
    Hence, the robust definition is equivalent to exact commuting operator self-testing \cref{def:CommutingOperatorSelfTesting}.
\end{remark}

It is important to note that \cref{def:CommutingOperatorSelfTesting,thm:KulpeSelfTesting} applies within the framework of commuting quantum correlations (so does the standard finite-dimensional self-testing).
In our work, however, compiled Bell games $\cG_{\comp}$ at security parameter $\lambda$ are characterized using the sequential NPA hierarchy, which is a relaxation of the commuting quantum model.
Consequently, the current definitions of self-testing are too restrictive to fully capture the behavior of compiled Bell games.
This observation can serve as a motivation to develop a more general notion of robust self-testing capable of characterizing near-optimal scores even when the underlying correlations lie outside the strictly commuting set.
We note the potential connection to approximate Tsirelson's theorems~\cite{xu2025quantitative}, which characterize the distance of commuting to almost commuting correlations in finite dimensions.

\section{The sequential NPA hierarchy}\label{sec:SequentialNPA}
The sequential NPA hierarchy, which we now formally introduce, is the central analytical tool underpinning our quantitative soundness bounds from \cref{sec:QuantitaiveBoundConvergentRateMain}.
It provides a natural adaptation of the standard NPA framework to the setting of sequential Bell games, as depicted in \cref{fig:NonlocalCompiledBellGame}.(b), and steering scenarios.
This hierarchy models a scenario where provers are queried sequentially under a strong no-signaling condition, which prevents the second prover's actions from depending on the first prover's question.

In this formulation, for each $a, x$ we define a subnormalized moment matrix $\Theta^{(n)}(a|x)$ for monomials in the letters $\{ \B \}$ with length $\leq n$, and consider the normalized moment matrix $\Theta^{(n)} = \sum_{a} \Theta^{(n)}(a|x)$.
The corresponding SDP relaxation is given by
\begin{equation}\label{eq:SequentialNPASDP}
    \begin{aligned}
        \gamevalueSeqNPA{\cG}{n} \quad &= \max_{\Theta^{(n)}(a|x) \geq 0 \, \forall a,x} \langle \vec{\beta},\, p \rangle \\
        \text{subject to} \quad
        & p(ab|xy) = \Theta^{(n)}(a|x)_{1, \B} \quad \forall a, b, x, y \quad \text{(probability extraction)}, \\
        & 0 \leq \B \leq \id \quad \forall b, y \quad \text{(via localizing matrices; POVM bounds for Bob)}, \\
        & \sum_b \B = \id \quad \forall y \quad \text{(via localizing matrices; POVM completeness for Bob)}, \\
        & \sum_{a} \Theta^{(n)}(a|x) = \sum_{a} \Theta^{(n)}(a|x') := \Theta^{(n)} \quad \forall x, x' \quad \text{(strongly no-signaling condition)}, \\
        & 1 = \Theta^{(n)}_{\id, \id} \quad \text{(normalization)}.
    \end{aligned}
\end{equation}
For every $n$, this SDP directly corresponds to the compiled Bell game in the asymptotic security limit (i.e., $\lambda \to \infty$), via the identification
\begin{align*}
    \sigma^{\lambda \to \infty, n}_{a|x}(w^*v) = \Theta^{(n)}(a|x)_{w, v}.
\end{align*}
This is a convergent SDP hierarchy to the optimal commuting quantum score $\gamevalueqcopt{\cG}$ from above, as formalized in \cref{thm:SequentialNPAConvergence} (where the convergence of a modified NPA hierarchy is also shown).

Having defined the hierarchy, we dedicate the remainder of this section to its full characterization.
We compare it with the standard NPA hierarchy (\cref{prop:CompareStandardSequentialNPA,thm:SequentialNPAConvergence}), establish its stopping criterion (\cref{thm:StoppingCriteriaSeqNPA}), and identify its conic dual as a special case of the sparse SOS hierarchy~\cite{klep2022sparse} (\cref{prop:SequentialDualSparse}).

\subsection{Comparison with the standard NPA hierarchy}\label{sec:SequentialVsStandardNPA}
It is natural to compare the sequential NPA hierarchy defined in \cref{eq:SequentialNPASDP} to the \emph{standard NPA hierarchy}, which we recall now.
Here, the moment matrix $\Gamma^{(n)}$ is constructed from monomials in the letters $\{\A, \B\}$ of length $\leq n$.
The associated SDP reads as follows:
\begin{equation}\label{eq:StandardNPASDP}
    \begin{aligned}
        \gamevalueNPA{\cG}{n} \quad &= \max_{\Gamma^{(n)} \geq 0} \langle \vec{\beta},\, p \rangle \\
        \text{subject to} \quad
        & p(ab|xy) = \Gamma^{(n)}_{\A, \B} \quad \forall a, b, x, y \quad \text{(probability extraction)}, \\
        & 0 \leq \A, \B \leq \id \quad \forall a, b, x, y \quad \text{(POVM bounds)}, \\
        & \sum_a \A = \sum_{b} \B = \id \quad \forall x, y \quad \text{(POVM completeness)}, \\
        & [\A, \B] = 0 \quad \forall a, b, x, y \quad \text{(commutation)}, \\
         & 1 = \Gamma^{(n)}_{\id, \id} \quad \text{(normalization)}.
    \end{aligned}
\end{equation}
At level $n=1$, it is clear that the sequential NPA hierarchy \cref{eq:SequentialNPASDP} and the standard NPA hierarchy \cref{eq:StandardNPASDP} have a one-to-one correspondence.

However, for level $n > 1$, the relationship between the two hierarchies is more nuanced.
In fact, a feasible solution to the standard NPA hierarchy at level $n$ can be mapped to a feasible solution for the sequential NPA hierarchy at level $n-1$ by setting
\begin{align*}
    \Theta^{(n-1)}(a|x)_{w, v} = \Gamma^{(n)}_{w, \A \cdot v}
\end{align*}
for all $w, v$ monomials in $\cB_{n-1}$.
Therefore, having the assumption on the approximation error on the sequential NPA hierarchy automatically gives an approximation error on the standard NPA hierarchy.
However, the converse does not hold: at finite levels, the sequential NPA hierarchy is generally a strict relaxation of the standard NPA hierarchy.
As the following proposition shows, at finite level, it is equivalent to what we call the modified NPA hierarchy.
\begin{proposition}\label{prop:CompareStandardSequentialNPA}
    Consider the modified NPA hierarchy yielding a score $\gamevalueModNPA{\cG}{n}$ defined by
    \begin{equation}\label{eq:ModifiedNPASDP}
        \begin{aligned}
             \gamevalueModNPA{\cG}{n} \quad &= \max_{\tilde{\Gamma}^{(n)} \geq 0}\langle \vec{\beta},\, p \rangle \\
            \text{subject to} \quad
            & p(ab|xy) = \tilde{\Gamma}^{(n)}_{\A, \B} \quad \forall a, b, x, y \quad \text{(probability extraction)}, \\
            & 0 \leq \A, \B \leq \id \quad \forall a, b, x, y \quad \text{(POVM bounds)}, \\
            & \sum_{b} \B = \id \quad \forall x, y \quad \text{(POVM completeness for Bob)}, \\
            & \sum_a \tilde{\Gamma}^{(n)}_{b_1, \A b_2} = \tilde{\Gamma}^{(n)}_{b_1, b_2} \quad \forall b_1 \in \cB_n, b_2 \in \cB_{n-1} \quad \text{(Alice ``fakes'' POVM properties to Bob)}, \\
            & [\A, \B] = 0 \quad \forall a, b, x, y \quad \text{(commutation)}, \\
            & 1 = \tilde{\Gamma}^{(n)}_{\id, \id} \quad \text{(normalization)}.
        \end{aligned}
    \end{equation}
    Here we have relaxed the condition that $\sum_a \A = \id$. That is, $\A$ seems to be POVMs only from Bob's perspective.
    Note that \cref{eq:ModifiedNPASDP} is a relaxation of the standard NPA hierarchy in \cref{eq:StandardNPASDP} at level $n$ with
    \begin{align*}
        \gamevalueNPA{\cG}{n} \leq \gamevalueModNPA{\cG}{n},
    \end{align*}
    but is equivalent to the standard NPA hierarchy when $n = 1$.

    Then the existence of modified NPA moment matrix $\tilde{\Gamma}^{(n)}$ implies the existence of strongly no-signaling sequential NPA moment matrix $\Theta^{(n-1)}$.
    Conversely, the existence of $\Theta^{(n)}$ also implies the existence of $\tilde{\Gamma}^{(n-1)}$.
    That is, for all $n \geq 2$,
    \begin{align*}
       \gamevalueSeqNPA{\cG}{n+1} \leq \gamevalueModNPA{\cG}{n} \leq \gamevalueSeqNPA{\cG}{n-1}.
    \end{align*}
    Consequently, the modified NPA hierarchy also asymptotically converges to $\gamevalueqcopt{\cG}$.
\end{proposition}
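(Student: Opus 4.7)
The plan is to establish the proposition by constructing explicit maps between feasible solutions of the three hierarchies. The relaxation $\gamevalueNPA{\cG}{n} \leq \gamevalueModNPA{\cG}{n}$ is immediate, since the modified NPA only weakens Alice's POVM-completeness. The $n=1$ equivalence reduces to a finite-dimensional linear algebra check: at level $1$ the Alice-fakes-POVM constraint only probes rows and columns indexed by Bob-monomials $b_1 \in \cB_1$ and $b_2 = \id$, so the Alice--Alice entries of $\tilde{\Gamma}^{(1)}$ are constrained only by positive semidefiniteness and do not enter the objective; any feasible modified solution can therefore be completed to a feasible standard NPA solution (for example by realizing the Alice--Bob block through a genuine commuting representation) without affecting the score. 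Asymptotic convergence $\gamevalueModNPA{\cG}{n} \to \gamevalueqcopt{\cG}$ will then follow from the chain $\gamevalueSeqNPA{\cG}{n+1} \leq \gamevalueModNPA{\cG}{n} \leq \gamevalueSeqNPA{\cG}{n-1}$ by squeezing against the sequential NPA, whose convergence is already established in~\cite{kulpe2024bound}.

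For the inclusion $\gamevalueModNPA{\cG}{n} \leq \gamevalueSeqNPA{\cG}{n-1}$, starting from a feasible $\tilde{\Gamma}^{(n)}$, I would set $\Theta^{(n-1)}(a|x)_{w, v} := \tilde{\Gamma}^{(n)}_{w, \A v}$ for $w, v \in \cB_{n-1}$. Positive semidefiniteness follows from a Gram-matrix interpretation: in the GNS representation of $\tilde{\Gamma}^{(n)}$, the operator $\A$ is positive (hence admits a square root) and commutes with Bob's generators, so $\Theta^{(n-1)}(a|x)$ is the Gram matrix of the vectors $\{v\, \A^{1/2} \ket{\Omega}\}_{v \in \cB_{n-1}}$. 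Bob's POVM conditions, normalization, and probability extraction (via $\tilde{\Gamma}^{(n)}_{\A, \B} = \tilde{\Gamma}^{(n)}_{\id, \A \B}$ using commutation) transfer directly. The key observation is that the strongly no-signaling condition $\sum_a \Theta^{(n-1)}(a|x) = \sum_a \Theta^{(n-1)}(a|x')$ is precisely the Alice-fakes-POVM constraint of the modified NPA applied with $b_1 = w, b_2 = v$.

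For the reverse inclusion $\gamevalueSeqNPA{\cG}{n+1} \leq \gamevalueModNPA{\cG}{n}$, the idea is to promote a feasible $\Theta^{(n)}(a|x)$ to a genuine operator model on a common Hilbert space. I would first flat-extend each positive map $\sigma_{a|x}$ from $\cB_{2n}$ to the full algebra $\cB$ via \cref{sec:FlatExtension}, then run the exact (non-asymptotic, zero-error) analogue of the construction of \cref{prop:GNSConstructionCommonSpace}: GNS on the averaged state gives $(H, \pi, \ket{\Omega})$ with $\pi(\B)$ proper Bob POVMs, and per-$x$ intertwiners combined with Arveson's Radon--Nikodym theorem produce positive operators $\hA$ on $H$ commuting with $\pi(\cB)$. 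The candidate modified-NPA moment matrix is $\tilde{\Gamma}^{(n-1)}_{m_1, m_2} := \sandwich{\Omega}{m_1(\hA, \pi(\B))^*\, m_2(\hA, \pi(\B))}{\Omega}$ for $m_1, m_2$ monomials of length $\leq n-1$. The Alice-fakes-POVM constraint then reduces to $\sandwich{\Omega}{(\sum_a \hA - \id)\,\pi(b_1^* b_2)}{\Omega} = 0$ for $b_1 \in \cB_{n-1}, b_2 \in \cB_{n-2}$, which holds because $b_1^* b_2 \in \cB_{2n-3} \subset \cB_{2n}$ lies in the degree range where the exact strong no-signaling $\sum_a \sigma_{a|x} = \sigma$ is available at sequential NPA level $n$; the POVM bound $\hA \preceq \id$ at the level-$(n-1)$ moment matrix similarly follows from $\hA \preceq \sum_{a'} \hat{A}_{a'|x}$ combined with $\sum_{a'} \hat{A}_{a'|x}$ acting as $\id$ on the low-degree subspace $\pi(\cB_{n-2})\ket{\Omega}$. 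The main obstacle is precisely this degree bookkeeping: the flat extensions are faithful only on low-degree polynomials, so one must carefully ensure that every modified-NPA constraint at level $n-1$ probes moments within the safe cutoff provided by the exact no-signaling at sequential NPA level $n$, rather than merely up to a negligible error as in the cryptographic setting of \cref{sec:QuantitaiveBoundConvergentRateMain}.
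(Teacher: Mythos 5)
Your proposal follows essentially the same route as the paper: for $\tilde{\Gamma}^{(n)} \Rightarrow \Theta^{(n-1)}$ both define $\Theta^{(n-1)}(a|x)_{w,v} = \tilde{\Gamma}^{(n)}_{w,\A v}$ and observe the Alice-fakes-POVM constraint becomes the strong no-signaling constraint; for $\Theta^{(n)} \Rightarrow \tilde{\Gamma}^{(n-1)}$ both flat-extend the subnormalized maps to the full algebra, run the GNS-plus-intertwiner construction of \cref{prop:GNSConstructionCommonSpace} on the averaged state, extract commuting $\hA$ via Arveson's Radon--Nikodym theorem, and read off $\tilde{\Gamma}^{(n-1)}$. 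Your extra attention to degree bookkeeping (e.g.\ noting $b_1^* b_2 \in \cB_{2n-3}$ sits inside the safe cutoff, and that $\hA \preceq \id$ needs to be verified only against the low-degree subspace where $\sum_{a'}\hat A_{a'|x}$ acts as $\id$) is a genuine addition; the paper leaves these checks implicit.

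One slip worth flagging: the Gram-matrix justification of $\Theta^{(n-1)}(a|x) \succeq 0$ via ``the GNS representation of $\tilde{\Gamma}^{(n)}$, where $\A$ admits a square root and commutes with Bob's generators'' does not go through at a finite NPA level. A truncated moment matrix $\tilde{\Gamma}^{(n)}$ furnishes Gram vectors $\ket{\phi_m}$ for monomials $m$ of length $\leq n$, but there is no well-defined left-multiplication operator (hence no $\A^{1/2}$, and no operator-level commutation) unless the moment matrix is flat; the commutation in \cref{eq:ModifiedNPASDP} is only a linear relation among moment entries. The positivity you need is in fact more direct: $\Theta^{(n-1)}(a|x)$ is a principal submatrix (the Bob--Bob block) of the localizing matrix of $\A$, whose positive semidefiniteness is exactly what the POVM bound $0 \leq \A$ encodes in the modified NPA SDP. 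With that correction, your argument is sound and matches the paper's proof strategy.
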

\begin{proof}
    Clearly, the existence of $\tilde{\Gamma}^{(n)}$ implies the existence of $\Theta^{(n-1)}$ by letting
    \begin{align*}
        \Theta^{(n-1)}(a|x)_{w, v} = \tilde{\Gamma}^{(n)}_{w, \A \cdot v}
    \end{align*}
    for all $w, v$ monomials in $\cB_{n-1}$, and note that the weak completeness is already sufficient to ``fake'' the strongly no-signaling condition.
    
    For the converse direction, suppose we have $\Theta^{(n)}$, one may identify this with a compiled Bell game with strongly no-signaling condition via
    \begin{equation}\label{eq:StateMomentIdentification}
        \begin{aligned}
            &\sigma^{n}_{a|x}(w^*v) := \Theta^{(n)}(a|x)_{w, v} \quad \forall a, x\\
            &\sigma^{n}_x(w^*v) = \sigma^n(w^*v) := \Theta^{(n)}_{w, v} \quad \forall x
        \end{aligned}
    \end{equation}
    as positive linear maps $\cB_{2n} \to \mathds{C}$.
    We then use the same flat extension technique as in \cref{sec:FlatExtension} and~\ref{sec:FiniteSecurityGNSStrategy} to obtain positive functionals $\sigma_{a|x}: \cB \to \mathds{C}$ with $\sigma_x = \sum_a\sigma_{a|x}$.
    As extensions, the linear functionals $\sigma_{a|x}$ agree with $\sigma^{n}_{a|x}$ on the subspace $\cB_{2n-2}$, so the states $\sigma_x$ agree with $\sigma^n$ on $\cB_{2n-2}$.
    A crucial observation is that $\sigma_x \neq \sigma_{x'}$ in general, in contrast to their behaviors in $\cB_{2n-2}$.

    Then, using \cref{prop:GNSConstructionCommonSpace} for $\{\sigma_x\}_x$ we have:
    \begin{enumerate}
        \item A GNS representation $(H_{n-1}, \pi_{n-1}, \ket{\Omega_{n-1}})$.
        \item The operators $\{\pi(\B)\}$ form POVMs in $B(\cH_{n-1})$.
        \item Positive operators $\hA^{n-1} \in \pi_{n-1}(\cB)' \subset B(\cH_{n-1})$ for all $a, x$ such that
        \begin{align*}
            \Theta^{(n)}(a|x)_{w,v} = \sandwich{\Omega_{n-1}}{\hA^{n-1} \pi_{n-1}(w^* v)}{\Omega_{n-1}}
        \end{align*}
        for $w, v \in \cB_{n-1}$.
        Note that, however, the equation does not hold when $w, v \in \cB_{n} \setminus \cB_{n-1}$ because the flat extension technique affects these entries.
        \item The operators $\hA^{(n-1)}$ behave like POVMs for low-degree polynomials of Bob's measurements, i.e., for any $P_1, P_2 \in \cB_{n-1}$, one has
        \begin{align*}
             \sandwich{\Omega_{n-1}}{\pi^{n-1}(P_1) (\sum_a\hA^{(n-1)} - \id_{H^{n-1}}) \pi^{n-1}(P_2)} {\Omega_{n-1}} = 0.
        \end{align*}
        But the above equation does not hold for $P_1, P_2$ of higher degrees, due to the extensions $\sigma_x \neq \sigma_{x'}$ for higher degree polynomials.
    \end{enumerate}
    One can then identify the letter $\A$ with $\hA^{(n-1)}$ and $\B$ with $\pi^{n-1}(\B)$, and check that the formula
    \begin{align*}
        \tilde{\Gamma}^{(n-1)}_{w, v} = \sandwich{\Omega_{n-1}}{w^* v} {\Omega_{n-1}}
    \end{align*}
    defines a modified moment matrix $\tilde{\Gamma}^{n-1}$.
\end{proof}

Finally, the sequential NPA hierarchy is asymptotically equivalent to the standard NPA hierarchy: both converge to the optimal commuting quantum score $\gamevalueqcopt{\cG}$ from above.
\begin{theorem}\label{thm:SequentialNPAConvergence}
    The following statements are equivalent:
    \begin{enumerate}[label=(\roman*)]
        \item The correlations $p(ab|xy)$ arise from a commuting-observable strategy.
        \item There exists a family of $\{{\Gamma}^{(n)}\}_n$ of feasible solutions to the standard NPA hierarchy \cref{eq:StandardNPASDP} such that $p(ab|xy) = {\Gamma}^{(n)}_{\A, \B}$ for all $n$.
        \item There exists a family of $\{\Theta^{(n)}(a|x), \forall a, x\}_n$ of feasible solutions to the sequential NPA hierarchy \cref{eq:SequentialNPASDP} such that $p(ab|xy) = \Theta^{(n)}(a|x)_{\id, \B}$ for all $n$.
        \item There exists a family of $\{\tilde{\Gamma}^{(n)}\}_n$ of feasible solutions to the modified NPA hierarchy \cref{eq:ModifiedNPASDP} such that $p(ab|xy) = \tilde{\Gamma}^{(n)}_{\A, \B}$ for all $n$.
    \end{enumerate}
    Consequently, both $\gamevalueSeqNPA{\cG}{n} \searrow \gamevalueqcopt{\cG}$ and $\gamevalueModNPA{\cG}{n} \searrow \gamevalueqcopt{\cG}$ as $n \to \infty$.
\end{theorem}
\begin{proof}
    \textit{(i)} $\iff$ \textit{(ii)} is due to~\cite{navascues2008convergent}.
    \textit{(ii)} $\implies$ \textit{(iii)} is trivial, while the converse is an immediate consequence of the proof for \cref{prop:CompareStandardSequentialNPA}: that as $n \to \infty$, the modified NPA hierarchy \cref{eq:ModifiedNPASDP} coincides with the standard NPA hierarchy \cref{eq:StandardNPASDP}.
    \textit{(iii)} $\iff$ \textit{(iv)} is due to \cref{prop:CompareStandardSequentialNPA}.
\end{proof}

\subsection{Stopping criterion for the sequential NPA hierarchy}\label{sec:StoppingCriteriaSeqNPA}
We now discuss the stopping criterion for the sequential NPA hierarchy.
First introduced in \cref{eq:FlatConditionSec2}, we define more precisely the flatness condition for the sequential NPA hierarchy and then show its consequences in relation to the finite-dimensional quantum realizations.
\begin{definition}\label{def:FlatSequentialNPA}
    Let $\{\Theta^{(n)}(a|x)\}$ be the solution of the sequential NPA hierarchy at level $n$ from \cref{eq:SequentialNPASDP} for a Bell game $\cG$.
    Denote $\Theta^{(n)} = \sum_a \Theta^{(n)}(a|x)$ and consider its block form
    \begin{align*}
        \Theta^{(n)} = \begin{pmatrix}
            \Theta^{(n-1)} & B \\
            B^* & C
        \end{pmatrix},
    \end{align*}
    where $\Theta^{(n-1)}$ is the block indexed by monomials of degree $\leq n-1$, and $C$ is the block indexed by monomials of degree exactly $n$.
    Then we say the solution $\{\Theta^{(n)}(a|x)\}$ is flat (or has a rank-loop) if
    \begin{align*}
        \mathrm{rank}(\Theta^{(n)}) = \mathrm{rank}(\Theta^{(n-1)}) < \infty.
    \end{align*}
\end{definition}

This leads to our second main theorem.
\begin{theorem}\label{thm:StoppingCriteriaSeqNPA}
    Let $\cG$ be a bipartite Bell game, and suppose that the sequential NPA hierarchy for $\cG$ has a flat optimal solution $\{\Theta^{(n)}(a|x)\}$ at some finite level $n$.
    Then this solution yields a finite-dimensional GNS representation $(\cH, \pi, \ket{\Omega})$ of $\cB$ and an optimal commuting quantum strategy $(\hA, \pi(\B), \ket{\Omega})$ satisfying:
    \begin{enumerate}[label=(\roman*)]
        \item There exist POVMs $\{\hA\}_{a,x} \subset \pi(\cB)' \subset B(\cH)$, where $\pi(\cB)'$ is the commutant of $\pi(\cB)$.
        \item Bob's measurements in this representation, $\{\pi(\B)\}_{b,y}$, are POVMs.
        \item The probability distribution $p(ab|xy) = \Theta^{(n)}(a|x)_{1, \B}$ from \cref{eq:SequentialNPASDP} is recovered by Born's rule in this representation, i.e.,
        \begin{align*}
            p(ab|xy) = \sandwich{\Omega}{\hA \pi(\B)}{\Omega}.
        \end{align*}
    \end{enumerate}
    In particular, this strategy is equivalent to a finite-dimensional tensor product strategy, and
    \begin{align*}
        \gamevalueSeqNPA{\cG}{n} = \gamevalueqcopt{\cG} = \omega_{\mathrm{q}}(\cG).
    \end{align*}
\end{theorem}
\begin{proof}
    Assume that $\{\Theta^{(n)}(a|x)\}$ is a flat optimal solution at level $n$.
    Using \cref{eq:StateMomentIdentification} we identify the moment matrix $\Theta^{(n)}$ with a positive linear functional $\sigma^n: \cB_{2n} \to \mathds{C}$ and each $\Theta^{(n)}(a|x)$ with a $\sigma_{a|x}^n: \cB_{2n} \to \mathds{C}$.
    
    First, we show that every $\Theta^{(n)}(a|x)$ is also flat.
    Since $\Theta^{(n)}$ is flat, its corresponding functional $\sigma^n$ can be extended to a state $\sigma$ on $\cB$ via a finite-dimensional GNS representation $(\cH, \pi, \ket{\Omega})$ (\cref{prop:GNSfromFlatExtension}).
    The flatness condition means:
    \begin{align*}
        \cH = \mathrm{span}\{ \pi(P) \ket{\Omega} \mid P \in \cB_{n}\} = \mathrm{span}\{ \pi(P) \ket{\Omega} \mid P \in \cB_{n-1}\}.
    \end{align*}
    This equality implies that for every monomial $w \in \cB_n \setminus \cB_{n-1}$, we have a linear dependence
    \begin{align*}
        \pi(w)\ket{\Omega} = \sum_{v \in \cB_{n-1}} c_v \pi(v) \ket{\Omega}
    \end{align*}
    for some constants $c_v \in \mathds{C}$.
    It follows that for the polynomial $P_w = w - \sum_{v \in \cB_{n-1}} c_v v \in \cB_n$,
    \begin{align*}
        0 = \lVert \pi(P_w) \ket{\Omega} \rVert^2 = \sigma(P_w^* P_w) = \sigma^n(P_w^* P_w) = \sum_a \sigma_{a|x}^n(P_w^* P_w)
    \end{align*}
    for all $x$.
    Hence $\sigma_{a|x}^n(P_w^* P_w) = 0$ for all $a, x$ by positivity.
    Moreover, the Cauchy-Schwarz inequality implies that
    \begin{align*}
        \sigma(P_w) = \sigma^n(P_w) = \sigma_{a|x}^n(P_w) = \sigma_{a|x}(P_w) = 0.
    \end{align*}
    But the condition $\sigma_{a|x}^n(P_w^* P_w) = 0$ for the same polynomials $P_w$ means that the Gram vectors corresponding to monomials in $\cB_n \setminus \cB_{n-1}$ for each $\Theta^{(n)}(a|x)$ satisfy the same linear dependence relations on Gram vectors from $\cB_{n-1}$.
    That is, all $\Theta^{(n)}(a|x)$ are flat in the same block form.

    Next, we construct Alice's operators.
    Denote by $\sigma_{a|x}: \cB \to \mathds{C}$ the flat extension of $\sigma_{a|x}^n$ in the sense of \cref{sec:FlatExtension}.
    Following standard arguments (\cref{sec:FiniteSecurityGNSStrategy} and \cref{prop:CompareStandardSequentialNPA}), we can construct positive operators $\hA \in \pi(\cB)' \subset B(\cH)$ such that for any $Q \in \cB$ and $w,v \in \cB_{n}$:
    \begin{align*}
        \sigma_{a|x}(Q) = \sandwich{\Omega}{\hA \pi(Q)}{\Omega} \text{ and } \Theta^{(n)}(a|x)_{w,v} = \sandwich{\Omega}{\hA \pi(w^* v)}{\Omega}.
    \end{align*}
    (This equality holds for $w,v \in \mathcal{B}_n$, as opposed to $\cB_{n-1}$ in the proof of \cref{prop:CompareStandardSequentialNPA}, precisely because $\Theta^{(n)}(a|x)$ have been shown to be flat.)
    Statements \textit{(ii)} and~\textit{(iii)} then straightforwardly follow.

    To show that $\{\hA\}$ are actually POVMs for each $x$, it suffices to show that the state $\sigma_x := \sum_a \sigma_{a|x}$ is equal to $\sigma$ for all $x$.
    (We refer to the proofs of \cref{prop:GNSConstructionCommonSpace,prop:CompareStandardSequentialNPA} for this equivalence.)
    To this end, it is useful to recall what flat extension from $\sigma^n$ to $\sigma$ does exactly: consider the set of null polynomials $P_w = w - \sum_{v \in \cB_{n-1}} c_v v$ for $w \in \cB_n \setminus \cB_{n-1}$, generating a two-sided ideal $J$ for which $\sigma(J)=0$.
    Then, for any $Q \in \cB$, there exists a low-degree representative $Q' \in \cB_{2n-2}$ such that $Q - Q' \in J$.
    The flat extension is then constructed via the equation $\sigma(Q) = \sigma(Q') = \sigma^n(Q')$.
    (For example, $Q = w, Q' = \sum_{v \in \cB_{n-1}} c_v v$ with $P_w = Q - Q'$.)

    On the other hand, each $\sigma_{a|x}$ is extended from $\sigma_{a|x}^n$ using another two-sided ideal $J_{a|x}$.
    We have already shown that $\sigma_{a|x}^n(P_w^* P_w) = 0$, hence all $P_w \in J_{a|x}$ and $J \subset J_{a|x}$ for all $a, x$.
    Consequently, if $Q - Q' \in J$, then $Q - Q' \in \bigcap_a J_{a|x}$, which implies that
    \begin{align*}
        \sigma_x(Q) = \sum_a \sigma_{a|x}(Q) = \sum_a \sigma_{a|x}(Q') = \sum_a \sigma_{a|x}^n(Q') = \sigma^n(Q') = \sigma(Q') = \sigma(Q).
    \end{align*}
    It follows that $\sigma_x = \sigma$ for all $x$ since $Q \in \cB$ was arbitrary.

    We have now shown that $(\hA, \pi(\B), \ket{\Omega})$ is a finite-dimensional quantum strategy with commuting observables achieving the Bell score $\gamevalueSeqNPA{\cG}{n}$.
    By definition of the sequential NPA hierarchy as a relaxation, $\gamevalueSeqNPA{\cG}{n} \geq \gamevalueqcopt{\cG}$.
    Conversely, $\gamevalueqcopt{\cG}$ is the optimal value over quantum commuting observable strategies, so $\gamevalueSeqNPA{\cG}{n} \leq \gamevalueqcopt{\cG}$.
    This, along with Tsirelson's theorem for finite-dimensional commuting strategies (see, e.g.,~\cite{scholz2008tsirelson, xu2025quantitative}), proves the ``in particular'' statement.
\end{proof}

Thus a flat optimal solution is a finite-level certificate that the hierarchy has stopped and that the optimum is realized by a finite-dimensional quantum strategy.
We do not claim the converse.
A finite-dimensional optimal strategy does produce a sequence of feasible solutions to the sequential hierarchy that is flat~\cite[Theorem~10]{navascues2008convergent}, but this alone does not exclude the possibility of having a feasible non-flat solution at the same finite level.
Moreover, finite convergence of the values need not imply flat optimality: for example, a game with value $1$ but no finite-dimensional perfect strategy has trivial value convergence but cannot have a flat optimal solution.

In addition, regarding the numerical implementation, having an optimal finite-dimensional strategy does not guarantee that the sequential NPA hierarchy will find a flat optimal solution in practice.
In fact, it is possible that there exist infinitely many inequivalent finite-dimensional optimizers, leading the SDP solver for the sequential NPA hierarchy to freely return any convex mixtures of them.
We further remark that the decision problem of whether a correlation admits a finite-dimensional quantum realization is undecidable in general~\cite{fu2025membership}.

\begin{remark}
 A feature of the sequential NPA hierarchy \cref{eq:SequentialNPASDP} is that all constraints are of degree one, thus it suffices to check flatness over the block $\Theta^{(n-1)}$.
If adding higher order polynomial constraints $Q(\{\B\})$ of $\deg(Q)=d$ to \cref{eq:SequentialNPASDP}, the result of \cref{thm:StoppingCriteriaSeqNPA} will remain valid if we change the flatness condition to $\mathrm{rank}(\Theta^{(n)}) = \mathrm{rank}(\Theta^{(n-d)})$, where $\Theta^{(n-d)}$ is the block indexed by monomials of degree $\leq n-d$.
\end{remark}

\subsection{Sequential NPA hierarchy is conic dual to sparse SOS hierarchy}\label{sec:SparseDualSequential}
Another natural question is to ask what the dual of the sequential NPA hierarchy is, i.e., what is the corresponding sum of squares (SOS) certificate.
It turns out that its conic dual is a special case of the \emph{sparse SOS optimization} introduced by~\cite{klep2022sparse}, which is asymptotically equivalent to the standard SOS hierarchy (and hence, conic dual to the standard NPA hierarchy).
This conic duality correspondence provides further characterization of the sequential NPA hierarchy and insights into its numerical performance from the sparse SOS numerical examples~\cite[Chapter~6.7]{magron2023sparse}.

In order to formulate the conic dual of the sequential NPA hierarchy at level $n$, we first restrict our attention to the polynomial space generated by the measurement operators $\A, \B$.
Specifically, note that the sequential NPA hierarchy at level $n$ characterizes polynomials that are at most of degree $2n$ in $\B$ and only linear in $\A$ (via the matrices $\Theta^{(n)}(a|x)$).
Thus, the natural polynomial vector space is
\begin{align*}
    V_{(n)} = \{ \sum_{a,x} \A f_{a|x}(\B) + g(\B) \mid f_{a|x}, g \in \cB_{2n} \}.
\end{align*}
For the duality proof we now assume without loss of generality that the measurement operators are projective, i.e., $\A^2 = \A, \B^2 = \B$.
While this appears stronger than the original POVM conditions, \cref{prop:SequentialDualSparse} below (or, equivalently, by invoking Naimark dilation) guarantees that this assumption is equivalent for our purposes.
In this polynomial space $V_{(n)}$, the sparse SOS cone at level $n$ is then defined as
\begin{multline*}
    \mathcal{M}_{(n)} = \biggl\{ \sum_i \Bigl( \sum_{a,x} \A f_{a|x,i} + g_i \Bigr)^* \Bigl( \sum_{a,x} \A f_{a|x,i} + g_i \Bigr) \\
    + \sum_x p_x^* \Bigl(1 - \sum_{a} \A \Bigr) q_x \biggm| f_{a|x,i}, g_i, p_x, q_x \in \cB_n \biggr\}.
\end{multline*}

If the Bell polynomial $\beta$ can be identified with an element in $V_{(n)}$, then the sparse SOS hierarchy at level $n$, yielding a score $\gamevalueSparse{\cG}{n}$, is given by:
\begin{equation}\label{eq:sparseSOSSDP}
    \begin{aligned}
        \gamevalueSparse{\cG}{n} = \max_{m,\,s,\,\{\lambda_{abxy}\}} \quad & \, m \\
        \text{s.t.} \quad
        \beta - m\id &= s + \sum_{a,b,x,y} \lambda_{abxy} \biggl( \A\B - p(ab|xy) \biggr), \\
        & s \in \mathcal{M}_{(n)}.
    \end{aligned}
\end{equation}
We now show that this hierarchy is indeed the conic dual of the sequential NPA hierarchy.

\begin{proposition}\label{prop:SequentialDualSparse}
    The sequential NPA hierarchy \cref{eq:SequentialNPASDP} and the sparse SOS hierarchy \cref{eq:sparseSOSSDP} are conically dual.
\end{proposition}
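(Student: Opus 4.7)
The plan is to carry out a standard conic Lagrangian duality computation, treating the sequential NPA hierarchy \eqref{eq:SequentialNPASDP} as the primal and extracting the sparse SOS hierarchy \eqref{eq:sparseSOSSDP} as the dual. The first step is to rewrite the primal in clean conic form: the decision variables are the subnormalized moment matrices $\{\Theta^{(n)}(a|x)\}_{a,x}$, each lying in the PSD cone of matrices indexed by monomials in $\cB_n$. All remaining conditions (probability extraction, Bob's POVM relations, the strongly no-signaling constraint $\sum_a \Theta^{(n)}(a|x) = \sum_a \Theta^{(n)}(a|x')$, and normalization) are linear equalities on these matrices, fitting the template $\max \langle c, \Theta\rangle$ subject to $\mathcal{A}(\Theta) = b$, $\Theta \succeq 0$.

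Next, I would introduce Lagrange multipliers for each linear constraint family: a scalar $m$ for normalization $\Theta^{(n)}_{\id,\id}=1$, scalars $\lambda_{abxy}$ for the probability-extraction equations tying $\Theta^{(n)}(a|x)_{\id,\B}$ to $p(ab|xy)$, polynomial-indexed multipliers $(p_x, q_x)$ with $p_x, q_x \in \cB_n$ for the strongly no-signaling equalities, and Bob-side multipliers that get absorbed into the SOS cone $\mathcal{M}_{(n)}$ via the projectivity reduction (justified by Naimark dilation, or equivalently the WLOG assumption $\B^2 = \B$ made in the text). After forming the Lagrangian and taking the infimum over $\{\Theta^{(n)}(a|x) \succeq 0\}$, positivity forces the coefficient block attached to each $\Theta^{(n)}(a|x)$ to be itself PSD, which by Gram/Cholesky factorization translates into a sum of squares of polynomials linear in the single Alice operator $\A$ — precisely the first piece $\sum_i (\sum_{a,x} \A f_{a|x,i} + g_i)^*(\sum_{a,x} \A f_{a|x,i} + g_i)$ of $\mathcal{M}_{(n)}$.

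The crucial identification is between the strongly no-signaling multipliers and the ideal-generating term $\sum_x p_x^*(\id - \sum_a \A) q_x$ in $\mathcal{M}_{(n)}$: the no-signaling constraint can be rewritten as the statement that the functional vanishes on $w^*(\id - \sum_a \A) v$ for all $w, v \in \cB_n$, so the dual multipliers combine precisely into polynomial pairs $(p_x, q_x)$ flanking $(\id - \sum_a \A)$. Assembling everything, the dualized problem becomes $\max m$ subject to an identity $\beta - m\id = s + \sum_{abxy} \lambda_{abxy}(\A \B - p(ab|xy))$ with $s \in \mathcal{M}_{(n)}$, matching \eqref{eq:sparseSOSSDP}. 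Strong duality (absence of a duality gap) then follows from Slater's condition applied to the primal, using a strictly feasible interior point such as the moment matrices induced by a maximally noisy commuting-operator strategy, which yields $\Theta^{(n)}(a|x) \succ 0$ compatible with all equalities. The main obstacle I anticipate is the bookkeeping: carefully matching each Lagrange multiplier to the correct structural piece of $\mathcal{M}_{(n)}$, and verifying that the projectivity reduction cleanly absorbs Bob's POVM completeness into the SOS cone without introducing spurious cross-terms involving products $\A A_{a'|x'}$ of two distinct Alice operators — an issue that requires the strongly no-signaling constraint itself to neutralize in the pairing.
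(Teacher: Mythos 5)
Your proposal takes a genuinely different route from the paper: you set up a Lagrangian conic-duality derivation starting from the primal SDP \eqref{eq:SequentialNPASDP} and build the dual constraint-by-constraint, whereas the paper proceeds by a direct cone-correspondence argument. The paper defines the dual cone $\mathcal{M}_{(n)}^{\vee}$ and exhibits an explicit bijection between its elements and feasible moment solutions: one direction simply reads off $\Theta^{(n)}(a|x)_{w,v} = L(w^*\A v)$; the other builds $L_{a|x}$ from the moment matrices, sets $L=\sum_a L_{a|x}$, and checks $L\geq 0$ on the generators of $\mathcal{M}_{(n)}$ by a short computation using projectivity and no-signaling. Your route is more mechanical (and could in principle re-derive the dual from scratch), while the paper's is shorter and makes the moment/SOS pairing visible by hand.

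Two points deserve attention. First, the concern you raise about cross-terms $\A A_{a'|x'}$ with $(a,x)\neq(a',x')$ is founded on a misreading of the sparse SOS cone: in the sparse framework of~\cite{klep2022sparse}, the squares are taken clique by clique, each clique consisting of a single Alice operator $\A$ together with all of Bob's generators. No products of two distinct Alice operators ever arise in $\mathcal{M}_{(n)}$, so they need not be ``neutralized'' by the no-signaling multipliers. This is precisely what lets the paper verify positivity of $L$ on squares $(\A f+g)^*(\A f+g)$ one $(a,x)$ at a time:
\begin{align*}
  L\bigl((\A f + g)^*(\A f + g)\bigr)
   = L_{a|x}\bigl((f+g)^*(f+g)\bigr) + \sum_{a'\neq a} L_{a'|x}(g^*g) \geq 0,
\end{align*}
which uses projectivity, the identity $L=\sum_{a'}L_{a'|x}$ coming from the strongly no-signaling condition, and positivity of each $\Theta^{(n)}(a|x)$. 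Second, the Slater/strong-duality step you append is not part of the claim --- the proposition asserts only conic duality (the structural pairing of cones), not absence of a duality gap --- and your strict-feasibility argument via a ``maximally noisy commuting-operator strategy'' would need actual verification that the resulting moment matrices are strictly positive definite while satisfying the equality constraints, which is not automatic. If you drop that step and replace the cross-term cancellation claim with the observation that the sparse cone has no cross-terms, your Lagrangian derivation is a legitimate alternative proof.
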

\begin{proof}
    Define the dual cone of $\mathcal{M}_{(n)}$ as $\mathcal{M}_{(n)}^{\vee} = \{L: V_{(n)} \to \mathds{C} \text{ linear functional} \mid L(\mathcal{M}_{(n)}) \geq 0 \}$.
    We shall show that every dual feasible solution for the sparse SOS hierarchy corresponds to a feasible moment solution for the sequential NPA hierarchy, and vice versa.
    
    For the easier direction (SOS $\implies$ moment), if $L \in \mathcal{M}_{(n)}^{\vee}$, then by definition for every SOS $f \in \mathcal{M}_{(n)}$ we have $L(f) \geq 0$. We can identify the entries of the moment matrices, analogous to \cref{eq:MomentMatrixStateIdentification}, by
    \begin{align}\label{eq:MomentFunctionalFormula}
        \Theta^{(n)}(a|x)_{w, v} = L(w^* \A v)
    \end{align}
    and check that they satisfy \cref{eq:SequentialNPASDP}.
    
    For the converse (moment $\implies$ SOS), suppose $\{ \Theta^{(n)}(a|x) \}$ is a solution of \cref{eq:SequentialNPASDP}.
    Then for each $a,x$, define linear functionals $L_{a|x}$ from $\Theta^{(n)}(a|x)$ again with \cref{eq:MomentFunctionalFormula}, and, using the strongly no-signaling condition, define $L = \sum_a L_{a|x}$.
    Then the positive semidefiniteness of each $\Theta^{(n)}(a|x)$ implies that for any $f$ polynomial in $\B$ with degree $\leq n$, 
    \begin{align*}
        L_{a|x}(f^* f),\, L(f^* f) \geq 0.
    \end{align*}
    Moreover, under the projective assumption, one can directly compute that for any $f, g$ polynomials in $\B$ of degree $\leq n$, that
    \begin{align*}
        L( (\A f + g)^* (\A f + g) ) &= L( \A f^* f + \A f^* g + \A g^* f + g^* g) \\
        & = L_{a|x}(f^* f + f^* g + g^* f) + L(g^* g) \\
        & = L_{a|x}( (f+g)^* (f+g)) - L_{a|x}(g^* g) + L(g^* g) \\
        & = L_{a|x}( (f+g)^* (f+g)) + \sum_{a' \neq a} L_{a'|x}(g^* g) \geq 0,
    \end{align*}
    It follows that $L$ is nonnegative on the entire $\mathcal{M}_{(n)}$; that is, $L \in \mathcal{M}_{(n)}^{\vee}$.
\end{proof}

\begin{remark}
    There is an equivalent formulation of \cref{eq:sparseSOSSDP} such that, while the formulation of the SDP problem becomes more complicated, the connection to~\cite{klep2022sparse} is clearer.
    Instead, consider a different sparse SOS cone
    \begin{align*}
        \mathcal{M}_{(n)} = \{ \sum_i (\sum_{a, x} \A f_{a|x, i} + g_i)^*(\sum_{a, x} \A f_{a|x, i} + g_i)\mid f_{a|x, i}, g_i \in \cB_{n} \}.
    \end{align*}
    We then compensate the smaller sparse SOS cone with more Lagrange multipliers $\lambda_{u_x v_x}$ for every pair of monomials $u_x, v_x$ in $\B$:
    \begin{equation}\label{eq:sparseSOSAlternativeSDP}
    \begin{aligned}
        \gamevalueSparse{\cG}{n} = \max_{m,\,s,\,\{\lambda_{abxy}\}, \lambda_{u_x v_x}} \quad & \, m \\
        \text{s.t.} \quad 
        \beta - m\id &= s + \sum_{a,b,x,y} \lambda_{abxy} \biggl( \A\B - p(ab|xy) \biggr) \\
        &\quad + \sum_x \sum_{u_x,v_x} u_x^* \biggl( \id - \sum_a \A \biggr) v_x, \\
        & s \in \mathcal{M}_{(n)}, \quad \text{where } u_x, v_x \text{ run through all monomials in } \cB_n.
    \end{aligned}
\end{equation}
    This alternative formulation satisfies the running intersection property in~\cite{klep2022sparse} and hence belongs to a special case of the sparse SOS optimization.
    Then, it is shown~\cite{klep2022sparse} that, asymptotically as $n \to \infty$, this formulation converges to the standard SOS hierarchy, which is dual to the standard NPA hierarchy.
    At finite levels, however, there is generally no degree guarantee as the sparse SOS certificate generally requires a higher degree than the dense (i.e., usual) SOS hierarchy (analogous to \cref{prop:CompareStandardSequentialNPA}).
    The numerical analysis of sparse SOS hierarchy vs.\ dense SOS hierarchy~\cite[Chapter~6.7]{magron2023sparse} provides insight into the potential numerical performance of the sequential NPA hierarchy vs.\ the standard one due to \cref{prop:SequentialDualSparse}.
\end{remark}

\section{On the necessity of the NPA hierarchy for quantitative quantum soundness}\label{sec:NecessityOfNPAConvRate}
For bipartite Bell games whose sequential NPA hierarchy converge at some finite level, our \cref{cor:MainBoundFiniteSecurityQCValue} confirms the quantitative quantum soundness of its compiled version.
However, deciding whether a correlation admits a finite-dimensional quantum realization is undecidable~\cite{fu2025membership}.
It is then of interest to understand if our \cref{cor:MainBoundFiniteSecurityQCValue} can be strengthened to get rid of the finite-convergence or flat-optimality assumption.

In particular, \cref{thm:GeneralQuantBoundFiniteSecurityQCValue} establishes that $ \gamevalueCompile{\lambda}{\cG_{\comp},S} \leq \gamevalueqcopt{\cG} + \epsilon(n) + \etaThm_{S, n}(\lambda)$.
The bound's tightness depends on two components: a game-specific function $\epsilon(n)$ which quantifies the approximation error of the sequential NPA hierarchy, and an NPA-level-dependent negligible function $\etaThm_{S, n}(\lambda)$ derived from the cryptographic security.
Having dedicated the previous section to a full characterization of this sequential NPA hierarchy, a natural question arises: for Bell games with no finite-sequential-NPA convergence, is the dependence on a \emph{game-specific} NPA approximation error $\epsilon(n)$, and consequently the NPA-level-dependent negligible function $\etaThm_{S, n}(\lambda)$, fundamentally necessary?
Or, could it be possible to prove a more universal statement of the form $\gamevalueCompile{\lambda}{\cG_{\comp},S} \leq \gamevalueqcopt{\cG} + \eta_u(\lambda)$, where $\eta_u(\lambda)$ is some negligible function that is universal for all games $\cG$?

This section explores arguments suggesting that game-specific NPA convergence information $\epsilon(n)$ and $\etaThm_{S, n}(\lambda)$ \emph{may be essential} for quantitatively upper-bounding quantum scores for compiled Bell games based on \cref{conj:MIPco=coRE}.

We first show in \cref{sec:AlmostCommutingAndSequentialStrat} how, for any game $\mathcal{G}$ and NPA level $n$, one can construct explicit almost-commuting quantum strategies and weakly signaling sequential strategies achieving the score $\omega_{\text{NPA}}^{(n)}(\mathcal{G})$.
Then, in \cref{sec:ChallengeCompilingHighScoringStrat}, we use the hardness conjecture $\mathrm{MIP}^{\mathrm{co}}=\mathrm{coRE}$ (\cref{conj:MIPco=coRE}) to argue for the existence of a family of games $\cG^{(n)}$ where $\gamevalueNPA{\cG}{n}$ is substantially larger than $\gamevalueqcopt{\cG^{(n)}}$, leading to the existence of high-scoring strategies $S^{(n)}, \tilde{S}^{(n)}, S^{(n)}_{\seq}, \tilde{S}^{(n)}_{\seq}$ (up to error $O(1/n^{1/4})$ for $S^{(n)}_{\seq}$).
Based on the family $\cG^{(n)}$, we then consider a compiled Bell game $\cG_{\comp} = (\cG^{(n(\lambda))}_{\comp})_{\lambda}$ for some function $n=n(\lambda)$, where for each $\lambda$ the verifier and the prover play the game $\cG^{(n(\lambda))}$.
We argue the quantum soundness bounds for this $\cG_{\comp}$ may not be quantitative.
We then discuss the significant challenges in compiling these high-scoring strategies to a QPT strategy $(S^{(\lambda)}_{\comp})$ for the family of compiled games $\mathcal{G}^{(n(\lambda))}_{\comp}$.
Overcoming these challenges would prove the claim about the necessity of NPA approximation errors.

In addition, the line of reasoning in this section is essentially an inversion of \cref{sec:QuantitaiveBoundConvergentRateMain}.
While \cref{sec:QuantitaiveBoundConvergentRateMain} first bounded the compiled score by the sequential NPA hierarchy score (effectively analyzing the robustness of ``uncompiling'') and then assumed its rate of convergence to $\gamevalueqcopt{\cG}$, here we first identify games with NPA hierarchy converging arbitrarily slowly and then explore the challenges of compiling the corresponding strategies in a score-preserving way.

\subsection{Almost commuting and weakly signaling sequential strategies from NPA hierarchies}\label{sec:AlmostCommutingAndSequentialStrat}
Given a Bell game $\mathcal{G}$ and a solution to its $n$-th level NPA hierarchy, we can construct explicit quantum strategies that achieve the NPA value $\gamevalueNPA{\mathcal{G}}{n}$.
These strategies might not satisfy perfect commutation relations (for standard Bell games) or strong no-signaling (for sequential games), but their deviations are controlled.

In the proposition below, we propose two constructions.
The first, based on~\cite{coudron2015interactive}, gives strategies $S^{(n)}$ and $S^{(n)}_{\seq}$ with almost commutativity controlled by the operator norm.
The second construction is based on the flat extension technique that was already discussed in \cref{sec:FlatExtension}, leading to strategies $\tilde{S}^{(n)}$ and $\tilde{S}^{(n)}_{\seq}$ with almost commutativity controlled in the low-degree polynomial subspace.

\begin{proposition}\label{prop:ExistenceAlmostCommuteAndSequentialStrat}
Let $\cG$ be a Bell game, $\cG_{\seq}$ be its sequential version, and $n \in \mathds{N}$.
Suppose $\gamevalueNPA{\cG}{n}$ is the optimal value of the $n$-th level of the standard NPA hierarchy for $\cG$.
Then:
\begin{enumerate}[label=(\roman*)]
    \item There exists an explicit quantum strategy $S^{(n)} = (\sigma, \{\A\}, \{\B\})$ for $\mathcal{G}$, on a Hilbert space $\cH$ of dimension $d$ (potentially $\exp(O(n))$), achieving score $\omega(\cG, S^{(n)}) = \gamevalueNPA{\cG}{n}$ such that
    \begin{align*}
        \lVert [ \A, \B] \rVert_{\op} \leq \delta = O(\frac{1}{\sqrt{n}}).
    \end{align*}
    That is, $S^{(n)}$ is an almost commuting finite-dimensional quantum strategy, with commutativity improving in operator norm as $n$ increases.
    
    \item There exists an explicit sequential quantum strategy $S^{(n)}_{\seq} = (\sigma_{a|x}, \{\B\})$ for $\cG_{\seq}$, on a Hilbert space $\cH$ of dimension $d$ (potentially $\exp(O(n))$), achieving score
    \begin{align*}
        \omega(\cG_{\seq}, S^{(n)}_{\seq}) \in [\gamevalueNPA{\cG}{n} - O(\frac{1}{n^{1/4}}), \gamevalueNPA{\cG}{n} + O(\frac{1}{n^{1/4}})].
    \end{align*}
    It satisfies the weak signaling condition:
    \begin{align*}
        \lvert \Tr{ (\sum_a \sigma_{a|x} - \sigma) P(\B)} \rvert \leq \mathrm{const}(P, \cG) \cdot \sqrt{\delta} = O(\frac{\mathrm{const}(P, \cG)}{{n^{1/4}}}),
    \end{align*}
    for any polynomial $P(\B)$ in Bob's operators and $\mathrm{const}(P, \cG)$ a constant depending on $P$ and the game $\cG$.
    
    \item There exists an explicit quantum strategy $\tilde{S}^{(n)} = (\tilde\sigma, \{\tilde{A}_{a|x}\}, \{\tilde{B}_{b|y}\})$ for $\mathcal{G}$, on a Hilbert space $\tilde\cH$ of dimension $\tilde{d}$ (potentially $\exp(O(n))$), achieving score $\omega(\tilde{S}^{(n)}) = \gamevalueNPA{\cG}{n}$ such that
    \begin{align*}
        \Tr{\tilde\sigma [\tilde{A}_{a|x}, \tilde{B}_{b|y}] P(\{\tilde{A}_{a|x}\}, \{\tilde{B}_{b|y}\})} = 0,
    \end{align*}
    where $P$ is any polynomial in $\tilde{A}_{a|x}, \tilde{B}_{b|y}$ for which $\deg( [\tilde{A}_{a|x}, \tilde{B}_{b|y}] P ) \leq 2n$.
    That is, $\tilde S^{(n)}$ is a finite-dimensional strategy whose operators appear to commute when tested against polynomials up to a certain degree, a property enforced by the $n$-th level NPA constraints.

    \item There exists an explicit sequential quantum strategy $\tilde{S}^{(n)}_{\seq} = (\tilde{\sigma}_{a|x}, \{\tilde{B}_{b|y}\})$ for $\cG_{\seq}$, on a Hilbert space $\tilde\cH$ of dimension $\tilde{d}$ (potentially $\exp(O(n))$), achieving score $\omega(\cG_{\seq}, \tilde{S}^{(n)}_{\seq}) = \gamevalueNPA{\cG}{n}$.
    It satisfies the weak signaling condition:
    \begin{align*}
        \Tr{ (\sum_a \tilde{\sigma}_{a|x} - \tilde{\sigma}) P(\tilde{B}_{b|y})} = 0,
    \end{align*}
    for any polynomial $P(\tilde{B}_{b|y})$ such that $\deg(P) \leq 2n - 2$.
\end{enumerate}
\end{proposition}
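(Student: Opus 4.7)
The plan is to provide two constructions: a norm-based truncation for parts (i) and (ii), and the flat-extension construction from \cref{sec:FlatExtension,sec:FiniteSecurityGNSStrategy} for parts (iii) and (iv). Both start from an optimal feasible solution of the $n$-th level standard NPA hierarchy \cref{eq:StandardNPASDP} with value $\gamevalueNPA{\cG}{n}$.

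For (i), I would adapt the argument of \cite{coudron2015interactive}. From the optimal moment matrix $\Gamma^{(n)}$, a GNS-type construction on the subspace spanned by monomials of degree $\leq n-1$ applied to the cyclic vector yields operators $\A, \B$ by compressing left-multiplication to this subspace; the Bell score is preserved since it only probes degree-two moments. The commutator $[\A, \B]$ is supported in the boundary layer between degrees $n-1$ and $n$, and a telescoping estimate based on the positivity of $\Gamma^{(n)}$ gives $\lVert [\A, \B] \rVert_{op} = O(1/\sqrt{n})$. For (ii), I would keep Bob's operators and define the sequential sub-states $\sigma_{a|x}(P(\B)) := \sandwich{\Omega}{\A \, \pi(P(\B))}{\Omega}$ together with the reference state $\sigma(P(\B)) := \sandwich{\Omega}{\pi(P(\B))}{\Omega}$. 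The difference $\sum_a \sigma_{a|x}(P) - \sigma(P)$ telescopes, after moving $P(\B)$ past $\sum_a \A$ via $\deg(P)$-many single commutator swaps, into a sum controlled by $\lVert [\A, \B] \rVert_{op}$, yielding the $O(\deg(P)/\sqrt{n})$ bound.

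For (iii), I would instead flat-extend $\Gamma^{(n)}$ via \cref{prop:FlatExtensionLinearAlgebraFact} and apply \cref{prop:GNSfromFlatExtension} to obtain a finite-dimensional representation $(\tilde\cH, \pi, \ket{\tilde\Omega})$ of the full free algebra generated by $\{\A, \B\}$. Setting $\tilde{A}_{a|x}:=\pi(\A)$, $\tilde{B}_{b|y}:=\pi(\B)$, and $\tilde\sigma(\cdot):=\sandwich{\tilde\Omega}{\pi(\cdot)}{\tilde\Omega}$, the moments up to degree $2n$ agree with $\Gamma^{(n)}$ by construction. The commutation constraint $\Gamma^{(n)}_{w,\, [\A, \B]\, v} = 0$ of the $n$-th NPA level then transfers directly to $\Tr{\tilde\sigma [\tilde{A}_{a|x}, \tilde{B}_{b|y}] P} = 0$ whenever $\deg([\tilde{A}_{a|x}, \tilde{B}_{b|y}] P) \leq 2n$. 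For (iv), I would run the same construction on the sequential moment matrices $\Theta^{(n)}(a|x)$, following the recipe in the proof of \cref{prop:CompareStandardSequentialNPA}: flat-extend, perform the GNS construction on the average state $\sum_a \Theta^{(n)}(a|x)$, and build Alice's sub-states via Arveson's Radon-Nikodym derivative. The degree bound $2n-2$ is the best one compatible with this approach, since sequential off-diagonal moment entries at total degrees $2n-1$ and $2n$ are modified by the flat extension, whereas those of degree $\leq 2n-2$ are preserved and inherit the strongly-no-signaling equality $\sum_a \Theta^{(n)}(a|x) = \sum_a \Theta^{(n)}(a|x')$.

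The main obstacle is the $O(1/\sqrt{n})$ operator-norm scaling in parts (i) and (ii), which requires the specific weighted boundary-layer analysis of \cite{coudron2015interactive} rather than the elementary degree bookkeeping used for (iii) and (iv). Everything else reduces to careful tracking of which moment entries are preserved by the respective construction, and to invoking the flat-extension toolkit already developed earlier in the paper.
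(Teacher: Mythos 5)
Your high-level structure — cite/adapt Coudron--Vidick for (i), derive (ii) from (i) by passing to Alice's post-measurement states, flat extension plus GNS for (iii), and an analogous sequential derivation for (iv) — matches the paper's approach. Two points deserve comment.

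First, your definition of the sequential sub-state in (ii) is off: you set $\sigma_{a|x}(P(\B)) := \sandwich{\Omega}{\A \, \pi(P(\B))}{\Omega}$. Since the compression construction preserves $\sum_a \A = \id$, this would immediately give $\sum_a \sigma_{a|x} = \sigma$ \emph{exactly}, so your telescoping argument would be vacuous and there would be nothing to bound. The correct object is the Lüders post-measurement state, i.e., $\sigma_{a|x} = \A \ketbra{\Omega}{\Omega} \A$ (after Naimark dilating so that $\A$ is projective), which is what the paper uses. Then $\Tr{(\sigma - \sum_a \sigma_{a|x})P(\B)} = \sum_a \Tr{\A\sigma[\A, P(\B)]}$, and Hölder's inequality together with the telescoping bound $\lVert[\A,P(\B)]\rVert_{op} \leq \deg(P)\,\delta$ gives the claimed estimate. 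Your ``move $P(\B)$ past $\sum_a \A$'' picture only becomes meaningful once the two factors of $\A$ (one on each side of the state) are in place.

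Second, for (iv) you propose running the flat-extension construction directly on an optimal solution $\Theta^{(n)}(a|x)$ of the \emph{sequential} NPA hierarchy. This is a different route from the paper, which derives $\tilde{S}^{(n)}_{\seq}$ from $\tilde{S}^{(n)}$ of part (iii) exactly as (ii) is derived from (i), and verifies $\sum_a \tilde\sigma_{a|x}(P) - \tilde\sigma(P) = \sum_a \tilde\sigma([\tilde{A}_{a|x}, P]\tilde{A}_{a|x}) = 0$ for $\deg(P) \le 2n-2$. Your route has a score mismatch to resolve: the proposition claims score $\gamevalueNPA{\cG}{n}$ (the \emph{standard} NPA value), whereas an optimal sequential solution achieves $\gamevalueSeqNPA{\cG}{n}$, which may be strictly larger. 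To make your version work you would need to take $\Theta^{(n)}(a|x)$ to be the sequential moments \emph{induced} by the optimal standard moment matrix $\Gamma^{(n)}$ via $\Theta^{(n-1)}(a|x)_{w,v} = \Gamma^{(n)}_{w, \A v}$, rather than an independent sequential optimum. With that fix your degree bookkeeping for the $2n-2$ bound is sound, and the two routes yield the same conclusion.

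Your remark on part (iii) that ``moments up to degree $2n$ agree with $\Gamma^{(n)}$'' is slightly generous: the flat extension replaces the bottom-right block, so entries $\Gamma^{(n)}_{w,v}$ with $\deg(w)=\deg(v)=n$ are modified. Only moments of total degree $\leq 2n-1$ are preserved exactly. The paper omits this detail (``We omit the details since this is similar to \cref{sec:FlatExtension}''), so it does not affect the comparison, but keep it in mind when verifying the degree cutoff in (iii).
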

\begin{proof}
    Statement \emph{(i)} is due to~\cite[Theorem~2]{coudron2015interactive}.
    For statement \emph{(ii)}, one can construct a sequential strategy $S^{(n)}_\seq$ for $\cG_{\seq}$ from $S^{(n)}$.
    From POVMs $\A$ and state $\sigma$ of the strategy $S^{(n)}$, consider its square root $\A^{1/2}$ inducing a post-measured state $\sigma_{a|x} = \A^{1/2} \sigma \A^{1/2}$.
    This defines the strategy $S^{(n)}_{\seq}$ with the corresponding correlation is $p'(ab|xy) = \Tr{ \sigma_{a|x} \B } = \Tr{ \A^{1/2} \sigma \A^{1/2} \B }$ and score $\omega(\cG_{\seq}, S^{(n)}_{\seq})$.
    
    By \cite[Lemma~2.1]{olsen1989corona} and the commutator bound from statement \emph{(i)}, we have
    \begin{align*}
        \norm{[\A^{1/2}, \B]}_{\op} \leq (2 \norm{\A, \B})^{1/2}_{\op} = \sqrt{ 2\delta} = O(\frac{1}{n^{1/4}}).
    \end{align*}
    There are two consequences.
    First, one calculates with trace cyclicity and H\"older's inequality for Schatten norms that
    \begin{align*}
        \abs{p(ab|xy) - p'(ab|xy)} &= \abs{ \Tr{ \sigma \A^{1/2} \cdot \A^{1/2} \B } - \Tr{ \sigma \A^{1/2} \cdot \B \A^{1/2} } } \\
        &\leq \norm{ \sigma \A^{1/2} }_1 \norm{ [\A^{1/2}, \B] }_\op \leq \norm{ [\A^{1/2}, \B] }_\op \leq O(\frac{1}{n^{1/4}}),
    \end{align*}
    subsequently the Bell score satisfies
    \begin{align*}
        \abs{\omega(\cG_{\seq}, S^{(n)}_{\seq}) - \gamevalueNPA{\cG}{n}} \leq O(\frac{1}{n^{1/4}}).
    \end{align*}
    Second, for any polynomial $P(\B)$ in Bob's operators, with again trace cyclicity and H\"older's inequality for Schatten norms, we have
    \begin{align*}
        \abs{\Tr{ (\sigma - \sum_a \sigma_{a|x}) P(\B)}} &= \abs{ \sum_a \Tr{ \sigma \A^{1/2} \cdot [\A^{1/2}, P(\B)]} } \\
        &\leq \sum_a \norm{ \sigma \A^{1/2} }_1 \norm{ [\A^{1/2}, P(\B)] }_\op \\
        &\leq \abs{I_A} \cdot (\text{max of coefficients of } P) \cdot \deg(P) \cdot (\text{\# of terms of } P) \cdot \sqrt{\delta}\\
        &\leq \underbrace{\abs{I_A} \cdot (\text{max of coefficients of } P) \cdot \deg(P) \cdot \sum_k^{\deg(P)} ( \abs{I_B}\abs{I_Y})^k}_{\mathrm{const}(P, \cG)} \cdot \sqrt{\delta} \\
        &\leq O(\frac{\mathrm{const}(P, \cG)}{n^{1/4}}).
    \end{align*}
    
    For statement \emph{(iii)}, denote by $\Gamma^n$ the moment matrix associated with $\gamevalueNPA{\cG}{n}$.
    The GNS representation of the flat extension of $\Gamma^n$ gives rise to the desired quantum strategy $\tilde{S}^{(n)}$.
    We omit the details since this is similar to \cref{sec:FlatExtension}. 

    For statement \emph{(iv)}, the construction for $\tilde{S}^{(n)}_\seq$ from $\tilde{S}^{(n)}$ is analogous with the square root operator $\tilde{A}_{a|x}^{1/2}$ and $\sigma_{a|x} = \tilde{A}_{a|x}^{1/2} \sigma \tilde{A}_{a|x}^{1/2}$.
    Since $[\tilde{A}_{a|x}, P(\tilde{B}_{b|y})] = 0$ for any $P(\tilde{B}_{b|y})$ of degree $\leq 2n-1$, direct calculation shows that $[Q(\tilde{A}_{a|x}), P(\tilde{B}_{b|y})] = 0$ for any polynomial $Q$ in $\tilde{A}_{a|x}$.
    It follows that $[\tilde{A}_{a|x}^{1/2}, P(\tilde{B}_{b|y})] = 0$ since $\tilde{A}_{a|x}^{1/2}$ lies in the $C^*$-algebra generated by $\tilde{A}_{a|x}$.

    Hence, the score of $\tilde{S}^{(n)}_\seq$ agrees with $\gamevalueNPA{\cG}{n}$ because, by cyclicity and low degree commutativity, that
    \begin{align*}
        \Tr{ \A^{1/2} \sigma \A^{1/2} \B } = \Tr{ \sigma \A \B }.
    \end{align*}
    Finally, the same reason implies
    \begin{align*}
        \Tr{ (\sum_a\sigma'_{a|x} - \sigma) P(\tilde{B}_{b|y})} = \sum_a \Tr{\tilde{A}_{a|x}\sigma [\tilde{A}_{a|x}, P(\tilde{B}_{b|y})]} = 0,
    \end{align*}
    for any $P(\tilde{B}_{b|y})$ such that $\deg([\tilde{A}_{a|x}, P(\tilde{B}_{b|y})]\tilde{A}_{a|x}) \leq 2n$.
\end{proof}

\subsection{The challenge of compiling high-scoring strategies for games with slow NPA convergence}\label{sec:ChallengeCompilingHighScoringStrat}
The strategies from \cref{prop:ExistenceAlmostCommuteAndSequentialStrat} achieve the $n$-th level NPA score. If we can find games where this NPA score is significantly higher than the true quantum commuting score $\gamevalueqcopt{\cG}$, these strategies become candidates for ``cheating'' strategies that outperform any legitimate commuting quantum strategy.
To argue for the existence of such games, we rely on a standard hardness conjecture from quantum complexity theory.

\begin{conjecture}\label{conj:MIPco=coRE}
$\mathrm{MIP}^{\mathrm{co}}=\mathrm{coRE}$ (see e.g.,~\cite{ji2021mip}).
More precisely, we conjecture that the following decision problem is $\mathrm{coRE}$-hard:
    \begin{align}\label{eq:coREDecisionProblem}
    \text{Given a game } \mathcal{G} \text{ with promise that } \gamevalueqcopt{\cG} = 1 \text{ or } \gamevalueqcopt{\cG} \leq 1/4, \text{ decide which case holds.}
    \end{align}
\end{conjecture}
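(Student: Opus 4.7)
The statement in question is flagged as an open conjecture, so any proof proposal is necessarily speculative; nevertheless, here is how I would attempt it. The plan is to establish both inclusions separately, namely $\mathrm{MIP}^{\mathrm{co}} \subseteq \mathrm{coRE}$ and $\mathrm{coRE} \subseteq \mathrm{MIP}^{\mathrm{co}}$, with the conjectural content concentrated in the second. For the first inclusion, which is the ``easy'' direction, I would run the standard NPA hierarchy of the verifier's game: since $\gamevalueNPA{\cG}{n}$ monotonically decreases to $\gamevalueqcopt{\cG}$ and each level is a computable SDP, one can co-recursively enumerate rational upper bounds. Concretely, given the promise $\gamevalueqcopt{\cG} = 1$ versus $\gamevalueqcopt{\cG} \leq 1/4$, the procedure solves successive NPA levels and halts rejecting as soon as some $\gamevalueNPA{\cG}{n} < 1/2$; in the soundness case it halts, in the completeness case it runs forever, which is exactly the $\mathrm{coRE}$ acceptance convention.

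The hard direction, $\mathrm{coRE} \subseteq \mathrm{MIP}^{\mathrm{co}}$, is the commuting-operator analog of the $\mathrm{MIP}^{*} = \mathrm{RE}$ theorem, and the natural approach is to try to port the compression-based proof of that result to the commuting framework. I would proceed in three steps. First, I would fix an undecidable $\mathrm{coRE}$ problem (e.g.\ non-halting of a Turing machine $M$) and construct, for each input $w$, a nonlocal game $\cG_{M,w}$ such that if $M$ does not halt on $w$ then $\gamevalueqcopt{\cG_{M,w}} = 1$, while if $M$ does halt on $w$ then $\gamevalueqcopt{\cG_{M,w}} \leq 1/4$. Second, I would build this reduction via an iterated compression theorem: starting from a ``base game'' encoding a single Turing machine step and composing with an introspection/oracularization gadget that lets one prover verify the other's next-step computation using commuting operator resources, such that after $k$ rounds of compression a bounded-size game simulates $2^k$ steps of $M$. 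Third, I would close the loop by a fixed-point argument analogous to Ji et al., where the compressed game description is fed back into the reduction itself.

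The main obstacle, where I expect this program to genuinely break down with current tools, is the soundness analysis of compression in the commuting-operator model. In $\mathrm{MIP}^{*} = \mathrm{RE}$ the soundness of question-reduction and answer-reduction gadgets relies on robust self-testing of Pauli measurements on finite-dimensional tensor-product systems, extractable via approximate representation theory on $\cH_A \otimes \cH_B$. In the commuting operator setting there is no tensor factorization: Alice's and Bob's operators merely commute inside a single (possibly infinite-dimensional, possibly non-hyperfinite) algebra, and the failure of Tsirelson's conjecture---which is itself a consequence of $\mathrm{MIP}^{*} \neq \mathrm{MIP}^{\mathrm{co}}$---tells us that no finite-dimensional self-testing argument can be lifted verbatim. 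A plausible workaround is to replace finite-dimensional self-testing by the operator-algebraic self-testing framework of \cite{paddock2024operator}, combined with robust stability results for commuting approximate representations; however, proving a commuting-operator version of the key stability/rounding lemma (analogous to the Gowers--Hatami or quantum soundness of the low-degree test) appears to require genuinely new operator-algebraic input, and is where I would expect most of the technical effort to concentrate.

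Finally, I would cross-check the final reduction against the converse direction: the sequential NPA machinery developed in \cref{sec:SequentialNPA}, together with \cref{prop:ExistenceOfHardGamesForNPA}, already predicts families $\cG^{(n)}$ on which the NPA approximation error does not close uniformly, so any successful proof must in particular exhibit explicit such families with computable descriptions. Consistency with these predictions would serve as a useful sanity check on both the completeness analysis of the reduction and the claimed $\mathrm{coRE}$-hardness of the gapped decision problem \eqref{eq:coREDecisionProblem}.
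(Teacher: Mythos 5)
This statement is a \emph{conjecture}, not a theorem: the paper explicitly labels it as such, and no proof appears anywhere in the manuscript. It is invoked as an unproven hypothesis in \cref{sec:NecessityOfNPAConvRate} (e.g.\ in \cref{prop:ExistenceOfHardGamesForNPA}, whose proof assumes it). There is therefore no ``paper's own proof'' against which your attempt can be checked, and you correctly flag this at the outset.

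A few remarks on the content of your sketch. The ``easy'' inclusion $\mathrm{MIP}^{\mathrm{co}} \subseteq \mathrm{coRE}$ that you outline via co-recursive enumeration of NPA upper bounds is indeed the standard and known argument (it follows directly from the monotone convergence of $\gamevalueNPA{\cG}{n}$ to $\gamevalueqcopt{\cG}$), but note that it is not part of what is being conjectured: the displayed statement \eqref{eq:coREDecisionProblem} asserts only $\mathrm{coRE}$-\emph{hardness}, i.e.\ the reverse containment. For that hard direction, your compression-based program modeled on $\mathrm{MIP}^{*} = \mathrm{RE}$ is the approach most people would try, and you identify the correct bottleneck: in the commuting-operator model there is no tensor factorization $\cH_A \otimes \cH_B$, so the finite-dimensional Pauli self-testing and Gowers--Hatami-type stability lemmas that drive question- and answer-reduction soundness in Ji et al.\ do not port over, and there is currently no operator-algebraic substitute. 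That obstacle is precisely why this conjecture has been open for several years. Your proposal is thus a plausible research program rather than a proof, and it leaves the conjecture exactly where the paper leaves it --- open.
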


This conjecture implies the existence of games where finite levels of the NPA hierarchy significantly overestimate the true quantum score.
\begin{proposition}\label{prop:ExistenceOfHardGamesForNPA}
Assume \cref{conj:MIPco=coRE}. Then for any integer $n \in \mathds{N}$, there exists a Bell game $\mathcal{G}^{(n)}$ such that its true optimal quantum commuting score satisfies $\gamevalueqcopt{\cG^{(n)}} \leq 1/4$, while the $n$-th level of the standard NPA hierarchy gives a bound $\gamevalueNPA{\cG^{(n)}}{n} \geq 3/4$.
Consequently, there cannot be a universal computable rate of convergence $\epsilon(k) \to 0$ for the NPA hierarchy that holds for all games $\mathcal{G}$ and all levels $k$.
\end{proposition}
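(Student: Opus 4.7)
The plan is to argue by contradiction, exploiting the fact that the $n$-th level NPA value $\gamevalueNPA{\cG}{n}$ is \emph{computable} to arbitrary additive precision in finite time: it is defined by a finite-dimensional SDP whose size depends only on $n$ and $\cG$, with rational data whenever $\cG$ has rational weights, and solved by a standard interior-point method. Fix $n \in \mathds{N}$ and suppose, toward contradiction, that every bipartite Bell game $\cG$ with $\gamevalueqcopt{\cG} \leq 1/4$ also satisfies $\gamevalueNPA{\cG}{n} < 3/4$.

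Combined with the trivial relaxation property $\gamevalueNPA{\cG}{n} \geq \gamevalueqcopt{\cG}$, which forces $\gamevalueNPA{\cG}{n} \geq 1$ whenever $\gamevalueqcopt{\cG} = 1$, this dichotomy would yield a Turing machine that decides the promise problem \eqref{eq:coREDecisionProblem}: on input $\cG$, compute an approximation $\hat{\omega}$ of $\gamevalueNPA{\cG}{n}$ to additive error $1/16$, and output ``$\gamevalueqcopt{\cG} = 1$'' iff $\hat{\omega} \geq 7/8$. Under \cref{conj:MIPco=coRE} this problem is $\mathrm{coRE}$-hard, so such a decision procedure would collapse $\mathrm{coRE}$ into the class of decidable languages, contradicting the standard undecidability of the complement of the halting problem. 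Hence for every $n$ there must exist a game $\cG^{(n)}$ with $\gamevalueqcopt{\cG^{(n)}} \leq 1/4$ and $\gamevalueNPA{\cG^{(n)}}{n} \geq 3/4$, which is precisely the first claim.

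The ``consequently'' part follows by essentially the same idea. If a universal computable rate $\epsilon: \mathds{N} \to \mathds{R}_{\geq 0}$ with $\epsilon(k) \to 0$ and $\gamevalueNPA{\cG}{k} - \gamevalueqcopt{\cG} \leq \epsilon(k)$ for \emph{all} bipartite Bell games $\cG$ existed, one could effectively locate $k_0$ with $\epsilon(k_0) < 1/2$ and then distinguish the two cases of \eqref{eq:coREDecisionProblem} by computing $\gamevalueNPA{\cG}{k_0}$: the no-case yields $\gamevalueNPA{\cG}{k_0} \leq 3/4$ while the yes-case yields $\gamevalueNPA{\cG}{k_0} \geq 1$, triggering the same decidability contradiction. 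Equivalently, the family $\{\cG^{(n)}\}_{n}$ produced above already serves as a witness: for each candidate level $k$ the game $\cG^{(k)}$ exhibits a gap of at least $1/2$ between $\gamevalueNPA{\cG^{(k)}}{k}$ and $\gamevalueqcopt{\cG^{(k)}}$, so no single computable $\epsilon$ can dominate the NPA approximation error uniformly in the game.

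The only delicate point, rather than a deep obstacle, is the computable-analysis bookkeeping: one must confirm that for rational input games the $n$-th level NPA SDP is genuinely computable to any desired precision uniformly in $\cG$ (a standard but non-trivial fact about interior-point methods applied to SDPs of polynomially bounded bit-complexity in the input size), and that one may restrict to rational-weight games without loss of generality in the reduction underlying \cref{conj:MIPco=coRE}. Beyond this, the proof is a purely non-constructive diagonal argument via the contrapositive of the conjectured hardness: it never explicitly exhibits the games $\cG^{(n)}$, only their existence.
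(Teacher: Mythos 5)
Your proof is correct and follows essentially the same contradiction argument as the paper: assume the negation, convert the $n$-th (or $n_0$-th) level NPA value into a decision procedure for the promise problem in \eqref{eq:coREDecisionProblem}, and derive a contradiction with $\mathrm{coRE}$-hardness. The only genuine difference is that you spell out the numerical-precision bookkeeping for the SDP solve, which the paper's proof elides; the logical skeleton, including the handling of the ``consequently'' part via the family $(\cG^{(n)})_n$, is identical.
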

\begin{proof}
    We prove by contradiction. Assume the negation: there exists some $n_0$ such that for all Bell games $\cG$, if $\gamevalueNPA{\cG}{n_0}\geq 3/4$, then $\gamevalueqcopt{\cG} > 1/4$.
    
    Now, consider an arbitrary instance $\cG$ of the decision problem \cref{eq:coREDecisionProblem}.
    Then, due to $\cG$ fulfilling the promise of \cref{eq:coREDecisionProblem} and the negation of statement \emph{(i)}, we have the following algorithm for \cref{eq:coREDecisionProblem}:
    \begin{enumerate}
        \item Compute $\gamevalueNPA{\cG}{n_0}$ using NPA hierarchy at level $n_0$.
        \item If $\gamevalueNPA{\cG}{n_0} \geq 3/4$, then $\gamevalueqcopt{\cG} = 1$.
        \item Otherwise, one has $\gamevalueqcopt{\cG} \leq \gamevalueNPA{\cG}{n_0} < 3/4$, which forces that $\gamevalueqcopt{\cG} \leq 1/4$.
    \end{enumerate}
    This algorithm decides the problem in \cref{eq:coREDecisionProblem}, which contradicts its $\mathrm{coRE}$-hardness.
    Thus, the sequence of games $(\cG^{(n)})_{n \in \mathds{N}}$ must exist.
    Since the gap between $\gamevalueNPA{\cG^{(n)}}{n}$ and $\gamevalueqcopt{\cG^{(n)}}$ is $\geq 1/2$, any computable NPA approximation error $\epsilon(k) \to 0$ would violate the gap once $k=n$ is chosen so that $\epsilon(n) < 1/2$.
\end{proof}

If \cref{conj:MIPco=coRE} is false in a way that yields a universal computable convergence rate for the NPA hierarchy, then \cref{thm:GeneralQuantBoundFiniteSecurityQCValue}, together with this universal rate, would give quantitative soundness bounds for all bipartite Bell games.
Otherwise, \cref{prop:ExistenceOfHardGamesForNPA} establishes the existence of a family of Bell games $(\cG^{(n)})_{n \in \mathbb{N}}$ such that for each $n$, $\gamevalueqcopt{\cG^{(n)}} \leq 1/4$ while $\gamevalueNPA{\cG^{(n)}}{n} \geq 3/4$.
For each such game $\cG^{(n)}$, \cref{prop:ExistenceAlmostCommuteAndSequentialStrat} provides (uncompiled) strategies, such as $S^{(n)}$ or $\tilde{S}^{(n)}$ (and their sequential counterparts $S^{(n)}_{\text{seq}}, \tilde{S}^{(n)}_{\text{seq}}$), that achieve this high score $\gamevalueNPA{\cG^{(n)}}{n}$ (up to error of $O(1/n^{1/4})$ for $S^{(n)}_{\text{seq}}$).

The central challenge is to compile these high-scoring strategies into a QPT cheating strategy.
This involves defining a relationship $n=n(\lambda)$ (where $\lambda$ is the security parameter) for which we construct a compiled Bell game $\cG_{\comp} = (\cG^{(n(\lambda))}_{\comp})_{\lambda}$.
That is, for every $\lambda$ the verifier and the prover play the compiled version of the game $\cG^{(n(\lambda))}$.
Additionally, one needs to compile the high-scoring strategy $S^{(n(\lambda))}$ (or $\tilde{S}^{(n(\lambda))}$) for the game $\cG^{(n(\lambda))}$ into a QPT strategy $S^{(\lambda)}_{\comp}$ for the compiled game $\cG^{(n(\lambda))}_{\comp}$.
The goal is for $S^{(\lambda)}_{\comp}$ to be implementable in polynomial time in $\lambda$ and to achieve a score $\gamevalueCompile{\lambda}{\cG^{(n(\lambda))}_{\comp}, S^{(\lambda)}_{\comp}}$ that remains significantly above $\gamevalueqcopt{\cG^{(n(\lambda))}}$ (ideally, close to $3/4$).
If such a QPT strategy $S^{(\lambda)}_{\comp}$ can be constructed, it would indeed show that without game-specific knowledge of the NPA approximation error $\epsilon(n(\lambda))$, the verifier's soundness guarantee (\cref{thm:GeneralQuantBoundFiniteSecurityQCValue}) would be loose for the Bell game $\cG_{\comp} = (\cG^{(n(\lambda))}_{\comp})_{\lambda}$.

However, there are several significant obstacles to such a compilation:
\begin{enumerate}
    \item \textit{Signaling properties and QHE compatibility:} The sequential strategies $S^{(n)}_\seq$ and $\tilde{S}^{(n)}_\seq$ exhibit signaling whose nature depends on $n=n(\lambda)$.
    For $S^{(n)}_\seq$, the signaling is bounded by $O(\mathrm{const}(P, \cG)/n(\lambda)^{1/4})$ (\cref{prop:ExistenceAlmostCommuteAndSequentialStrat}\emph{(ii)}).
    For $n(\lambda)$ that is not supra-polynomial, this is non-negligible in $\lambda$ and seems to be in conflict with the QHE security assumptions (\cref{eq:securityassumptionOriginal}).
    Similarly, for $\tilde{S}^{(n)}_\seq$, zero signaling is guaranteed only for polynomials $P$ of degree up to $2n-2$ (\cref{prop:ExistenceAlmostCommuteAndSequentialStrat}\emph{(iv)} and its proof).
    This is weaker than requiring negligible signaling against polynomials of arbitrary degrees or at least polynomially large degree in the case of $n(\lambda)$ being sub-polynomial.
    These potentially large signaling properties present a direct challenge for compiling these strategies using existing QHE frameworks, especially under the requirement of efficient provers as discussed in the next item.

    \item \textit{Efficiency of the base strategies:} The strategies $S^{(n)}$ and $\tilde{S}^{(n)}$ from \cref{prop:ExistenceAlmostCommuteAndSequentialStrat} are constructed on Hilbert spaces $\cH, \tilde{\cH}$ whose dimensions $d, \tilde{d}$ can be $\exp(O(n))$ in the worst case (see \cref{rem:WhyIsLogHere}).
    On the other hand, the Solovay-Kitaev theorem~\cite[Eq.~(23)]{dawson2005solovay} implies any quantum operations acting on $\cH, \tilde{\cH}$ can be (up to an arbitrarily small error) approximated by $O(\poly(d))$ gates.
    This means that for $S^{(\lambda)}_\comp$ to form a QPT strategy, its circuit complexity must be polynomial in $\lambda$.
    If the underlying strategy $S^{(n)}$ and $\tilde{S}^{(n)}$ has a dimension exponential in $n$, the Solovay-Kitaev theorem implies that $n$ must be at most $O(\log(\lambda))$ for the compiled strategy to remain efficient.
    This potential constraint of $n = O(\log(\lambda))$ could, in turn, make the signaling effects (which scale with $n$) non-negligible in $\lambda$, presenting a significant hurdle for compiling these strategies.
    However, it is an open possibility that for specific families of games $\cG^{(n)}$ (e.g., those with more structure), or through alternative strategy constructions, efficient QPT implementations might be found even for $n=\poly(\lambda)$.

    \item \textit{QHE correctness for almost commuting strategies:} Standard proofs of QHE correctness for compiled games (e.g., KLVY~\cite{kalai2023quantum}) rely on an assumption of ``correctness with auxiliary input.''
    This assumption states that QHE evaluation on a register $A$ preserves its entanglement with an auxiliary register $B$. This is well-suited for perfectly commuting strategies, which, by Tsirelson's theorem, admit a tensor product model $\cH_A \otimes \cH_B$.
    However, our strategies $S^{(n)}$ and $\tilde{S}^{(n)}$ are inherently almost-commuting on a single Hilbert space $\cH$ (or $\tilde{\cH}$).
    In fact, one cannot still hope to rely on the original assumption via approximating these strategies by perfectly commuting strategies using quantitative Tsirelson's theorems~\cite{xu2025quantitative}.
    Indeed, they are necessarily ``far'' from any perfectly commuting (tensor product) strategy that achieves a similar high score, as such a strategy would be bounded by $\gamevalueqcopt{\cG^{(n)}} \leq 1/4$.
    
    Thus, the standard QHE correctness assumption is not directly applicable and one would need to formalize and justify a new assumption, perhaps ``correctness with auxiliary input for weakly commuting registers.''
    This new assumption would need to ensure that QHE applied to Alice's (compiled) operations does not unacceptably interfere with Bob's subsequent (compiled) operations, despite the lack of perfect commutation or strong no-signaling, while ensuring the compiled strategy remains efficient.

    \item \textit{Scaling of game parameters:} The games $\cG^{(n)}$ whose existence is implied by \cref{prop:ExistenceOfHardGamesForNPA} might have descriptions (e.g., number of questions or answers) that scale with $n$.
    For the overall protocol of the Bell game $\cG_{\comp} = (\cG^{(n(\lambda))}_{\comp})_{\lambda}$ to be efficient with respect to $\lambda$, the description of $\cG^{(n)}$ itself must also scale with $\poly(\lambda)$.
    If the complexity of defining $\cG^{(n)}$ grows too rapidly with $n$ (consequently with $\lambda$), this could render the compiled game impractical for a QPT verifier, even if the prover's strategy for that specific game instance could be implemented efficiently.
    This aspect depends on the concrete realization of games $\cG^{(n)}$ stemming from potential proof of $\mathrm{MIP}^{\mathrm{co}}=\mathrm{coRE}$ how $n$ is related to $\lambda$.
\end{enumerate}

Addressing these obstacles is a significant research challenge. Whether these (or related) high-scoring, almost-commuting strategies can be successfully compiled into QPT strategies $S^{(\lambda)}_\comp$ for a family of games like $(\cG^{(n(\lambda))})_\lambda$ while preserving their score advantage remains an important open question. A positive resolution would provide strong evidence for the necessity of game-specific NPA approximation errors $\epsilon(n)$ in quantitative soundness statements for compiled Bell games.

\section*{Acknowledgments}
We thank Matilde Baroni, Dominik Leichtle, Ivan \v{S}upi\'{c}, Thomas Vidick, and Michael Walter for the helpful discussions. In addition, Connor Paddock and Simon Schmidt want to thank Alexander Kulpe, Guilio Malavolta, and Michael Walter for discussions about compilation in the commuting operator framework and its relation to the conjecture $\mathrm{MIP}^{\mathrm{co}}=\mathrm{coRE}$.

MOR, LT and XX acknowledge funding by the ANR for the JCJC grant LINKS (ANR-23-CE47-0003) and T-ERC QNET (ANR-24-ERCS-0008), by INRIA and CIEDS in the Action Exploratoire project DEPARTURE. MOR, IK, LT and XX acknowledge support by the European Union's Horizon 2020 Research and Innovation Programme under QuantERA Grant Agreement no. 731473 and 101017733. 
IK was supported by the Slovenian Research and Innovation Agency program P1-0222 and grants J1-50002, N1-0217, J1-3004,
J1-50001, J1-60011, J1-60025.
Partially supported by the Fondation de l'École polytechnique as part of the Gaspard Monge Visiting Professor Program.
IK thanks École Polytechnique and Inria for hospitality during the preparation of this manuscript. 
SS acknowledges support by the Deutsche Forschungsgemeinschaft (DFG, German Research Foundation) under Germany's Excellence Strategy - EXC 2092 CASA - 39078197. CP acknowledges funding support from the Natural Sciences and Engineering Research Council of Canada (NSERC). YZ is supported by VILLUM FONDEN via QMATH Centre of Excellence grant number 10059 and Villum Young Investigator grant number 37532.

\section*{Note added}
At a late stage of preparing this manuscript, we became aware of independent related results by David Cui, Chirag Falor, Anand Natarajan, and Tina Zhang, which address similar questions.
In particular, they tackle the problem from the sum of squares perspective, which is dual to our NPA moment approach.
We are grateful to them for the coordination prior to submission.

\printbibliography

\end{document}